\providecommand{\tabularnewline}{\\}
  \theoremstyle{plain}
  \newtheorem{lem}{\protect\lemmaname}
\theoremstyle{plain}
\newtheorem{thm}{\protect\theoremname}
  \theoremstyle{plain}
  \newtheorem{cor}{\protect\corollaryname}
  \theoremstyle{plain}
  \newtheorem{prop}{\protect\propositionname}
  \providecommand{\lemmaname}{Lemma}
  \providecommand{\propositionname}{Proposition}
\providecommand{\corollaryname}{Corollary}
\providecommand{\theoremname}{Theorem}
\begin{document}

\title{Downlink Analysis for a Heterogeneous Cellular Network}

\author{{\normalsize Prasanna Madhusudhanan$^{*}$, Juan G. Restrepo$^{\dagger}$,
Youjian (Eugene) Liu$^{*}$, Timothy X Brown}$^{*+}${\normalsize \\$^{*}$
Department of Electrical, Computer and Energy Engineering, $^{\dagger}$
Department of Applied Mathematics, $^{+}$ Interdisciplinary Telecommunications
Program\\University of Colorado, Boulder, CO 80309-0425 USA\\\{mprasanna,
juanga, eugeneliu, timxb\}@colorado.edu}%
\thanks{A special cases of the results in this paper were presented in \cite{Madhusudhanan2011,Madhusudhanan2012,Madhusudhanan2012c}%
}}
\maketitle
\begin{abstract}
In this paper, a comprehensive study of the the downlink performance
in a heterogeneous cellular network (or hetnet) is conducted. A general
hetnet model is considered consisting of an arbitrary number of open-access
and closed-access tier of base stations (BSs) arranged according to
independent homogeneous Poisson point processes. The BSs of each tier
have a constant transmission power, random fading coefficient with
an arbitrary distribution and arbitrary path-loss exponent of the
power-law path-loss model. For such a system, analytical characterizations
for the coverage probability and average rate at an arbitrary mobile-station
(MS), and average per-tier load are derived for both the max-SINR
connectivity and nearest-BS connectivity models. Using stochastic
ordering, interesting properties and simplifications for the hetnet
downlink performance are derived by relating these two connectivity
models to the maximum instantaneous received power (MIRP) connectivity
model and the maximum biased received power (MBRP) connectivity models,
respectively, providing good insights about the hetnets and the downlink
performance in these complex networks. Furthermore, the results also
demonstrate the effectiveness and analytical tractability of the stochastic
geometric approach to study the hetnet performance.\end{abstract}
\begin{IEEEkeywords}
Multi-tier networks, Cellular Radio, Co-channel Interference, Fading
channels, Poisson point process, max-SINR connectivity, nearest-BS
connectivity. 
\end{IEEEkeywords}

\section{Introduction\label{sec:Introduction}}

\IEEEPARstart{T}{he} modern cellular communication network is an
overlay of multiple contributing subnetworks such as the macrocell,
microcell, picocell and femtocell networks, collectively called the
heterogeneous network (or, in short, \textit{hetnets}). The hetnets
have been shown to sustain greater end-user data-rates and throughput
as well as provide indoor and cell-edge coverage, further leading
to their inclusion as an important feature to be implemented under
the fourth-generation (4G) cellular standards \cite{Yeh2011,Damnjanovic2011,Damnjanovic2012,Report2011,Qualcomm2010,Chandrasekhar2008,Lagrange1997}.

Until recently, the analysis of such networks has been done solely
through system simulations. Since the hetnets consist of a combination
of regularly spaced macrocell base-stations (BSs) along with irregularly
spaced microcell and picocell BSs and often randomly placed end-user
deployed femtocell BSs, it is difficult to study the entire network
at once using simulations. Further, the BSs in each of these networks
have different transmission powers, traffic-load carrying capabilities
and different radio environment that is based on the locations in
which they are deployed. The many parameters involved in the design
and modeling of the individual networks makes it difficult to narrow
all the possibilities down to a limited set of simulation scenarios
based on which one can make design decisions for the entire network.
Under these circumstances, the development of an analytical model
that captures all the design scenarios of interest is of great importance.

Towards this goal, a stochastic geometric model has been identified
as a plausible analytical model as well as the most widely used one
in academia. There is a rich set of results for studying the behavior
of large systems with nodes deployed randomly (especially according
to a homogeneous Poisson point process on the plane) and can be found
in \cite{Haenggi2008,Baccelli2009,Baccelli2009a,Nguyen2010}. For
the cellular network, a strong motivation for viewing the BS arrangement
as a homogeneous Poisson point process can be drawn from the study
of the cellular systems in \cite{Brown1998,Brown2000,Andrews2011}
which suggests that significant insights can be gained by bounding
the downlink cellular performance between the ideal hexagonal grid
model and the homogeneous Poisson point process based model. More
interestingly, in \cite[Fig.2.]{Brown2000}, it is claimed with the
help of Monte-Carlo simulations that in the limit of strong log-normal
shadow fading (standard deviation of the fading coefficient $\sigma\rightarrow\infty$),
the downlink performance of an ideal hexagonal cellular system approaches
the performance in a cellular system with randomly deployed base-stations
according to a homogeneous Poisson point process. Recently, the above
convergence has been analytically proved in \cite[Theorem 3]{Blaszczyszyn}.
It is shown that the downlink performance of a cellular network with
any deterministic arrangement of BSs (not just the ideal hexagonal
grid model) converges to that of a Poisson point process based model
as $\sigma\rightarrow\infty$, and moreover even for realistic values
of $\sigma$ that are observed in the indoor environments, the latter
model is a good approximation for the deterministic model. Results
in \cite{Madh0000:Carrier,Madhusudhanan2010a,Madhusudhanan2012a}
demonstrate that, with the Poisson point process based BS arrangement,
the study of the cellular system has the distinctive advantage of
being analytically tractable, unlike the studies based on the hexagonal
grid model that are purely simulation-based. 

In light of the above motivations, it is well-justified to study the
hetnet performance by viewing the hetnet as composed of multiple tiers
of networks (e.g. macrocell, microcell, picocell and femtocell networks),
each modeled as an independent homogeneous Poisson point process,
and such studies have been done in \cite{Mukherjee2012a,Mukherjee2012,Mukherjee2011,Dhillon2012a,Dhillon2011b,Dhillon2011a,Dhillon2012,Jo2011}
and by us in \cite{Madhusudhanan2011,Madhusudhanan2012,Madhusudhanan2012c}.
These studies mathematically characterize important performance metrics
such as coverage probability (1 - outage probability), average ergodic
rate, average load carried by BSs of each tier and load-awareness.
Furthermore, such studies have facilitated the characterization of
the improvements that techniques such as fractional frequency reuse
and carrier aggregation bring to cellular performance as well as hetnet
performance. In the following subsection, we differentiate our work
from the other prior work on hetnets and list the contributions of
this paper.

\subsection*{Contributions of the paper}

Here, the hetnet is modeled to consist of open and closed access networks
formed by the arrangement of BSs according to homogeneous Poisson
point process with a certain density for each tier, and independent
of the other tiers. The focus is on the downlink performance analysis;
the MS has access to only the open-access tiers and connects to one
of the BSs in these tiers. The closed access tiers only cause interference
at the MS. Hence, we study the downlink performance where the hetnet
consists of an arbitrary number of open and closed access tiers. Signals
from BSs of a given tier have a constant transmit power, random fading
coefficient that is i.i.d. across all the BSs of the same tier and
independent of those of the other tiers with any arbitrary distribution,
arbitrary path-loss exponent that is constant for all BSs of the same
tier and different across different tiers, and the signal-to-interference-plus-noise-ratio
(SINR) threshold for connectivity to a given $k^{\mathrm{th}}$ open-access
tier's BS is $\beta_{k},\ k=1,\cdots,K$. For such a general setting,
expressions for the coverage probability at the MS are derived for
both the max-SINR connectivity model and the nearest-BS connectivity
model. In the former connectivity model, the MS is said to be in coverage
if there exists at least one open-access BS with an SINR above the
corresponding threshold, and under the latter connectivity model,
the MS is said to be in coverage if at least one among the nearest
BSs of each open-access tier has an SINR above the corresponding threshold. 

The results shown here are generalizations of the existing results
in \cite{Dhillon2011a,Jo2011,Mukherjee2012,Madhusudhanan2012c,Madhusudhanan2012}.
In \cite{Dhillon2011b,Dhillon2012a}, the coverage probability results
are obtained for the hetnets under the max-SINR connectivity, but
for the case where the fading coefficients for the BS transmissions
are independent and identically distributed (i.i.d.) exponential random
variables, and the path-loss exponents are the same for all the tiers.
Using an entirely different approach, \cite{Mukherjee2011,Mukherjee2012,Mukherjee2012a}
derives the coverage probability for the hetnet with max-SINR and
nearest BS connectivity models, but were again restricted to the i.i.d.
exponential distribution case for fading. In \cite{Jo2011}, the authors
study the hetnet coverage probability for the maximum biased received
power (MBRP) connectivity model (which is a special case of the nearest-BS
connectivity model, as will be seen later), and again for the exponential
fading assumption for all BS transmissions. In \cite{Madhusudhanan2011,Madhusudhanan2012,Madhusudhanan2012c},
we derived the hetnet coverage probability for the case when the i.i.d.
fading coefficients have an arbitrary distribution and the path-loss
exponents are different for different tiers, for the maximum instantaneous
received power (MIRP) connectivity model, which is a special case
for the max-SINR connectivity model, as will be discussed later. Here,
we derive the coverage probabilities for the general connectivity
models (max-SINR and nearest-BS) for the general system settings mentioned
above. 

When the SINR thresholds of all the tiers are above 1, the hetnet
coverage probability under max-SINR connectivity and MIRP connectivity
are identical, and nearest-BS connectivity and the MBRP connectivity
are identical. Further, in these special cases, simple analytical
expression are derived for the coverage probability, average rate
and the load carried by the BSs of each tier. The following section
describes the system model in detail.

\section{System Model\label{sec:System-Model}}

This section describes the various elements used to model the wireless
network, namely, the BS layout, the radio environment, and the performance
metrics of interest.

\subsubsection{BS Layout\label{sub:BS-Layout}}

The hetnet is composed of $K$ open-access tiers and $L$ closed-access
tier, and the BS layout in each tier is according to an independent
homogeneous Poisson point process in $\mathbb{R}^{2}$ with density
$\lambda_{ok}$, $\lambda_{cl}$ for the $k^{\mathrm{th}}$ open-access
tier and $l^{\mathrm{th}}$ closed-access tier, respectively, where
$k=1,\dots,\ K$ and $l=1,\dots,\ L$. The MS is allowed to communicate
with any BS of the open-access tiers, but cannot communicate with
any of the closed-access BSs.

\subsubsection{Radio Environment and downlink SINR\label{sub:Radio-Environment}}

The signal transmitted from each BS undergoes shadow fading and path-loss.
The SINR at an arbitrary MS in the system from the $i^{\mathrm{th}}$
BS of the $k^{\mathrm{th}}$ open-access tier is the ratio of the
received power from this BS to the sum of the interferences from all
the other BSs in the system and the constant background noise $\eta$,
and is expressed as 
\begin{eqnarray}
\mathrm{SINR}_{ki} & = & \frac{P_{ok}\Psi_{ki}R_{ki}^{-\varepsilon_{k}}}{I_{o}-P_{ok}\Psi_{ki}R_{ki}^{-\varepsilon_{k}}+I_{c}+\eta},\label{eq:SINRDefinition}
\end{eqnarray}
where $I_{o}=\sum_{m=1}^{K}\sum_{n=1}^{\infty}P_{om}\Psi_{mn}R_{mn}^{-\varepsilon_{m}}$
is the sum of the received powers from all the open-access tier BSs,
$\left\{ P_{om},\Psi_{mn},\varepsilon_{m},R_{mn}\right\} _{m=1,\ n=1}^{m=K,\ n=\infty}$
are the constant transmit power, random shadow fading factor, constant
path-loss exponent and the distance from the MS of the $n^{\mathrm{th}}$
BS of the $m^{\mathrm{th}}$ open-access tier. Similarly, $I_{c}=\sum_{l=1}^{L}\sum_{n=1}^{\infty}P_{cl}\Psi_{cln}R_{cln}^{-\varepsilon_{cl}}$
is the sum of the received powers from all the closed-access tier
BSs, $\left\{ P_{cl},\Psi_{cln},\varepsilon_{cl},R_{cln}\right\} _{l=1,\ n=1}^{l=L,\ n=\infty}$
lists the constant transmit power, random shadow fading factor, and
the constant path-loss exponent of the $n^{\mathrm{th}}$ BS of the
$l^{\mathrm{th}}$ closed-access tier. The fading coefficients $\left\{ \Psi_{mn}\right\} _{n=1}^{\infty}$
$\left(\left\{ \Psi_{cln}\right\} _{n=1}^{\infty}\right)$ are i.i.d.
random variables with the same distribution as $\Psi_{m}$ $\left(\Psi_{cl}\right)$,
$m=1,\dots,\ K$ $\left(l=1,\dots,\ L\right)$. Further, following
\cite{Madhusudhanan2010a}, it is assumed that $\left\{ \mathbb{E}\left[\Psi_{m}^{\frac{2}{\varepsilon_{m}}}\right]\right\} _{m=1}^{K},\ \left\{ \mathbb{E}\left[\Psi_{cl}^{\frac{2}{\varepsilon_{cl}}}\right]\right\} _{l=1}^{L}<\infty$.
Finally, $R_{mn}$ $\left(R_{cln}\right)$ is the distance of the
$n^{\mathrm{th}}$ nearest BS belonging to the $m^{\mathrm{th}}$
open-access ( $l^{\mathrm{th}}$ closed-access) tier, and $\left\{ R_{mn}\right\} _{n=1}^{\infty},\ \left\{ R_{cln}\right\} _{n=1}^{\infty}$
represents the distance from origin of the sets of points distributed
according to the homogeneous Poisson point processes described in
Section \ref{sub:BS-Layout}. The various symbols introduced in this
section are listed in Table \ref{tab:symbolList} for quick reference.
\begin{table*}
\begin{centering}
\begin{tabular}{|c|c|}
\hline 
Symbol & Description\tabularnewline
\hline 
$K,\ L$ & Number of open-access and closed-access tiers, respectively.\tabularnewline
\hline 
$\left\{ \lambda_{ok}\right\} _{k=1}^{K},\ \left\{ \lambda_{cl}\right\} _{l=1}^{L}$ & BS densities of open-access and closed-access tiers, respectively.\tabularnewline
\hline 
$\left\{ P_{ok}\right\} _{k=1}^{K},\ \left\{ P_{cl}\right\} _{l=1}^{L}$ & Constant transmission powers of the BSs \tabularnewline
 & of the K open-access tiers and closed access tier, respectively\tabularnewline
\hline 
$\left\{ \varepsilon_{k}\right\} _{k=1}^{K},\ \left\{ \varepsilon_{cl}\right\} _{l=1}^{L}$ & Path-loss exponents of the open and closed - access tiers ( > 2). \tabularnewline
\hline 
$\left\{ \Psi_{k}\right\} _{k=1}^{K},\left\{ \Psi_{cl}\right\} _{l=1}^{L}$ & i.i.d. fading gains of the open and closed-access tiers $\left(\mathbb{E}\Psi_{k}^{\frac{2}{\varepsilon_{l}}},\ \mathbb{E}\Psi_{cl}^{\frac{2}{\varepsilon_{cl}}}<\infty\right)$\tabularnewline
\hline 
$\left\{ \beta_{k}\right\} _{k=1}^{K}$ & SINR thresholds for connectivity to a BS in the $k^{\mathrm{th}}$
open-access tier\tabularnewline
\hline 
$\eta$ & Background noise power\tabularnewline
\hline 
$\left\{ \gamma_{k}\right\} _{k=1}^{K}$ & =$\left\{ 1+\frac{1}{\beta_{k}}\right\} _{k=1}^{K}$\tabularnewline
\hline 
\end{tabular}
\par\end{centering}

\caption{\label{tab:symbolList}List of symbols used in the paper}
\vspace{-0.2in}
\end{table*}

\subsubsection{BS connectivity models\label{sub:Cell-Association-policy}}

A MS is able to communicate with a BS of the $k^{\mathrm{th}}$ open-access
tier if the corresponding SINR is above a certain threshold $\beta_{k},\ k=1,\cdots,\ K$.
In this case, the MS is said to be in coverage. The BS connectivity
models provide a rule to determine which BS to connect to, and in
this paper, we focus on the max-SINR connectivity model and the nearest-BS
connectivity model. The MIRP connectivity model and the MBRP connectivity
model are special cases of the max-SINR and nearest-BS connectivity
models, respectively, and will be discussed in detail in the later
sections.

Under the max-SINR connectivity model, the MS is said to be in coverage
if there exists at-least one BS among all the open-access tiers with
an SINR at the MS above the corresponding threshold, and is mathematically
expressed as follows. 
\begin{eqnarray}
 &  & \mathbb{P}_{\mathrm{coverage}}^{\mathrm{max-SINR}}=\mathbb{P}\left(\bigcup_{k=1}^{K}\bigcup_{i=1}^{\infty}\left\{ \mathrm{SINR}_{ki}>\beta_{k}\right\} \right)\nonumber \\
 &  & =\mathbb{P}\left(\bigcup_{k=1}^{K}\left\{ \mathrm{SINR}_{k}\left(\max\right)>\beta_{k}\right\} \right),\label{eq:maxSINRcoverageDef}
\end{eqnarray}
where $\mathrm{SINR}_{ki}$ corresponds to the $i^{\mathrm{th}}$
BS of the $k^{\mathrm{th}}$ tier as defined in (\ref{eq:SINRDefinition})
and $\mathrm{SINR}_{k}\left(\max\right)$ is the maximum SINR at the
MS among all the $k^{\mathrm{th}}$ open-access tier BSs.

The MS is said to be in coverage under the nearest-BS connectivity
model if there exists at least one of the nearest BSs of the $K$
open-access tiers with SINR at the MS above the corresponding threshold.
This is mathematically expressed as 
\begin{eqnarray}
\mathbb{P}_{\mathrm{coverage}}^{\mathrm{nearest}} & = & \mathbb{P}\left(\bigcup_{k=1}^{K}\left\{ \mathrm{SINR}_{k1}>\beta_{k}\right\} \right),\label{eq:nearestBScoverageDef}
\end{eqnarray}
where $\mathrm{SINR}_{k1}$ (see (\ref{eq:SINRDefinition})) is the
SINR at the MS from the nearest BS among the $k^{\mathrm{th}}$ tier
BSs. In the following section, we derive expressions for the hetnet
coverage probability for the above mentioned connectivity models.

\section{\label{sec:hetnetCoverage}Hetnet Coverage Probability}

In \cite{Nguyen2010}, a technique to compute the downlink coverage
probability under max-SINR connectivity for a single-tier network
was shown. In \cite{Dhillon2011b}, this technique is used to compute
the hetnet coverage probability for an open-access case where the
fading coefficients for all the BSs in the system are i.i.d. unit
mean exponential random variables and the path-loss exponents are
the same for all tiers. Here, we generalize the technique developed
in \cite{Nguyen2010} to compute the hetnet coverage probability for
both the max-SINR and nearest-BS connectivity models for a general
system model explained in Section \ref{sec:System-Model}. 

The coverage probability expressions in (\ref{eq:maxSINRcoverageDef})
and (\ref{eq:nearestBScoverageDef}) can be equivalently expressed
as follows: 
\begin{eqnarray}
 &  & \mathbb{P}_{\mathrm{coverage}}^{\mathrm{max-SINR}}=\mathbb{P}\left(\bigcup_{k=1}^{K}\left\{ \frac{M_{k}}{I_{o}+I_{c}+\eta-M_{k}}>\beta_{k}\right\} \right)\nonumber \\
 &  & =\mathbb{P}\left(\left\{ \underset{k=1,\cdots,\ K}{\max}\gamma_{k}M_{k}>I_{o}+I_{c}+\eta\right\} \right),\label{eq:maxSINRcoverageEqexpr}\\
 &  & \mathbb{P}_{\mathrm{coverage}}^{\mathrm{nearest}}=\mathbb{P}\left(\bigcup_{k=1}^{K}\left\{ \frac{N_{k}}{I_{o}+I_{c}+\eta-N_{k}}>\beta_{k}\right\} \right)\nonumber \\
 &  & =\mathbb{P}\left(\left\{ \underset{k=1,\cdots,\ K}{\max}\gamma_{k}N_{k}>I_{o}+I_{c}+\eta\right\} \right),\label{eq:nearestBScoverageEqexpr}
\end{eqnarray}
where $M_{k}=\underset{n=1,\cdots,\ \infty}{\max}P_{ok}\Psi_{okl}R_{kl}^{-\varepsilon_{k}}$
is the maximum of the received powers from all the $k^{\mathrm{th}}$
tier BSs, $N_{k}=P_{k}\Psi_{k1}R_{k1}^{-\varepsilon_{k}}$ is the
received power from the nearest BS among all the $k^{\mathrm{th}}$
tier BSs, $I_{o}$ $\left(I_{c}\right)$ is the sum of the received
powers from all the open-access BSs (closed-access BSs) in the system,
and are defined in (\ref{eq:SINRDefinition}). We begin with computing
the Laplace transform of the interference from the closed-access tiers,
$I_{c}$, $\mathcal{L}_{I_{c}}\left(s\right)=\mathbb{E}\left[\mathrm{e}^{-sI_{c}}\right]$. 
\begin{lem}
\label{lem:LTIcExpression}The Laplace transform of the interference
from the closed-access tiers is 
\begin{eqnarray}
\mathcal{L}_{I_{c}}\left(s\right) & = & \mathrm{e}^{-\sum_{l=1}^{L}\lambda_{cl}\pi\left(sP_{cl}\right)^{\frac{2}{\varepsilon_{cl}}}\mathbb{E}\left[\Psi_{cl}^{\frac{2}{\varepsilon_{cl}}}\right]\Gamma\left(1-\frac{2}{\varepsilon_{cl}}\right)}.\label{eq:LTIcExpression}
\end{eqnarray}
\end{lem}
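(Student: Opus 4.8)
The plan is to recognize each closed-access tier's interference as a Poisson shot-noise functional and to exploit independence across tiers. First I would write, using the independence of the $L$ closed-access point processes, $\mathcal{L}_{I_{c}}\left(s\right)=\prod_{l=1}^{L}\mathbb{E}\left[\mathrm{e}^{-sI_{cl}}\right]$, where $I_{cl}=\sum_{n=1}^{\infty}P_{cl}\Psi_{cln}R_{cln}^{-\varepsilon_{cl}}$, so that it suffices to compute the Laplace transform of a single tier's interference. For a fixed tier $l$, I would then apply the probability generating functional of the homogeneous Poisson point process (equivalently, Campbell's theorem for the process marked by the i.i.d. fading coefficients $\Psi_{cl}$; see, e.g., \cite{Haenggi2008}), obtaining
\[
\mathbb{E}\left[\mathrm{e}^{-sI_{cl}}\right]=\exp\left(-\lambda_{cl}\int_{\mathbb{R}^{2}}\mathbb{E}_{\Psi_{cl}}\left[1-\mathrm{e}^{-sP_{cl}\Psi_{cl}|x|^{-\varepsilon_{cl}}}\right]\mathrm{d}x\right).
\]

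Next I would pass to polar coordinates, turning $\int_{\mathbb{R}^{2}}(\cdot)\,\mathrm{d}x$ into $2\pi\int_{0}^{\infty}r\,(\cdot)\,\mathrm{d}r$, and, the integrand being nonnegative, invoke Tonelli's theorem to move the expectation over $\Psi_{cl}$ outside the radial integral. The heart of the computation is then, for a fixed fading value $\psi$, the integral $\int_{0}^{\infty}r\left(1-\mathrm{e}^{-sP_{cl}\psi r^{-\varepsilon_{cl}}}\right)\mathrm{d}r$. I would make the substitution $u=sP_{cl}\psi\,r^{-\varepsilon_{cl}}$, which turns this into $\frac{1}{\varepsilon_{cl}}\left(sP_{cl}\psi\right)^{2/\varepsilon_{cl}}\int_{0}^{\infty}\left(1-\mathrm{e}^{-u}\right)u^{-1-2/\varepsilon_{cl}}\,\mathrm{d}u$, and then integrate by parts once: because $0<\frac{2}{\varepsilon_{cl}}<1$ (by the standing assumption $\varepsilon_{cl}>2$), the boundary terms vanish at both ends and the remaining integral equals $\frac{\varepsilon_{cl}}{2}\Gamma\left(1-\frac{2}{\varepsilon_{cl}}\right)$. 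Substituting back, multiplying by $2\pi\lambda_{cl}$, and taking the expectation over $\Psi_{cl}$ (finite by the assumption $\mathbb{E}[\Psi_{cl}^{2/\varepsilon_{cl}}]<\infty$ imported from \cite{Madhusudhanan2010a}) gives the single-tier exponent $-\pi\lambda_{cl}\left(sP_{cl}\right)^{2/\varepsilon_{cl}}\mathbb{E}[\Psi_{cl}^{2/\varepsilon_{cl}}]\Gamma\left(1-\frac{2}{\varepsilon_{cl}}\right)$; taking the product over $l=1,\dots,L$ collapses the product of exponentials into a single exponential carrying the sum in the exponent, which is precisely (\ref{eq:LTIcExpression}).

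The step I expect to be the main obstacle is not any one calculation but making the limit interchanges rigorous: checking that the shot-noise sum $I_{cl}$ is almost surely finite so that the PGFL applies, that Tonelli legitimately permits swapping $\mathbb{E}_{\Psi_{cl}}$ with the radial integral, and that the boundary terms in the integration by parts genuinely vanish. All three rest on the same two ingredients --- the path-loss exponents exceeding $2$ and the finiteness of the fractional fading moments --- so once $\frac{2}{\varepsilon_{cl}}\in(0,1)$ and $\mathbb{E}[\Psi_{cl}^{2/\varepsilon_{cl}}]<\infty$ are in hand, what remains is bookkeeping.
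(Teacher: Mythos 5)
Your proposal is correct and follows essentially the same route as the paper: factor the Laplace transform across tiers by independence, apply Campbell's theorem (the PGFL of the marked Poisson process) to each tier, and evaluate the resulting radial integral. The only difference is that you carry out the integral explicitly (substitution plus integration by parts, with the correct checks that $2/\varepsilon_{cl}\in(0,1)$ makes the boundary terms vanish), whereas the paper leaves that last evaluation implicit.
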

\begin{proof}
The proof for (\ref{eq:LTIcExpression}) is as follows. $\mathcal{L}_{I_{c}}\left(s\right)=\mathbb{E}\left[\exp\left(-s\sum_{l=1}^{L}\sum_{n=1}^{\infty}P_{cl}\Psi_{cln}R_{cln}^{-\varepsilon_{cl}}\right)\right]\overset{\left(a\right)}{=}\prod_{l=1}^{L}\mathbb{E}\left[\exp\left(-s\sum_{n=1}^{\infty}P_{cl}\Psi_{cln}R_{cln}^{-\varepsilon_{cl}}\right)\right]\overset{\left(b\right)}{=}\prod_{l=1}^{L}\exp\left(-\lambda_{cl}\mathbb{E}_{\Psi_{cl}}\left[\int_{r=0}^{\infty}\left(1-\mathrm{e}^{-sP_{cl}\Psi_{cl}r^{-\varepsilon_{cl}}}\right)2\pi rdr\right]\right),$
where $\left(a\right)$ is obtained because the BS arrangement for
the $L$ closed-access tiers and the corresponding transmission and
fading parameters are independent of each other, and $\left(b\right)$
evaluates the expectation in $\left(a\right)$ using the Campbell's
theorem of Poisson point process \cite[Page 28]{Kingman1993}, and
(\ref{eq:LTIcExpression}) is obtained by evaluating the integral
in $\left(b\right)$.
\end{proof}
Next, we derive expressions for two Laplace transforms that are useful
to obtain semi-analytical expressions for $\mathbb{P}_{\mathrm{coverage}}^{\mathrm{max-SINR}}$
and $\mathbb{P}_{\mathrm{coverage}}^{\mathrm{nearest}}$, respectively.
\begin{lem}
\label{lem:LTmaxSINRnearestBS}
\begin{eqnarray}
 &  & \mathcal{L}_{I_{o}+I_{c}+\eta,\ \underset{k=1,\cdots,K}{\max}\gamma_{k}M_{k}\le u}\left(s\right)\nonumber \\
 &  & \triangleq\mathbb{E}\left[\mathrm{e}^{-s\left(I_{o}+I_{c}+\eta\right)}\mathcal{I}\left(\underset{k=1,\cdots,K}{\max}\gamma_{k}M_{k}\le u\right)\right]\nonumber \\
 &  & =\mathcal{L}_{I_{c}}\left(s\right)\exp\left(-s\eta-\sum_{k=1}^{K}\lambda_{ok}\pi\left(sP_{ok}\right)^{\frac{2}{\varepsilon_{k}}}\mathbb{E}\left[\Psi_{k}^{\frac{2}{\varepsilon_{k}}}\right]\times\right.\nonumber \\
 &  & \left.\left[\Gamma\left(1-\frac{2}{\varepsilon_{k}}\right)+\frac{2}{\varepsilon_{k}}\Gamma\left(-\frac{2}{\varepsilon_{k}},\frac{su}{\gamma_{k}}\right)\right]\right),\label{eq:maxSINRLTexp}\\
 &  & \mathcal{L}_{I_{o}+I_{c}+\eta,\ \underset{k=1,\cdots,K}{\max}\gamma_{k}N_{k}\le u}\left(s\right)\nonumber \\
 &  & \triangleq\mathbb{E}\left[\exp\left(-s\left(I_{o}+I_{c}+\eta\right)\right)\times\mathcal{I}\left(\underset{k=1,\cdots,K}{\max}\gamma_{k}N_{k}\le u\right)\right]\nonumber \\
 &  & =\mathcal{L}_{I_{c}}\left(s\right)\mathrm{e}^{-s\eta-\sum_{k=1}^{K}\lambda_{ok}\pi\left(sP_{ok}\right)^{\frac{2}{\varepsilon_{k}}}\mathbb{E}\left[\Psi_{k}^{\frac{2}{\varepsilon_{k}}}\right]\Gamma\left(1-\frac{2}{\varepsilon_{k}}\right)}\times\nonumber \\
 &  & \prod_{k=1}^{K}\mathbb{E}_{\Psi_{k1}}\left[\int_{x=0}^{\frac{su}{\gamma_{k}\Psi_{k1}}}\lambda_{ok}\frac{2\pi}{\varepsilon_{k}}\left(sP_{ok}\right)^{\frac{2}{\varepsilon_{k}}}x^{-\frac{2}{\varepsilon_{k}}-1}\times\right.\nonumber \\
 &  & \left.\mathrm{e}^{-\Psi_{k1}x-\lambda_{ok}\frac{2\pi}{\varepsilon_{k}}\left(sP_{ok}\right)^{\frac{2}{\varepsilon_{k}}}\mathbb{E}_{\Psi_{k}}\left[\Psi_{k}^{\frac{2}{\varepsilon_{k}}}\Gamma\left(-\frac{2}{\varepsilon_{k}},x\Psi_{k}\right)\right]}dx\right],\label{eq:nearestBSLTexp}
\end{eqnarray}
where $\mathcal{L}_{I_{c}}\left(s\right)$ is from Lemma \ref{lem:LTIcExpression}
and the random variables $\Psi_{k1}$ and $\Psi_{k}$ are i.i.d. for
all $k=1,\cdots,K$.\end{lem}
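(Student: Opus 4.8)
The plan is to prove both identities together, peeling off the pieces common to the two statements. Since the $L$ closed-access Poisson processes and their fading marks are independent of the $K$ open-access processes and $\eta$ is a constant, each left-hand side factors as $\mathcal{L}_{I_{c}}\left(s\right)\mathrm{e}^{-s\eta}$ times $\mathbb{E}\left[\mathrm{e}^{-sI_{o}}\mathcal{I}\left(\cdot\right)\right]$, with $\mathcal{L}_{I_{c}}\left(s\right)$ supplied by Lemma~\ref{lem:LTIcExpression}. It therefore remains to evaluate $\mathbb{E}\left[\mathrm{e}^{-sI_{o}}\mathcal{I}\left(\max_{k}\gamma_{k}M_{k}\le u\right)\right]$ and $\mathbb{E}\left[\mathrm{e}^{-sI_{o}}\mathcal{I}\left(\max_{k}\gamma_{k}N_{k}\le u\right)\right]$, and in both cases independence of the $K$ open-access tiers reduces the task to a product over $k$ of a single-tier expectation.

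For (\ref{eq:maxSINRLTexp}): because $M_{k}=\max_{n}P_{ok}\Psi_{kn}R_{kn}^{-\varepsilon_{k}}$, the event $\left\{ \max_{k}\gamma_{k}M_{k}\le u\right\} $ is the intersection, over all tiers and all BSs, of $\left\{ P_{ok}\Psi_{kn}R_{kn}^{-\varepsilon_{k}}\le u/\gamma_{k}\right\} $, while $\mathrm{e}^{-sI_{o}}=\prod_{k,n}\mathrm{e}^{-sP_{ok}\Psi_{kn}R_{kn}^{-\varepsilon_{k}}}$, so the integrand is, per tier, a product over the points of that tier's PPP of $f_{k}\left(r,\psi\right)=\mathrm{e}^{-sP_{ok}\psi r^{-\varepsilon_{k}}}\mathcal{I}\left(P_{ok}\psi r^{-\varepsilon_{k}}\le u/\gamma_{k}\right)$. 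By the probability generating functional of the independently marked PPP (the same argument as in the proof of Lemma~\ref{lem:LTIcExpression}), this gives for that tier $\exp\left(-\lambda_{ok}\mathbb{E}_{\Psi_{k}}\int_{0}^{\infty}\left(1-f_{k}\left(r,\Psi_{k}\right)\right)2\pi r\,dr\right)$. The indicator makes the integrand equal $1$ for $r<r_{0}\left(\psi\right)=\left(P_{ok}\psi\gamma_{k}/u\right)^{1/\varepsilon_{k}}$ and equal $1-\mathrm{e}^{-sP_{ok}\psi r^{-\varepsilon_{k}}}$ for $r\ge r_{0}\left(\psi\right)$; adding and subtracting the unconstrained integral $\int_{0}^{\infty}\left(1-\mathrm{e}^{-sP_{ok}\psi r^{-\varepsilon_{k}}}\right)2\pi r\,dr$ leaves a correction $\int_{0}^{r_{0}\left(\psi\right)}\mathrm{e}^{-sP_{ok}\psi r^{-\varepsilon_{k}}}2\pi r\,dr$, and the substitution $t=sP_{ok}\psi r^{-\varepsilon_{k}}$ turns the two pieces into $\pi\left(sP_{ok}\psi\right)^{2/\varepsilon_{k}}\Gamma\left(1-2/\varepsilon_{k}\right)$ and $\frac{2\pi}{\varepsilon_{k}}\left(sP_{ok}\psi\right)^{2/\varepsilon_{k}}\Gamma\left(-2/\varepsilon_{k},su/\gamma_{k}\right)$ respectively (using $sP_{ok}\psi\,r_{0}\left(\psi\right)^{-\varepsilon_{k}}=su/\gamma_{k}$). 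Taking $\mathbb{E}_{\Psi_{k}}$ (finite by the moment assumption) and multiplying over $k$ yields (\ref{eq:maxSINRLTexp}).

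For (\ref{eq:nearestBSLTexp}) the same tier-wise factorization applies, but within a tier $\left\{ \gamma_{k}N_{k}\le u\right\} $ constrains only the nearest BS while $I_{o}$ still sums over all of them, so I would condition on the distance $R_{k1}=r_{1}$ and mark $\Psi_{k1}$ of the nearest $k$-th tier BS. Using that $R_{k1}$ has density $\lambda_{ok}2\pi r_{1}\mathrm{e}^{-\lambda_{ok}\pi r_{1}^{2}}$ and that, given $R_{k1}=r_{1}$, the remaining $k$-th tier BSs form a PPP of intensity $\lambda_{ok}$ on $\left\{ r>r_{1}\right\} $, the single-tier expectation becomes $\mathbb{E}_{\Psi_{k1}}\int_{0}^{\infty}\lambda_{ok}2\pi r_{1}\mathrm{e}^{-\lambda_{ok}\pi r_{1}^{2}}\mathcal{I}\left(P_{ok}\Psi_{k1}r_{1}^{-\varepsilon_{k}}\le u/\gamma_{k}\right)\mathrm{e}^{-sP_{ok}\Psi_{k1}r_{1}^{-\varepsilon_{k}}}\exp\left(-\lambda_{ok}\mathbb{E}_{\Psi_{k}}\int_{r_{1}}^{\infty}\left(1-\mathrm{e}^{-sP_{ok}\Psi_{k}r^{-\varepsilon_{k}}}\right)2\pi r\,dr\right)dr_{1}$. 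Evaluating the inner tail integral by the same substitution gives $\pi\left(sP_{ok}\Psi_{k}\right)^{2/\varepsilon_{k}}\Gamma\left(1-2/\varepsilon_{k}\right)-\pi r_{1}^{2}+\frac{2\pi}{\varepsilon_{k}}\left(sP_{ok}\Psi_{k}\right)^{2/\varepsilon_{k}}\Gamma\left(-2/\varepsilon_{k},sP_{ok}\Psi_{k}r_{1}^{-\varepsilon_{k}}\right)$; the $r_{1}$-independent $\Gamma\left(1-2/\varepsilon_{k}\right)$ term pulls out as the prefactor $\mathrm{e}^{-\sum_{k}\lambda_{ok}\pi\left(sP_{ok}\right)^{2/\varepsilon_{k}}\mathbb{E}\left[\Psi_{k}^{2/\varepsilon_{k}}\right]\Gamma\left(1-2/\varepsilon_{k}\right)}$, and the $+\lambda_{ok}\pi r_{1}^{2}$ it produces cancels exactly the $\mathrm{e}^{-\lambda_{ok}\pi r_{1}^{2}}$ of the nearest-neighbour density. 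Finally the change of variable $x=sP_{ok}r_{1}^{-\varepsilon_{k}}$, under which $2\pi r_{1}\,dr_{1}=\frac{2\pi}{\varepsilon_{k}}\left(sP_{ok}\right)^{2/\varepsilon_{k}}x^{-2/\varepsilon_{k}-1}dx$, $sP_{ok}\Psi_{k1}r_{1}^{-\varepsilon_{k}}=\Psi_{k1}x$, and the indicator becomes $x\le su/\left(\gamma_{k}\Psi_{k1}\right)$, converts the $dr_{1}$ integral into the displayed $dx$ integral, producing (\ref{eq:nearestBSLTexp}).

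The step demanding the most care is the nearest-BS computation: one must invoke the conditional description of the Poisson point process outside its nearest point correctly, and then keep track of the $\pi r_{1}^{2}$ cancellation and the $r_{1}\mapsto x$ substitution so that the incomplete-gamma terms land in exactly the claimed positions. The max-SINR identity, by contrast, is a routine probability-generating-functional evaluation followed by a single substitution -- essentially the same integral manipulation already used in the proof of Lemma~\ref{lem:LTIcExpression}.
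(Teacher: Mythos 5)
Your proposal is correct and follows essentially the same route as the paper's own proof in Appendix A: factor out $\mathcal{L}_{I_{c}}\left(s\right)\mathrm{e}^{-s\eta}$ by independence, use tier-wise independence to reduce to a product, apply the probability generating functional (Campbell's theorem) of each marked PPP — for the nearest-BS case after conditioning on $\left(R_{k1},\Psi_{k1}\right)$ and applying it to the points beyond $R_{k1}$ — and finish with the substitution $t=sP_{ok}\Psi_{k}r^{-\varepsilon_{k}}$ (respectively $x=sP_{ok}r_{1}^{-\varepsilon_{k}}$) to produce the Gamma and incomplete-Gamma terms. The only difference is cosmetic: you split the conditional tail integral into its $\Gamma\left(1-\frac{2}{\varepsilon_{k}}\right)$, $\pi r_{1}^{2}$ and $\Gamma\left(-\frac{2}{\varepsilon_{k}},\cdot\right)$ pieces one step earlier than the paper does, which makes the cancellation with the nearest-neighbour density explicit.
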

\begin{proof}
See Appendix \ref{sub:proofLTmaxSINRnearestBS}.
\end{proof}

The significance of Lemmas \ref{lem:LTIcExpression} and \ref{lem:LTmaxSINRnearestBS}
are as follows. Notice from (\ref{eq:maxSINRcoverageEqexpr}) and
(\ref{eq:nearestBScoverageEqexpr}) that the hetnet coverage probability
can be obtained if the joint probability density function (p.d.f.)
of $\left(I_{o}+I_{c}+\eta,\ \underset{i=1,\cdots,K}{\max}\gamma_{i}M_{i}\right)$
and $\left(I_{o}+I_{c}+\eta,\ \underset{i=1,\cdots,K}{\max}\gamma_{i}N_{i}\right)$
is known. The joint p.d.f.s can be derived from the Laplace transform
expressions in Lemma \ref{lem:LTmaxSINRnearestBS} using the following
simple operations. 
\begin{eqnarray}
 &  & \hspace{-0.75cm}f_{I_{o}+I_{c}+\eta,\underset{i=1,\cdots,K}{\max}\gamma_{i}M_{i}}\left(x,y\right)=\nonumber \\
 &  & \hspace{-0.75cm}\int_{\omega=-\infty}^{\infty}\left.\frac{\partial}{\partial u}\mathcal{L}_{I_{o}+I_{c}+\eta,\ \underset{i=1,\cdots,K}{\max}\gamma_{i}M_{i}\le u}\left(j\omega\right)\right|_{u=y}\frac{\mathrm{e}^{j\omega x}}{2\pi}d\omega,\label{eq:jointpdfmaxSINR}\\
 &  & \hspace{-0.75cm}f_{I_{o}+I_{c}+\eta,\underset{i=1,\cdots,K}{\max}\gamma_{i}N_{i}}\left(x,y\right)=\nonumber \\
 &  & \hspace{-0.75cm}\int_{\omega=-\infty}^{\infty}\left.\frac{\partial}{\partial u}\mathcal{L}_{I_{o}+I_{c}+\eta,\ \underset{i=1,\cdots,K}{\max}\gamma_{i}N_{i}\le u}\left(j\omega\right)\right|_{u=y}\frac{\mathrm{e}^{j\omega x}}{2\pi}d\omega,\label{eq:jointpdfnearestBS}
\end{eqnarray}

where $f_{\cdot,\cdot}\left(\cdot,\cdot\right)$ denotes the joint
p.d.f. of the involved random variables. This is shown for the max-SINR
connectivity case in \cite[Corollary 4]{Nguyen2010}, and exactly
the same steps can be used to derive (\ref{eq:jointpdfmaxSINR}) and
(\ref{eq:jointpdfnearestBS}). It is not shown here to avoid repetition.
Further, the partial derivative terms in the above equations can be
easily computed and are given below. 
\begin{eqnarray}
 &  & \hspace{-0.75cm}\frac{\frac{\partial}{\partial u}\mathcal{L}_{I_{o}+I_{c}+\eta,\ \underset{i=1,\cdots,K}{\max}\gamma_{i}M_{i}\le u}\left(s\right)}{\mathcal{L}_{I_{o}+I_{c}+\eta,\ \underset{i=1,\cdots,K}{\max}\gamma_{i}M_{i}\le u}\left(s\right)}=\nonumber \\
 &  & \hspace{-0.75cm}\sum_{k=1}^{K}\lambda_{k}\frac{2\pi}{\varepsilon_{k}}\left(\gamma_{k}P_{k}\right)^{\frac{2}{\varepsilon_{k}}}\mathbb{E}\left[\Psi_{k}^{\frac{2}{\varepsilon_{k}}}\right]u^{-1-\frac{2}{\varepsilon_{k}}}\mathrm{e}^{-\frac{su}{\gamma_{k}}},\label{eq:derivativeLTmaxSINR}\\
 &  & \hspace{-0.75cm}\frac{\frac{\partial}{\partial u}\mathcal{L}_{I_{o}+I_{c}+\eta,\ \underset{i=1,\cdots,K}{\max}\gamma_{i}N_{i}\le u}\left(s\right)}{\mathcal{L}_{I_{o}+I_{c}+\eta,\ \underset{i=1,\cdots,K}{\max}\gamma_{i}N_{i}\le u}\left(s\right)}=\nonumber \\
 &  & \hspace{-0.75cm}\sum_{k=1}^{K}\frac{\frac{\mathbb{E}_{\Psi_{k1}}\left[\Psi_{k1}^{\frac{2}{\varepsilon_{k}}}\mathrm{e}^{-\lambda_{k}\frac{2\pi}{\varepsilon_{k}}\left(sP_{k}\right)^{\frac{2}{\varepsilon_{k}}}\mathbb{E}_{\Psi_{k}}\left[\Psi_{k}^{\frac{2}{\varepsilon_{k}}}\Gamma\left(-\frac{2}{\varepsilon_{k}},\frac{su\Psi_{k}}{\gamma_{k}\Psi_{k1}}\right)\right]}\right]}{u\mathrm{e}^{\frac{su}{\gamma_{k}}}}}{\int_{x=0}^{1}\frac{\mathbb{E}_{\Psi_{k1}}\left[\Psi_{k1}^{\frac{2}{\varepsilon_{k}}}\mathrm{e}^{-\lambda_{k}\frac{2\pi}{\varepsilon_{k}}\left(sP_{k}\right)^{\frac{2}{\varepsilon_{k}}}\mathbb{E}_{\Psi_{k}}\left[\Psi_{k}^{\frac{2}{\varepsilon_{k}}}\Gamma\left(-\frac{2}{\varepsilon_{k}},\frac{xux\Psi_{k}}{\gamma_{k}\Psi_{k1}}\right)\right]}\right]}{x^{\frac{2}{\varepsilon_{k}}+1}\mathrm{e}^{\frac{sux}{\gamma_{k}}}}dx}.\label{eq:derivativeLTnearestBS}
\end{eqnarray}
When fading coefficients are i.i.d. unit mean exponential random variables
$\mathbb{E}\left[\Psi_{k}^{\frac{2}{\varepsilon_{k}}}\right]$ $=\Gamma\left(1+\frac{2}{\varepsilon_{k}}\right)$,
setting $\left\{ \lambda_{cl}\right\} _{l=1}^{L}=0$ and $\left\{ \varepsilon_{k}\right\} _{k=1}^{K}=\alpha$,
(\ref{eq:derivativeLTmaxSINR}) reduces to \cite[(2)]{Dhillon2011b}.
Having computed the expressions for the joint p.d.f.'s in (\ref{eq:jointpdfmaxSINR})
and (\ref{eq:jointpdfnearestBS}), the coverage probabilities can
be easily obtained as shown below.
\begin{thm}
\label{thm:covProbExpression}The hetnet coverage probability max-SINR
connectivity and the nearest-BS connectivity models are as follows:
\begin{eqnarray}
 &  & \mathbb{P}_{\mathrm{coverage}}^{\mathrm{max-SINR}}=\sum_{i=1}^{K}\lambda_{i}\frac{2\pi}{\varepsilon_{i}}\left(\gamma_{i}P_{i}\right)^{\frac{2}{\varepsilon_{i}}}\mathbb{E}\left[\Psi_{k}^{\frac{2}{\varepsilon_{k}}}\right]\times\nonumber \\
 &  & \int_{y=0}^{\infty}\int_{\omega=-\infty}^{\infty}\mathcal{L}_{I_{o}+I_{c}+\eta,\ \underset{i=1,\cdots,K}{\max}\gamma_{i}M_{i}\le y}\left(j\omega\right)\times\nonumber \\
 &  & \frac{\left(\mathrm{e}^{j\omega y\left(1-\gamma_{i}^{-1}\right)}-\mathrm{e}^{j\omega\left(\eta+y\left(\kappa^{-1}-\gamma_{i}^{-1}\right)\right)}\right)}{2\pi j\omega y^{1+\frac{2}{\varepsilon_{i}}}}d\omega dy,\label{eq:covMaxSINR}\\
 &  & \mathbb{P}_{\mathrm{coverage}}^{\mathrm{nearest}}=\nonumber \\
 &  & \int_{y=0}^{\infty}\int_{\omega=-\infty}^{\infty}\left.\frac{\partial}{\partial u}\mathcal{L}_{I_{o}+I_{c}+\eta,\ \underset{i=1,\cdots,K}{\max}\gamma_{i}N_{i}\le u}\left(j\omega\right)\right|_{u=y}\times\nonumber \\
 &  & \frac{\mathrm{e}^{j\omega y}-\mathrm{e}^{j\omega\left(\frac{y}{\kappa}+\eta\right)}}{j\omega2\pi}d\omega dy,\label{eq:covNearestBS}
\end{eqnarray}
where $\kappa=\underset{i=1,\cdots,\ K}{\max}\gamma_{i}$, all the
other symbols are in Table \ref{tab:symbolList}, and the Laplace
transform function in (\ref{eq:covMaxSINR}) and the derivative of
the Laplace transform function in (\ref{eq:covNearestBS}) are given
in (\ref{eq:maxSINRLTexp}) and (\ref{eq:derivativeLTnearestBS}),
respectively.\end{thm}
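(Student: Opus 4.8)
The plan is to obtain both identities by integrating the joint densities of (\ref{eq:jointpdfmaxSINR}) and (\ref{eq:jointpdfnearestBS}) over the region describing the coverage event in (\ref{eq:maxSINRcoverageEqexpr}) and (\ref{eq:nearestBScoverageEqexpr}). Abbreviate $X=I_{o}+I_{c}+\eta$ and $Z=\max_{k}\gamma_{k}M_{k}$ (respectively $Z=\max_{k}\gamma_{k}N_{k}$), so that the coverage probability is exactly $\mathbb{P}(Z>X)$. The first step is to locate the support of $(X,Z)$. Since $M_{k}$ (resp.\ $N_{k}$) is a single summand of the nonnegative sum $I_{o}$, and $I_{c},\eta\ge0$, we have $\gamma_{k}M_{k}\le\kappa M_{k}\le\kappa I_{o}\le\kappa(X-\eta)$ for every $k$, where $\kappa=\max_{i}\gamma_{i}$; hence $Z\le\kappa(X-\eta)$, i.e.\ $X\ge Z/\kappa+\eta$ almost surely. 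Therefore the event $\{Z>X\}$ agrees, up to a null set, with $\{Z/\kappa+\eta\le X<Z\}$, and
\[
\mathbb{P}(Z>X)=\int_{y=0}^{\infty}\int_{x=y/\kappa+\eta}^{y}f_{X,Z}(x,y)\,dx\,dy,
\]
where the inner integral is read as zero whenever $y/\kappa+\eta\ge y$ (since $\kappa=\max_{i}(1+\beta_{i}^{-1})>1$, this only happens for small $y$, for which $\{Z>X\}$ cannot occur anyway, so nothing is lost). It matters that the sharp lower limit $y/\kappa+\eta$ is used rather than $0$: the density vanishes on $[0,y/\kappa+\eta)$ so the value is the same, but only this choice produces the closed-form inner integral that appears in the statement.

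The next step is to substitute the Fourier-inversion representation of $f_{X,Z}$ from (\ref{eq:jointpdfmaxSINR})/(\ref{eq:jointpdfnearestBS}), interchange the $x$- and $\omega$-integrations, and evaluate the elementary inner integral
\[
\int_{x=y/\kappa+\eta}^{y}e^{j\omega x}\,dx=\frac{e^{j\omega y}-e^{j\omega\left(\frac{y}{\kappa}+\eta\right)}}{j\omega}.
\]
Combined with the $\frac{1}{2\pi}$ from the inversion, this yields (\ref{eq:covNearestBS}) for the nearest-BS model directly (the derivative term is left symbolic, as in the statement). For the max-SINR model, it remains to insert the explicit derivative (\ref{eq:derivativeLTmaxSINR}), i.e.\ replace $\left.\frac{\partial}{\partial u}\mathcal{L}_{I_{o}+I_{c}+\eta,\ \max_{i}\gamma_{i}M_{i}\le u}(j\omega)\right|_{u=y}$ by $\mathcal{L}_{I_{o}+I_{c}+\eta,\ \max_{i}\gamma_{i}M_{i}\le y}(j\omega)\sum_{k}\lambda_{k}\frac{2\pi}{\varepsilon_{k}}(\gamma_{k}P_{k})^{\frac{2}{\varepsilon_{k}}}\mathbb{E}[\Psi_{k}^{\frac{2}{\varepsilon_{k}}}]\,y^{-1-\frac{2}{\varepsilon_{k}}}e^{-j\omega y/\gamma_{k}}$, absorb the factor $e^{-j\omega y/\gamma_{k}}$ into the bracket via $e^{-j\omega y/\gamma_{k}}(e^{j\omega y}-e^{j\omega(\frac{y}{\kappa}+\eta)})=e^{j\omega y(1-\gamma_{k}^{-1})}-e^{j\omega(\eta+y(\kappa^{-1}-\gamma_{k}^{-1}))}$, and pull the finite sum together with the $y$-independent constants outside the double integral. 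Collecting terms produces (\ref{eq:covMaxSINR}). The derivative (\ref{eq:derivativeLTmaxSINR}) is itself just $\partial_{u}$ applied to the exponential in (\ref{eq:maxSINRLTexp}) using $\partial_{t}\Gamma(a,t)=-t^{a-1}e^{-t}$, so it may be quoted without reproving.

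I expect the only real obstacle to be the analytic justification of the two interchanges of integration (the Fourier inversion and the Fubini swap of $x$ against $\omega$). Both require integrability control of $\left.\partial_{u}\mathcal{L}_{X,Z\le u}(j\omega)\right|_{u=y}$ in $\omega$; the multiplying kernel $\frac{e^{j\omega y}-e^{j\omega(y/\kappa+\eta)}}{j\omega}$ is bounded near $\omega=0$ and $O(|\omega|^{-1})$ at infinity, so it does not help. This is exactly the regularity issue already resolved for the single-tier max-SINR computation in \cite[Corollary 4]{Nguyen2010}, and since the hetnet Laplace transforms in Lemma~\ref{lem:LTmaxSINRnearestBS} are products over tiers of single-tier factors (each decaying like $\exp(-c|\omega|^{2/\varepsilon_{k}})$ over the relevant range), the required absolute integrability and Fubini hypotheses transfer essentially verbatim; I would invoke that argument rather than redo it. Everything else---the support computation, the one-line $x$-integral, and the exponent bookkeeping in the max-SINR case---is routine.
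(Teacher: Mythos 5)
Your proposal is correct and follows essentially the same route as the paper: express the coverage event as $\{\frac{1}{\kappa}Z+\eta<X<Z\}$, integrate the Fourier-inversion form of the joint density over that region, swap the $x$- and $\omega$-integrations, and evaluate the elementary inner integral, with the interchange justified by appeal to \cite[Corollary 4]{Nguyen2010}. You actually supply more detail than the paper does, in particular the support argument $Z\le\kappa(X-\eta)$ justifying the lower limit $y/\kappa+\eta$ and the explicit exponent bookkeeping that turns the generic kernel into the $\gamma_i$-shifted form of (\ref{eq:covMaxSINR}).
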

\begin{proof}
Once the joint p.d.f. has been obtained (see (\ref{eq:jointpdfmaxSINR})
and (\ref{eq:jointpdfnearestBS})), the probability of the event in
(\ref{eq:maxSINRcoverageEqexpr}) can be derived as follows:
\begin{eqnarray*}
 &  & \mathbb{P}_{coverage}^{\mathrm{max-SINR}}\\
 &  & =\mathbb{P}\left(\left\{ \frac{1}{\kappa}\times\underset{i=1,\cdots,K}{\max}\gamma_{i}M_{i}+\eta<I<\underset{i=1,\cdots,K}{\max}\gamma_{i}M_{i}\right\} \right)\\
 &  & \overset{\left(a\right)}{=}\int_{y=0}^{\infty}\int_{x=\frac{y}{\kappa}+\eta}^{y}f_{I,\underset{i=1,\cdots,K}{\max}\gamma_{i}M_{i}}\left(x,y\right)dxdy\\
 &  & \overset{\left(b\right)}{=}\int_{y=0}^{\infty}\int_{\omega=-\infty}^{\infty}\left.\frac{\partial}{\partial u}\mathcal{L}_{I_{o}+I_{c}+\eta,\ \underset{i=1,\cdots,K}{\max}\gamma_{i}M_{i}\le y}\left(j\omega\right)\right|_{u=y}\times\\
 &  & \frac{\mathrm{e}^{j\omega y}-\mathrm{e}^{j\omega\left(\frac{y}{\kappa}+\eta\right)}}{j\omega2\pi}d\omega dy,
\end{eqnarray*}
where $\left(a\right)$ expresses the probability of the coverage
event in terms of the joint p.d.f., $\left(b\right)$ is obtained
by substituting for the joint p.d.f. from (\ref{eq:jointpdfmaxSINR}),
then interchanging the order of integrations of the variables $x$
and $\omega$ which is justified by the boundedness of the integrals.
Finally, the above expression can be further simplified to obtain
(\ref{eq:covMaxSINR}). 

The same steps can be followed for obtaining (\ref{eq:covNearestBS}),
and are omitted for brevity.
\end{proof}
Using an alternate approach, expressions for the hetnet coverage probability
are obtained in \cite{Mukherjee2012a}, again, when all the fading
coefficients are i.i.d. exponential random variables. For a general
system model as in this paper, to the best of our knowledge, the hetnet
coverage probability has not been characterized, until now.

Nevertheless, the semi-analytical expressions are extremely complicated
even for numerical computations, and little intuition and insights
about the hetnet performances are obtainable from these expressions.
As a result, a more qualitative study is imperative to better understand
these soon-to-be-prevalent cellular networks. From now onwards, we
conduct a more systematic study to bring out the properties and dependencies
of the hetnet performance on the various parameters of the system.
To begin with, we make the following observations about the hetnet
performance. 
\begin{cor}
\label{thm:HetnetEquivalence}The downlink coverage probability in
the hetnet is the same as in another hetnet with the same open-access
tiers as in the original hetnet (described in Section \ref{sub:BS-Layout})
and one closed access tier where the BSs have unity transmission power,
fading coefficient and path-loss exponent and are arranged according
to a non-homogeneous Poisson point process with a BS density function
\begin{eqnarray}
\lambda_{c}\left(r\right) & = & \sum_{l=1}^{L}\lambda_{cl}P_{cl}^{\frac{2}{\varepsilon_{cl}}}\mathbb{E}\left[\Psi_{cl}^{\frac{2}{\varepsilon_{cl}}}\right]r^{\frac{2}{\varepsilon_{cl}}-1},\ r\ge0.\label{eq:closedAccessEqBSDensityFunction}
\end{eqnarray}
\end{cor}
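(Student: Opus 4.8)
\emph{Proof proposal.} The plan is to reduce the statement to a single identity between Laplace transforms. First I would observe that the closed-access tiers never enter the coverage formulas except through the one function $\mathcal{L}_{I_c}(\cdot)$: this is visible by inspection in (\ref{eq:maxSINRLTexp})--(\ref{eq:nearestBSLTexp}) of Lemma~\ref{lem:LTmaxSINRnearestBS}, hence in the derivatives (\ref{eq:derivativeLTmaxSINR})--(\ref{eq:derivativeLTnearestBS}) and therefore in the coverage probabilities (\ref{eq:covMaxSINR})--(\ref{eq:covNearestBS}) of Theorem~\ref{thm:covProbExpression}, where the open-access parameters appear separately. Consequently, writing $\widetilde{I}_c$ for the aggregate interference produced by the single equivalent closed tier in the statement, it suffices to prove $\mathcal{L}_{\widetilde{I}_c}(s)=\mathcal{L}_{I_c}(s)$ for all $s\ge0$, and the corollary then follows for \emph{both} connectivity models at once.

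To prove that identity I would start not from the closed form (\ref{eq:LTIcExpression}) but from the intermediate expression obtained in step $(b)$ of the proof of Lemma~\ref{lem:LTIcExpression}, namely $\mathcal{L}_{I_c}(s)=\prod_{l=1}^{L}\exp\!\big(-\lambda_{cl}\,\mathbb{E}_{\Psi_{cl}}\!\big[\int_{0}^{\infty}\big(1-\mathrm{e}^{-sP_{cl}\Psi_{cl}r^{-\varepsilon_{cl}}}\big)2\pi r\,dr\big]\big)$. In the $l$-th radial integral I would substitute the variable that rescales the received power to unit coefficients, $\rho=r^{\varepsilon_{cl}}/(P_{cl}\Psi_{cl})$, under which $P_{cl}\Psi_{cl}r^{-\varepsilon_{cl}}=\rho^{-1}$ and $2\pi r\,dr=\tfrac{2\pi}{\varepsilon_{cl}}(P_{cl}\Psi_{cl})^{2/\varepsilon_{cl}}\rho^{2/\varepsilon_{cl}-1}\,d\rho$; pulling $\mathbb{E}_{\Psi_{cl}}[\cdot]$ inside (Tonelli, using $\mathbb{E}[\Psi_{cl}^{2/\varepsilon_{cl}}]<\infty$) and then superposing the $L$ exponents yields $\mathcal{L}_{I_c}(s)=\exp\!\big(-\int_{0}^{\infty}\big(1-\mathrm{e}^{-s\rho^{-1}}\big)\lambda_c(\rho)\,d\rho\big)$ with $\lambda_c(\cdot)$ the density in (\ref{eq:closedAccessEqBSDensityFunction}) --- once the convention relating a ``density function'' to the underlying intensity measure is fixed as in the statement. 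Finally I would read this backwards: by Campbell's theorem, exactly as in step $(b)$ of Lemma~\ref{lem:LTIcExpression} but now applied to a non-homogeneous process, the right-hand side is precisely $\mathcal{L}_{\widetilde{I}_c}(s)$ for a single tier whose BSs form a non-homogeneous PPP with density $\lambda_c$ and whose transmit power, fading gain and path-loss exponent all equal one, which closes the argument.

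The change of variables, the interchange of expectation and integral, and the superposition of exponents are routine; I expect the main delicate point to be bookkeeping --- fixing, once and for all, the convention relating ``a non-homogeneous PPP with density function $\lambda_c(r)$'' to the measure against which $\lambda_c$ is integrated, so that the same convention governs the forward reading of Lemma~\ref{lem:LTIcExpression} and the backward reading that identifies $\mathcal{L}_{\widetilde{I}_c}$. Along the way one should also record that $\lambda_c$ is nonnegative and that the resulting Campbell exponent is finite for every $s>0$ (immediate since each exponent $2/\varepsilon_{cl}-1$ lies in $(-1,0)$), so that the equivalent non-homogeneous PPP is well defined and $\widetilde{I}_c<\infty$ almost surely.
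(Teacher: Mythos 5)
Your proposal is correct, but it reaches the conclusion by a different technical route than the paper. The paper's proof works at the level of point processes: it invokes an external equivalence theorem (\cite[Theorem 2]{Madhusudhanan2010a}, an application of the Mapping/Marking theorems) to replace each closed-access tier by a 1-D non-homogeneous Poisson process of normalized received powers with unit transmit power, fading and path-loss exponent, and then merges the $L$ resulting processes by the Superposition theorem; this yields a process-level stochastic equivalence from which equality of the law of $I_c$ (and hence of the coverage probability) is immediate. You instead reduce everything to the single identity $\mathcal{L}_{\widetilde{I}_c}=\mathcal{L}_{I_c}$ and verify it by a change of variables inside the Campbell exponent of Lemma \ref{lem:LTIcExpression}, reading the result backwards as the Campbell formula of a non-homogeneous PPP. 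Your route is self-contained (it does not lean on the cited equivalence theorem) and proves exactly what is needed, since --- as you correctly justify via Theorem \ref{thm:covProbExpression} and the independence of $I_c$ from the open-access quantities --- the closed tiers enter both connectivity models only through the marginal law of $I_c$, which its Laplace transform determines; the paper's route proves slightly more (distributional equality of the whole interference field) at the cost of an external citation. One bookkeeping remark: the density your substitution produces carries the Jacobian factor $\tfrac{2\pi}{\varepsilon_{cl}}$, consistent with the convention used for $\tilde{\lambda}_k(r)$ in Lemma \ref{lem:StochasticEqLemma1} but absent from (\ref{eq:closedAccessEqBSDensityFunction}) as printed; this is a normalization inconsistency in the paper's statement rather than a gap in your argument, and your caveat about fixing the density-to-intensity convention handles it. Your observation that the Campbell exponent is finite should also note integrability at infinity (which uses $2/\varepsilon_{cl}<1$ together with $1-\mathrm{e}^{-s/\rho}=O(\rho^{-1})$), not only the behaviour near the origin, but this is routine.
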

\begin{proof}
Firstly, the BSs in the closed-access tiers only contribute to the
interference as the MSs cannot be served by these BSs. Secondly, the
total closed-access interference power $I_{c}$ is independent of
the signal power and interference power at the MS from the open-access
tiers. Next, $I_{c}$ satisfies the following stochastic equivalence
$I_{c}=_{\mathrm{st}}\sum_{n=1}^{\infty}\tilde{R}_{n}^{-1},$ where
$\left\{ \tilde{R}_{n}\right\} _{n=1}^{\infty}$ is the set of distances
from the origin of BSs arranged according to a non-homogeneous Poisson
point process with BS density function given in (\ref{eq:closedAccessEqBSDensityFunction}).
This is obtained by first using the \cite[Theorem 2]{Madhusudhanan2010a}
to obtain an equivalent BS arrangement for each closed-access tier
according to non-homogeneous Poisson point process with unity transmission
power, fading coefficient and path-loss exponent at each BS in the
tier. Due to \cite[Theorem 2]{Madhusudhanan2010a}, the equivalent
BS arrangement has the same probability distribution for the interference
caused at the MS as the original case. Next, since the BS arrangements,
transmission and fading characteristics of the BSs of all closed-access
tiers are independent of each other, using the Superposition theorem
\cite[Page 16]{Kingman1993}, the $L$ closed-access tiers can be
combined together to obtain a single closed-access tier with BS according
to non-homogeneous Poisson point process with a BS density function
equal to the sum of the BS density functions of the individual tiers
obtained from the previous step, and is shown in (\ref{eq:closedAccessEqBSDensityFunction}).
Again, the equivalence is such that the probability distribution of
$I_{c}$ will be the same as that of the equivalent closed-access
tier where the BSs have unity tranmission power and fading coefficients.
\end{proof}

Hence, we have shown an equivalence between a hetnet with $L$ closed-access
tiers and another hetnet with a single closed-access tier. Next, we
make an interesting observation regarding the hetnet downlink performance
under the max-SINR connectivity model.
\begin{cor}
\label{cor:arbitraryFadingCorollary} The hetnet performance under
max-SINR connectivity with an arbitrary fading distribution at each
tier is the same as in another hetnet with open-access and closed-access
BS densities as $\left\{ \left.\lambda_{oi}\mathbb{E}\Psi_{oi}^{\frac{2}{\varepsilon_{i}}}\right/\Gamma\left(1+\frac{2}{\varepsilon_{i}}\right)\right\} _{i=1}^{K}$
and $\left\{ \left.\lambda_{ci}\mathbb{E}\Psi_{ci}^{\frac{2}{\varepsilon_{ci}}}\right/\Gamma\left(1+\frac{2}{\varepsilon_{ci}}\right)\right\} _{i=1}^{L}$,
respectively, and i.i.d. unit mean exponential distribution for fading
at all the BSs in the network.
\end{cor}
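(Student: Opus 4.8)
The plan is to argue at the level of the received-power point processes, exactly as in the proof of Corollary~\ref{thm:HetnetEquivalence}, rather than manipulating the explicit formula in Theorem~\ref{thm:covProbExpression}. For each open-access tier $k$ let $\Phi_{ok}=\left\{ P_{ok}\Psi_{okn}R_{okn}^{-\varepsilon_k}\right\} _{n\ge 1}$ be the point process of received powers on $(0,\infty)$. Applying the marking theorem (with i.i.d.\ marks $\Psi_{okn}$) followed by the mapping theorem for Poisson point processes, $\Phi_{ok}$ is itself a Poisson point process whose intensity measure $\mu_{ok}$ is determined by
\begin{equation*}
\mu_{ok}\bigl((t,\infty)\bigr)=\lambda_{ok}\,\mathbb{E}_{\Psi_k}\!\left[\pi\left(\frac{P_{ok}\Psi_k}{t}\right)^{\frac{2}{\varepsilon_k}}\right]=\pi\,\lambda_{ok}\,P_{ok}^{\frac{2}{\varepsilon_k}}\,\mathbb{E}\!\left[\Psi_k^{\frac{2}{\varepsilon_k}}\right]t^{-\frac{2}{\varepsilon_k}},
\end{equation*}
which is the computation underlying \cite[Theorem~2]{Madhusudhanan2010a}. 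The essential observation is that the fading law of tier $k$ enters $\mu_{ok}$ only through the single number $\mathbb{E}[\Psi_k^{2/\varepsilon_k}]$, and only in the product $\lambda_{ok}\mathbb{E}[\Psi_k^{2/\varepsilon_k}]$, while $P_{ok}$ and $\varepsilon_k$ are untouched; the same holds verbatim for each closed-access tier, with $\mu_{cl}((t,\infty))=\pi\lambda_{cl}P_{cl}^{2/\varepsilon_{cl}}\mathbb{E}[\Psi_{cl}^{2/\varepsilon_{cl}}]t^{-2/\varepsilon_{cl}}$.

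Next I would note that, under the max-SINR connectivity model, every performance quantity in the paper --- the coverage probability \eqref{eq:maxSINRcoverageDef}, the average rate, and the average per-tier load --- is a measurable functional of the independent family of tier-labelled received-power point processes $(\Phi_{o1},\dots,\Phi_{oK},\Phi_{c1},\dots,\Phi_{cL})$ together with $\eta$ and the thresholds $\{\beta_k\}$: once the received powers and their tier labels are known, the planar locations of the BSs are irrelevant, the labels being needed only because $\beta_k$ varies across tiers and because the load computation must identify the serving tier. By the Superposition theorem \cite[Page~16]{Kingman1993} the law of this family factorizes over the tiers, so it suffices to show that each $\mu_{ok}$ and each $\mu_{cl}$ is unchanged by the substitution in the statement.

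Finally I would carry out the moment check: replacing $\lambda_{ok}$ by $\lambda_{ok}'=\lambda_{ok}\mathbb{E}[\Psi_{ok}^{2/\varepsilon_k}]/\Gamma(1+2/\varepsilon_k)$ and $\Psi_{ok}$ by a unit-mean exponential $\Psi_{ok}'$, for which $\mathbb{E}[(\Psi_{ok}')^{2/\varepsilon_k}]=\Gamma(1+2/\varepsilon_k)$, gives $\lambda_{ok}'\mathbb{E}[(\Psi_{ok}')^{2/\varepsilon_k}]=\lambda_{ok}\mathbb{E}[\Psi_{ok}^{2/\varepsilon_k}]$, so $\mu_{ok}$ is preserved; identically $\mu_{cl}$ is preserved under $\lambda_{cl}'=\lambda_{cl}\mathbb{E}[\Psi_{cl}^{2/\varepsilon_{cl}}]/\Gamma(1+2/\varepsilon_{cl})$. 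Hence the tier-labelled received-power family has the same distribution in both hetnets and all max-SINR metrics coincide. As a cross-check one can read off from Theorem~\ref{thm:covProbExpression} that the fading laws enter \eqref{eq:covMaxSINR} only through \eqref{eq:maxSINRLTexp} and \eqref{eq:LTIcExpression}, and there only in the combinations $\lambda_{ok}\mathbb{E}[\Psi_k^{2/\varepsilon_k}]$ and $\lambda_{cl}\mathbb{E}[\Psi_{cl}^{2/\varepsilon_{cl}}]$, which are exactly the invariants just established. The one genuinely delicate point is the reduction in the second paragraph: one must state precisely \emph{which} functional of the point processes each performance metric is, so that invariance of the law of the labelled received-power family really does yield equality of \emph{all} the metrics and not just of the coverage probability; once that is pinned down, what remains is the one-line moment computation above.
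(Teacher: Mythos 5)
Your proposal is correct and follows essentially the same route as the paper: the paper's own justification is a one-line appeal to \cite[Corollary 2]{Madhusudhanan2010a}, namely that fading enters the (Poisson) received-power point process of each tier only through the product $\lambda\,\mathbb{E}[\Psi^{2/\varepsilon}]$, and your argument simply unpacks that citation via the marking/mapping theorems, adds the moment check for the unit-mean exponential, and makes explicit the (correct) observation that all max-SINR metrics are functionals of the tier-labelled received-power processes alone.
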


The above result is obtained by noting that the effect of fading is
equivalent to scaling the density of BSs by the $\frac{2}{\varepsilon}^{\mathrm{th}}$
moment of the fading random variable, due to \cite[Corollary 2]{Madhusudhanan2010a}.
A large body of work involving the stochastic geometric study of networks
predominantly assume fading coefficients to be i.i.d. exponential
random variables, as this greatly simplifies the analysis and renders
itself to closed-form characterization of coverage probabilities and
other related performance metrics of several networks including the
hetnets (see \cite{Dhillon2012a}). A common criticism for all these
works has been that the exponential distribution does not accurately
capture the slow fading environment. Interestingly, the above corollary
shows an example of a scenario wherein studies with exponential fading
assumptions completely characterizes the arbitrary fading scenario.
Unfortunately, the same is not true for the nearest-BS connectivity
model. In the following section, we explore more properties for the
hetnet downlink performance.

The importance of the Corollary \ref{thm:HetnetEquivalence} is that
the SINR distribution of the two equivalent hetnets are the same for
both the max-SINR and nearest-BS connectivity models. Hence, without
loss of generality, we study the downlink performance where the hetnet
consists of $K$ tiers of open access networks and a single closed
access network. For the sake of simplicity, it is assumed that the
closed-access tier has homogeneous Poisson point process based BS
arrangement with a constant BS density $\lambda_{c}$, transmission
power $P_{c}$, path-loss exponent $\varepsilon_{c}$ and i.i.d. fading
coefficients with the same distribution as $\Psi_{c}\ \left(\mathbb{E}\left[\Psi_{c}^{\frac{2}{\varepsilon_{c}}}\right]<\infty\right)$.

\section{Qualitative study of hetnet downlink performance}

We begin with some simple stochastic ordering results comparing the
hetnet coverage probabilities for the two connectivity models. 
\begin{prop}
\label{pro:stOrderingmaxSINRNearest}For the same system parameters,
$\mathbb{P}_{\mathrm{coverage}}^{\mathrm{max-SINR}}>\mathbb{P}_{\mathrm{coverage}}^{\mathrm{nearest}}$.
\end{prop}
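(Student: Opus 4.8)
The plan is to establish the claim in two steps: first the non-strict inequality $\mathbb{P}_{\mathrm{coverage}}^{\mathrm{max-SINR}}\ge\mathbb{P}_{\mathrm{coverage}}^{\mathrm{nearest}}$ by a deterministic inclusion of events, and then strict inequality by producing an event of positive probability on which the MS is covered under the max-SINR model but not under the nearest-BS model.

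For the first step, note that for each tier $k$ the quantity $\mathrm{SINR}_{k1}$ is one of the members of the family $\{\mathrm{SINR}_{ki}\}_{i\ge 1}$, so $\{\mathrm{SINR}_{k1}>\beta_k\}\subseteq\bigcup_{i\ge 1}\{\mathrm{SINR}_{ki}>\beta_k\}$. Taking the union over $k=1,\dots,K$ and comparing with (\ref{eq:maxSINRcoverageDef}) and (\ref{eq:nearestBScoverageDef}) shows that the nearest-BS coverage event is contained in the max-SINR coverage event, which gives the inequality. Moreover, since $x\mapsto x/(T-x)$ is increasing on $[0,T)$, among the BSs of a given tier the one with the largest received power also has the largest SINR; hence $\mathrm{SINR}_k(\max)$ is attained by the BS realizing $M_k$, and the difference of the two coverage events is exactly the event $G$ that for some tier $k$ some non-nearest BS exceeds $\beta_k$ while no nearest BS of any tier does.

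For strictness it remains to show $\mathbb{P}(G)>0$. I would condition on the restriction of all the Poisson processes and their fading marks to the complement of a small disc $B(0,\rho_0)$; the interference $\tilde I$ delivered at the origin by these far-away BSs is a.s.\ finite by the moment assumptions on the fading, so there is a finite $C$ with $\mathbb{P}(\tilde I\le C)>0$. Independently, with positive probability the configuration inside $B(0,\rho_0)$ is: exactly two open-access BSs, both in tier $1$, at distances $R_{11}<R_{12}$ lying in prescribed disjoint subintervals of $(0,\rho_0)$, no other BS of the network inside $B(0,\rho_0)$, and fading marks $\Psi_{11},\Psi_{12}$ lying in prescribed subintervals of the support of the fading law chosen so that $P_{o1}\Psi_{12}R_{12}^{-\varepsilon_1}>P_{o1}\Psi_{11}R_{11}^{-\varepsilon_1}$ (this uses that the fading distribution is non-degenerate: if $\Psi_1$ were a.s.\ constant the nearest BS would always carry the strongest signal). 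Shrinking $R_{12}$ makes $S_{12}:=P_{o1}\Psi_{12}R_{12}^{-\varepsilon_1}$ as large as desired; then, on $\{\tilde I\le C\}$, I would verify the three inequalities $\mathrm{SINR}_{12}>\beta_1$, $\mathrm{SINR}_{11}\le\beta_1$, and $\mathrm{SINR}_{k1}\le\beta_k$ for $k=2,\dots,K$, the last two holding because the dominant second BS of tier $1$ appears in the interference term of each of those links while $S_{11}$ and the far-away $S_{k1}$ stay comparatively small. By independence of the far-field and near-field events, $\mathbb{P}(G)$ is at least a product of two positive numbers, hence positive, and the strict inequality follows.

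The step I expect to be the main obstacle is precisely this positive-probability construction: one cannot simply ``turn off'' the interference because there are infinitely many interferers, so the conditioning on the a.s.-finite far-field interference $\tilde I$ is essential, after which one only needs the void probabilities of a Poisson process on a bounded disc together with the positivity of the fading law on suitable intervals. A related point that must be handled (or assumed away) is the genuinely degenerate case: if every $\Psi_k$ is a.s.\ constant then $M_k=N_k$ for all $k$, the two coverage events coincide, and the inequality is only weak; the strict statement should therefore be understood under the mild nondegeneracy of the fading that holds for the shadowing and small-scale fading models motivating this work.
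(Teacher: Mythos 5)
Your first step --- the inclusion $\bigcup_{k}\{\mathrm{SINR}_{k1}>\beta_k\}\subseteq\bigcup_{k}\bigcup_{i\ge1}\{\mathrm{SINR}_{ki}>\beta_k\}$ --- is precisely the paper's entire proof: the paper asserts the strict inequality but only ever argues this event containment, which of course yields only $\mathbb{P}_{\mathrm{coverage}}^{\mathrm{max-SINR}}\ge\mathbb{P}_{\mathrm{coverage}}^{\mathrm{nearest}}$. So your recognition that strictness requires a separate positive-probability argument, and your observation that it fails for degenerate fading, already go beyond the source and are correct.

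The construction you propose for the event $G$, however, has a genuine gap. Since $R_{11}<R_{12}$, shrinking $R_{12}$ forces $R_{11}$ to shrink too, so $S_{11}=P_{o1}\Psi_{11}R_{11}^{-\varepsilon_1}$ blows up together with $S_{12}$ and does not ``stay comparatively small''; in the limit $\mathrm{SINR}_{11}\to(\Psi_{11}/\Psi_{12})\,(R_{12}/R_{11})^{\varepsilon_1}$, which need not be $\le\beta_1$. More fundamentally, for any non-nearest BS $(k,i)$ the nearest BS's power $S_{k1}$ is part of its interference, so $\mathrm{SINR}_{ki}>\beta_k$ forces $S_{ki}>\beta_k S_{k1}$, i.e.\ $\Psi_{ki}/\Psi_{k1}>\beta_k\,(R_{ki}/R_{k1})^{\varepsilon_k}>\max(\beta_k,1)$. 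Hence if every fading law has bounded support with essential ratio below $\max(\beta_k,1)$ for all $k$ (e.g.\ $\Psi_k$ uniform on $[1,1.1]$ with $\beta_k=2$), the event $G$ is null and the two coverage probabilities coincide: ``mild nondegeneracy'' is not a sufficient hypothesis, and the strict inequality is false in general. A repairable version needs either the added assumption that the fading ratio exceeds $\max_k\max(\beta_k,1)$ with positive probability, or, for thresholds $\beta_k<1$, a different mechanism than making $S_{12}$ dominant --- namely tuning the residual interference-plus-noise $W$ into the nonempty window $\bigl(S_{11}/\beta_1-S_{12},\,S_{12}/\beta_1-S_{11}\bigr)$ that exists whenever $S_{12}>S_{11}$, which in turn requires an argument that the far-field shot noise charges every such interval. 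As written, your proof establishes only the weak inequality (as does the paper's).
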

The above result is easily proved by noting from (\ref{eq:maxSINRcoverageDef})
and (\ref{eq:nearestBScoverageDef}) that $\bigcup_{k=1}^{K}\bigcup_{i=1}^{\infty}\left\{ \mathrm{SINR}_{ki}>\beta_{k}\right\} \supset\bigcup_{k=1}^{K}\left\{ \mathrm{SINR}_{k1}>\beta_{k}\right\} $,
i.e. the coverage event corresponding to the nearest-BS connectivity
is a subset of the max-SINR connectivity model. 

When $\left\{ \beta_{k}\right\} _{k=1}^{\infty}=\beta$, commonly
referred to as the unbiased case in the literature, the hetnet coverage
probabilities of the max-SINR connectivity model is identical to the
maximum instantaneous received power (MIRP) connectivity model. Under
the MIRP connectivity, the MS connects to the BS with the maximum
instantaneous received power among all the open-access tiers. As a
result, the serving BS and the coverage probability expression for
the MIRP are 
\begin{eqnarray}
\left(T,I\right) & = & \underset{k=1,\cdots,\ K,\ i=1,\ 2,\cdots}{\mathrm{argmax}}P_{ok}\Psi_{ki}R_{ki}^{-\varepsilon_{k}},\nonumber \\
\mathbb{P}_{\mathrm{coverage}}^{\mathrm{MIRP}} & = & \mathbb{P}\left(\left\{ \mathrm{SINR}_{T,I}>\beta_{T}\right\} \right),\label{eq:tierExpCoverageExpMIRP}
\end{eqnarray}
where $T$ refers to the tier-index and $I$ refers to the BS-index
of the serving BS. Another popular connectivity model for the hetnets
is the so-called maximum biased-received-power (MBRP) connectivity
model that is studied in \cite{Jo2011}. 

Under MBRP, the MS associates with the BS with the maximum average
received-power in the hetnet with a certain biasing corresponding
to each tier. Hence, the serving BS will be one of the nearest BSs
from the MS corresponding to the $K$ open-access tiers. Then, the
tier-index of the serving BS and the hetnet coverage probability under
MBRP are determined as follows: 
\begin{eqnarray}
T & = & \underset{k=1,\cdots,\ K}{\mathrm{argmax}}\ \underset{i=1,\ 2,\cdots}{\max}P_{ok}\mathbb{E}\left[\Psi_{ki}\right]R_{ki}^{-\varepsilon_{k}}B_{ok},\nonumber \\
 & = & \underset{k=1,\cdots,\ K}{\mathrm{argmax}}P_{ok}\mathbb{E}\left[\Psi_{k}\right]R_{k1}^{-\varepsilon_{k}}B_{ok}\label{eq:tierExpMBRP}\\
\mathbb{P}_{\mathrm{coverage}}^{\mathrm{MBRP}} & = & \mathbb{P}\left(\left\{ \mathrm{SINR}_{T,1}>\beta_{T}\right\} \right),\label{eq:CoverageExpMBRP}
\end{eqnarray}
where $\left\{ B_{ok}(>0)\right\} _{k=1}^{K}$ are the biasing factors;
$\left\{ P_{k}\mathbb{E}\left[\Psi_{ki}\right]R_{ki}^{-\varepsilon_{k}}\right\} _{i=1}^{\infty}$
is the long-term averaged received power at the MS from the $k^{\mathrm{th}}$
tier BSs, and the maximum is from the $k^{\mathrm{th}}$ tier BS nearest
to the MS; and $\mathrm{SINR}_{k,i}$ is defined in $\left(\ref{eq:SINRDefinition}\right)$. 

We begin with characterizing the c.c.d.f. of $\mathrm{SINR}_{T,I}$
for the MIRP case, and several related important characteristics.

\subsection{\label{sub:MIRP}SINR characterization under MIRP connectivity}

The following stochastic equivalence helps simplify the SINR characterization. 
\begin{lem}
\textup{\label{lem:StochasticEqLemma1}}The SINR at the MS under MIRP
is the same as in the two-tier hetnet where the tier to which the
MS has an open-access network with a BS density function $\tilde{\lambda}\left(r\right)=\sum_{k=1}^{K}\tilde{\lambda}_{k}\left(r\right)$
with \textup{$\tilde{\lambda}_{k}\left(r\right)=\lambda_{k}\frac{2\pi}{\varepsilon_{k}}P_{k}^{\frac{2}{\varepsilon_{k}}}\mathbb{E}\Psi_{k}^{\frac{2}{\varepsilon_{k}}}r^{\frac{2}{\varepsilon_{k}}-1},\ r\ge0$}
and a closed-access network with a BS density function $\hat{\lambda}\left(r\right)=\lambda_{c}\frac{2\pi}{\varepsilon_{c}}P_{c}^{\frac{2}{\varepsilon_{c}}}\mathbb{E}\left[\Psi_{c}^{\frac{2}{\varepsilon_{c}}}\right]r^{\frac{2}{\varepsilon_{c}}-1}$.
All the BSs in the equivalent systems have unity transmit powers,
fading coefficients and path-loss exponents. The SINR is stochastically
equal to 
\begin{eqnarray}
 &  & \mathrm{SINR}_{T,I}\nonumber \\
 &  & =_{\mathrm{st}}\left.\frac{\tilde{R}_{T,1}^{-1}}{\underset{\left(k,l\right)\ne\left(T,1\right)}{\sum_{k=1}^{K}\sum_{l=1}^{\infty}}\tilde{R}_{kl}^{-1}+\sum_{l=1}^{\infty}\hat{R}_{l}^{-1}+\eta}\right|_{\left(\left\{ \tilde{\lambda}_{k}\left(r\right)\right\} _{k=1}^{K},\hat{\lambda}\left(r\right)\right)}\nonumber \\
 &  & =_{\mathrm{st}}\left.\frac{\tilde{R}_{1}^{-1}}{\sum_{k=2}^{\infty}\tilde{R}_{k}^{-1}+\sum_{l=1}^{\infty}\hat{R}_{l}^{-1}+\eta}\right|_{\left(\tilde{\lambda}\left(r\right),\hat{\lambda}\left(r\right)\right)},\label{eq:SINRDistributionEq1}
\end{eqnarray}
where $=_{\mathrm{st}}$ indicates the equivalence in distribution;
and $\left\{ \tilde{R}_{i}\right\} _{i=1}^{\infty}$ $\left(\left\{ \hat{R}_{i}\right\} _{i=1}^{\infty}\right)$
is the ascendingly ordered distances of the BSs from the origin, obtained
from a non-homogeneous 1-D Poisson point process with BS density function
$\tilde{\lambda}\left(r\right)$ $\left(\hat{\lambda}\left(r\right)\right)$
defined above.\end{lem}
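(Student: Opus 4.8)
The plan is to establish the stochastic equivalence in two stages, each invoking the mapping theorem for Poisson point processes together with \cite[Theorem 2]{Madhusudhanan2010a}, which was already used to prove Corollary \ref{thm:HetnetEquivalence}. First I would handle a \emph{single} open-access tier $k$ in isolation: the received powers $\{P_{k}\Psi_{ki}R_{ki}^{-\varepsilon_{k}}\}_{i=1}^{\infty}$ from that tier form a Poisson point process on $(0,\infty)$ (in the ``received power'' domain), and by the displacement/mapping theorem the associated process of \emph{inverse} received powers is a one-dimensional Poisson point process whose intensity is computed by pushing forward the planar intensity $\lambda_{k}\,2\pi r\,dr$ through the map $r\mapsto (P_{k}\Psi_{k}r^{-\varepsilon_{k}})^{-1}$ and averaging over the fading $\Psi_{k}$. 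A direct change of variables gives exactly $\tilde\lambda_{k}(r)=\lambda_{k}\frac{2\pi}{\varepsilon_{k}}P_{k}^{2/\varepsilon_{k}}\mathbb{E}\Psi_{k}^{2/\varepsilon_{k}}\,r^{2/\varepsilon_{k}-1}$; this is precisely the content of \cite[Theorem 2]{Madhusudhanan2010a} specialized to one tier. The key point is that under this map the quantity ``$P_{k}\Psi_{ki}R_{ki}^{-\varepsilon_{k}}$'' becomes ``$\tilde R^{-1}$'' for a point $\tilde R$ of the new process, so ratios of received powers — and hence SINRs — are preserved in distribution.

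Second I would combine the tiers. Since the $K$ open-access tiers and the closed-access tier are mutually independent Poisson point processes, after applying the per-tier mapping above we obtain $K+1$ independent one-dimensional Poisson point processes with intensities $\tilde\lambda_{1},\dots,\tilde\lambda_{K},\hat\lambda$. The closed-access intensity $\hat\lambda(r)=\lambda_{c}\frac{2\pi}{\varepsilon_{c}}P_{c}^{2/\varepsilon_{c}}\mathbb{E}[\Psi_{c}^{2/\varepsilon_{c}}]\,r^{2/\varepsilon_{c}-1}$ arises in the same way. By the superposition theorem \cite[Page 16]{Kingman1993}, the $K$ mapped open-access processes merge into a single non-homogeneous Poisson point process with intensity $\tilde\lambda(r)=\sum_{k=1}^{K}\tilde\lambda_{k}(r)$. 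This yields the first equality in \eqref{eq:SINRDistributionEq1}: $\mathrm{SINR}_{T,I}$, the ratio of the largest received power to (total received power $-$ that term $+$ noise), becomes the ratio $\tilde R_{T,1}^{-1}\big/\big(\sum_{(k,l)\ne(T,1)}\tilde R_{kl}^{-1}+\sum_{l}\hat R_{l}^{-1}+\eta\big)$ in the mapped picture, because ``maximum received power'' corresponds to ``minimum $\tilde R$'' and the argmax over tiers and BS indices is preserved. For the second equality I would simply observe that the serving BS under MIRP is the global maximum received-power BS among \emph{all} open-access tiers, which after merging is just the nearest point $\tilde R_{1}$ of the superposed process $\tilde\lambda(r)$; relabelling the merged points in ascending order gives $\tilde R_{1}<\tilde R_{2}<\cdots$ and the denominator becomes $\sum_{k=2}^{\infty}\tilde R_{k}^{-1}+\sum_{l}\hat R_{l}^{-1}+\eta$.

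The main obstacle — really the only nontrivial point — is justifying that the map $r\mapsto(P_{k}\Psi_{k}r^{-\varepsilon_{k}})^{-1}$, after averaging over the random fading $\Psi_{k}$, still produces a genuine Poisson point process on $(0,\infty)$ with the claimed deterministic intensity, and that the resulting intensity is locally finite (equivalently, that the image process has no accumulation of points). This is exactly where the moment condition $\mathbb{E}[\Psi_{k}^{2/\varepsilon_{k}}]<\infty$ and $\varepsilon_{k}>2$ enter: finiteness of $\mathbb{E}[\Psi_{k}^{2/\varepsilon_{k}}]$ guarantees the pushforward intensity $\tilde\lambda_{k}(r)$ is finite for each $r$, and $\varepsilon_{k}>2$ makes $2/\varepsilon_{k}-1<0$ but still integrable near the relevant endpoints so that the sums defining the interference converge almost surely. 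Rather than reprove this, I would cite \cite[Theorem 2]{Madhusudhanan2010a} directly, exactly as Corollary \ref{thm:HetnetEquivalence} does, noting that that theorem gives the equivalence at the level of the full point process of received powers (not merely the interference sum), which is what is needed to preserve the SINR — a ratio — in distribution. The remaining steps (superposition, ascending relabelling, identifying the argmax with the nearest mapped point) are routine applications of standard Poisson point process facts.
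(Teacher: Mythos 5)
Your proposal is correct and follows essentially the same route as the paper's own proof: map each tier's (distance, fading) marks to the inverse received power $(P_k\Psi_k)^{-1}R^{\varepsilon_k}$ via the Marking and Mapping theorems to obtain the 1-D intensities $\tilde\lambda_k$ and $\hat\lambda$, merge the open-access tiers by the Superposition theorem, and identify the MIRP serving BS with the nearest point of the merged process. Your added remarks on why $\mathbb{E}[\Psi_k^{2/\varepsilon_k}]<\infty$ and $\varepsilon_k>2$ are needed are consistent with the paper's appeal to \cite[Theorem 2]{Madhusudhanan2010a} and do not change the argument.
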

\begin{proof}
See Appendix \ref{sub:proofStEquivalenceLemma1}.
\end{proof}
The following lemma shows interesting stochastic equivalences when
$\left\{ \varepsilon_{k}\right\} _{k=1}^{K}=\varepsilon_{c}=\varepsilon$.
\begin{lem}
\label{lem:StochasticEqLemma2}The hetnet SINR under MIRP connectivity
has the same distribution as in the following three networks. The
first is a hetnet with BS densities $\left\{ \lambda_{k}P_{k}^{\frac{2}{\varepsilon}}\mathbb{E}\left[\Psi_{k}^{\frac{2}{\varepsilon}}\right]\right\} _{k=1}^{K},\ \lambda_{c}P_{k}^{\frac{2}{\varepsilon}}\mathbb{E}\left[\Psi_{c}^{\frac{2}{\varepsilon}}\right]$
for the $K$ open-access tiers and the closed-access tier, respectively,
unity transmit powers and shadow fading factors for all tiers. The
other two are two-tier networks with unity transmit powers and shadow
fading factors for all their BSs. The first two-tier network has the
open-access tier BS density $\sum_{l=1}^{K}\lambda_{l}P_{l}^{\frac{2}{\varepsilon}}\mathbb{E}\left[\Psi_{l}^{\frac{2}{\varepsilon}}\right]$,
closed-access tier BS density $\lambda_{c}P_{k}^{\frac{2}{\varepsilon}}\mathbb{E}\left[\Psi_{c}^{\frac{2}{\varepsilon}}\right]$and
experiences the same background noise as the hetnets. The second two-tier
network has a unity open-access tier BS density, closed-access tier
BS density $\hat{\lambda}_{c}=\frac{\lambda_{c}P_{k}^{\frac{2}{\varepsilon}}\mathbb{E}\left[\Psi_{c}^{\frac{2}{\varepsilon}}\right]}{\sum_{l=1}^{K}\lambda_{l}P_{l}^{\frac{2}{\varepsilon}}\mathbb{E}\left[\Psi_{l}^{\frac{2}{\varepsilon}}\right]}$
and a background noise $\bar{\eta}=\eta\left(\sum_{l=1}^{K}\lambda_{l}P_{l}^{\frac{2}{\varepsilon}}\mathbb{E}\left[\Psi_{l}^{\frac{2}{\varepsilon}}\right]\right)^{-\frac{\varepsilon}{2}}$.
Equivalently, 
\begin{eqnarray}
 &  & \hspace{-0.75cm}\mathrm{SINR}_{T,I}=_{\mathrm{st}}\nonumber \\
 &  & \hspace{-0.75cm}\mathrm{SINR}\left(K+1,\left\{ \lambda_{k}P_{k}^{\frac{2}{\varepsilon}}\mathbb{E}\left[\Psi_{k}^{\frac{2}{\varepsilon}}\right]\right\} _{k=1}^{K},\lambda_{c}P_{k}^{\frac{2}{\varepsilon}}\mathbb{E}\left[\Psi_{c}^{\frac{2}{\varepsilon}}\right],\eta,T\right)\label{eq:SINRDistributionEq2}\\
 &  & \hspace{-0.75cm}=_{\mathrm{st}}\mathrm{SINR}\left(2,\sum_{l=1}^{K}\lambda_{l}P_{l}^{\frac{2}{\varepsilon}}\mathbb{E}\left[\Psi_{l}^{\frac{2}{\varepsilon}}\right],\lambda_{c}P_{c}^{\frac{2}{\varepsilon}}\mathbb{E}\left[\Psi_{c}^{\frac{2}{\varepsilon}}\right],\eta,1\right)\label{eq:SINRDistEqDensity1System2}\\
 &  & \hspace{-0.75cm}=_{\mathrm{st}}\mathrm{SINR}\left(2,1,\hat{\lambda}_{c},\bar{\eta},1\right),\label{eq:SINRDistEqDensity1System}
\end{eqnarray}
 where $=_{\mathrm{st}}$ indicates equivalence in distribution. The
SINR expression on the right-hand side is a function of the total
number of tiers in the hetnet, BS densities of each open-access tier,
BS densities of the closed-access tier, back-ground noise power, and
the tier index of the serving BS, respectively.\end{lem}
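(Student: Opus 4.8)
The plan is to bootstrap everything off the representation already supplied by Lemma~\ref{lem:StochasticEqLemma1} and then specialise it to $\varepsilon_1=\cdots=\varepsilon_K=\varepsilon_c=\varepsilon$. In that case the open-access density $\tilde\lambda(r)=\sum_{k=1}^{K}\tilde\lambda_k(r)$ of Lemma~\ref{lem:StochasticEqLemma1} has summands $\tilde\lambda_k(r)=\left(\lambda_k P_k^{2/\varepsilon}\mathbb{E}\left[\Psi_k^{2/\varepsilon}\right]\right)\frac{2\pi}{\varepsilon}r^{2/\varepsilon-1}$ and the closed-access density becomes $\hat\lambda(r)=\left(\lambda_c P_c^{2/\varepsilon}\mathbb{E}\left[\Psi_c^{2/\varepsilon}\right]\right)\frac{2\pi}{\varepsilon}r^{2/\varepsilon-1}$, so every tier now contributes a one-dimensional intensity of the common power-law shape $r^{2/\varepsilon-1}$, differing only by a multiplicative constant. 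The three claimed equivalences are then obtained by re-reading these one-dimensional processes, one step at a time.

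For (\ref{eq:SINRDistributionEq2}) I would use the $P=1$, $\Psi\equiv1$ instance of \cite[Theorem~2]{Madhusudhanan2010a}: a one-dimensional non-homogeneous Poisson point process with intensity $a\,\frac{2\pi}{\varepsilon}r^{2/\varepsilon-1}$ has the same law as the received-power process generated by a homogeneous planar Poisson tier of density $a$ whose base stations have unit transmit power, unit fading and path-loss exponent $\varepsilon$. Applied term by term this identifies $\tilde\lambda_k$ with an open-access tier of density $\lambda_k P_k^{2/\varepsilon}\mathbb{E}\left[\Psi_k^{2/\varepsilon}\right]$ and $\hat\lambda$ with a closed-access tier of density $\lambda_c P_c^{2/\varepsilon}\mathbb{E}\left[\Psi_c^{2/\varepsilon}\right]$; independence of the tiers together with the Superposition theorem \cite[Page~16]{Kingman1993} then guarantees that the superposition of the one-dimensional images is exactly the merged process appearing in Lemma~\ref{lem:StochasticEqLemma1}. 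Since under MIRP the serving base station is the one with the largest received power, i.e.\ the point of smallest distance $\tilde R_1$ in the merged process, and since the transformation has been carried out tier by tier, the tier label $T$ of that point is unchanged, which gives (\ref{eq:SINRDistributionEq2}) with serving index $T$.

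Equivalence (\ref{eq:SINRDistEqDensity1System2}) is then immediate: the right-hand side of (\ref{eq:SINRDistributionEq1}) depends on the open-access points only through the merged intensity $\tilde\lambda$ and not through the partition into $K$ sub-tiers, so the $K$ open-access tiers can be amalgamated into a single open-access tier of density $\sum_{l=1}^{K}\lambda_l P_l^{2/\varepsilon}\mathbb{E}\left[\Psi_l^{2/\varepsilon}\right]$ (again by the Superposition theorem), the closed-access tier and the noise $\eta$ being left untouched and the serving index now being $1$. For (\ref{eq:SINRDistEqDensity1System}) I would rescale all distances by a common factor via the mapping theorem for Poisson processes \cite{Kingman1993}: with $A=\sum_{l=1}^{K}\lambda_l P_l^{2/\varepsilon}\mathbb{E}\left[\Psi_l^{2/\varepsilon}\right]$, replacing $\tilde R_i$ and $\hat R_l$ by $A^{\varepsilon/2}\tilde R_i$ and $A^{\varepsilon/2}\hat R_l$ turns the open-access intensity $A\,\frac{2\pi}{\varepsilon}r^{2/\varepsilon-1}$ into the unit-coefficient intensity $\frac{2\pi}{\varepsilon}r^{2/\varepsilon-1}$, multiplies the closed-access intensity coefficient by $A^{-1}$ (yielding $\hat\lambda_c$), and, after pulling the common factor $A^{\varepsilon/2}$ out of the numerator and of the interference terms of the denominator of the SINR in (\ref{eq:SINRDistributionEq1}), converts $\eta$ into $\bar\eta=\eta A^{-\varepsilon/2}$; translating the rescaled one-dimensional processes back into planar tiers through \cite[Theorem~2]{Madhusudhanan2010a} as before produces the unit-density two-tier network with serving index $1$.

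The delicate points are all bookkeeping rather than conceptual. I need to check that \cite[Theorem~2]{Madhusudhanan2010a} really may be used in this ``reverse'' direction, i.e.\ that every power-law one-dimensional intensity of the above form is realisable as the received-power process of a homogeneous planar tier carrying unit marks (this is just the forward statement read with $P=1$ and $\Psi\equiv1$, and it is an equality in distribution, hence reversible); that the MIRP serving-tier index survives the tier-wise transformation, the merging in the second step, and the global dilation in the third; and that the mapping theorem is applied correctly, namely that a dilation of the points of a non-homogeneous Poisson process is again a non-homogeneous Poisson process whose intensity is the corresponding dilate, so that the exponents in $A^{\varepsilon/2}$ and $\bar\eta$ come out right. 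Everything else is algebraic simplification, and I expect this verification to be the main obstacle only in the sense of requiring care, not new ideas.
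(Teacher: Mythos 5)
Your argument is correct and rests on the same toolkit the paper uses (marking/mapping theorems, superposition, and scale invariance of the power-law intensity), but it is routed slightly differently. The paper's proof does not pass through Lemma \ref{lem:StochasticEqLemma1}: it applies the marking theorem directly in the plane via the map $R_{m,l}\mapsto\left(P_{m}\Psi_{m,l}\right)^{-1/\varepsilon}R_{m,l}$, which immediately yields \emph{homogeneous planar} tiers with the stated densities and unit marks, and it then disposes of the last equivalence by citing the scale-invariance result \cite[Lemma 3]{Madhusudhanan2010a}. You instead start from the 1-D non-homogeneous representation of Lemma \ref{lem:StochasticEqLemma1}, observe that under a common $\varepsilon$ every tier has intensity proportional to $r^{2/\varepsilon-1}$, and invert \cite[Theorem 2]{Madhusudhanan2010a} to read each such intensity back as a unit-mark planar tier; for the final step you derive the dilation explicitly with the mapping theorem (your exponent $A^{\varepsilon/2}$ and the resulting $\hat{\lambda}_{c}$ and $\bar{\eta}=\eta A^{-\varepsilon/2}$ all check out, since the two representations differ only by the monotone bijection $x\mapsto x^{\varepsilon}$). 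Your version buys a self-contained derivation of the scaling step and makes the preservation of the serving-tier index transparent (monotone, tier-wise maps preserve the argmax); the paper's version is shorter because it never leaves the plane and can lean on the external lemma. No gaps.
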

\begin{proof}
See Appendix \ref{sub:proofStEquivalenceLemma2}.
\end{proof}

Lemmas \ref{lem:StochasticEqLemma1} and \ref{lem:StochasticEqLemma2}
are generalizations of \cite[Lemma 1]{Madhusudhanan2012} and \cite[Lemma 1]{Madhusudhanan2012a},
respectively, to the case where the hetnet also contains a closed-access
tier. Next, we compute the hetnet coverage probability. 
\begin{thm}
\label{thm:covprobMIRPexp}The hetnet coverage probability under MIRP
is 
\begin{eqnarray}
 &  & \mathbb{P}_{\mathrm{coverage}}^{\mathrm{MIRP}}=\nonumber \\
 &  & \sum_{k=1}^{K}\lambda_{k}P_{k}^{\frac{2}{\varepsilon_{k}}}\mathbb{E}\left[\Psi_{k}^{\frac{2}{\varepsilon_{k}}}\right]\times\nonumber \\
 &  & \int_{r=0}^{\infty}2\pi r\int_{\omega=-\infty}^{\infty}\frac{\mathrm{e}^{j\omega\eta r^{\varepsilon_{k}}}\left(1-\mathrm{e}^{-\frac{j\omega}{\beta k}}\right)}{j\omega2\pi}\times\nonumber \\
 &  & \mathrm{e}^{-\lambda_{c}P_{c}^{\frac{2}{\varepsilon_{c}}}\mathbb{E}\left[\Psi_{c}^{\frac{2}{\varepsilon_{c}}}\right]\pi r^{\frac{2\varepsilon_{k}}{\varepsilon_{c}}}G\left(j\omega,\frac{2}{\varepsilon_{c}}\right)}\times\nonumber \\
 &  & \mathrm{e}^{-\sum_{l=1}^{K}\lambda_{l}P_{l}^{\frac{2}{\varepsilon_{l}}}\mathbb{E}\left[\Psi_{l}^{\frac{2}{\varepsilon_{l}}}\right]\pi r^{\frac{2\varepsilon_{k}}{\varepsilon_{l}}}\ _{1}F_{1}\left(-\frac{2}{\varepsilon_{l}};1-\frac{2}{\varepsilon_{l}};j\omega\right)}d\omega dy,\label{eq:covProbMIRPAlternateExp}
\end{eqnarray}
 where $G\left(j\omega,\frac{2}{\varepsilon_{c}}\right)=\int_{t=0}^{\infty}\left(1-\mathrm{e}^{j\omega t}\right)\frac{2}{\varepsilon_{c}}t^{-1-\frac{2}{\varepsilon_{c}}}dt$.\end{thm}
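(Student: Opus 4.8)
The plan is to start from the stochastic-equivalence characterization of $\mathrm{SINR}_{T,I}$ given in Lemma~\ref{lem:StochasticEqLemma1}, which reduces the MIRP analysis to a two-tier network with unit transmit powers and fading, an open-access tier with density function $\tilde\lambda(r)=\sum_k\tilde\lambda_k(r)$, and a closed-access tier with density function $\hat\lambda(r)$. In this representation the coverage event $\{\mathrm{SINR}_{T,I}>\beta_T\}$ becomes, conditioned on the serving BS lying in tier $k$ at distance $\tilde R_{T,1}=r$, the event that $r^{-1} > \beta_k\bigl(\tilde I(r) + \hat I + \eta\bigr)$, where $\tilde I(r)$ is the open-access interference from all points beyond the serving BS and $\hat I$ is the closed-access interference. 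First I would write $\mathbb{P}^{\mathrm{MIRP}}_{\mathrm{coverage}}=\sum_{k=1}^K\mathbb{P}(\text{serving BS in tier }k,\ \mathrm{SINR}>\beta_k)$ and, using the Poisson intensity, express the tier-$k$ contribution as an integral over the location $r$ of the serving BS, with the differential probability that the nearest equivalent tier-$k$ point lies at $r$ given by $\tilde\lambda_k(r)\,\mathrm{e}^{-\Lambda(r)}dr$ where $\Lambda(r)$ is the cumulative intensity (here it is cleaner to work with the planar density $\lambda_k P_k^{2/\varepsilon_k}\mathbb{E}[\Psi_k^{2/\varepsilon_k}]$ and the annulus picture that produces the $2\pi r\,dr$ factor and the $r^{\varepsilon_k}$ rescaling of the noise).

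Next I would handle the two interference terms via their Laplace transforms / characteristic functions, computed conditionally on the serving-BS distance so that the other open-access points are those outside the disk defining the serving BS. For the closed-access tier this is essentially Lemma~\ref{lem:LTIcExpression} specialized to the single equivalent closed tier, restricted to $[r^{\varepsilon_k/\varepsilon_c}\cdot(\text{appropriate scaling}),\infty)$, yielding the factor $\mathrm{e}^{-\lambda_c P_c^{2/\varepsilon_c}\mathbb{E}[\Psi_c^{2/\varepsilon_c}]\pi r^{2\varepsilon_k/\varepsilon_c}G(j\omega,2/\varepsilon_c)}$; the function $G$ is exactly the Campbell-theorem integrand $\int_0^\infty(1-\mathrm{e}^{j\omega t})\tfrac{2}{\varepsilon_c}t^{-1-2/\varepsilon_c}\,dt$ after the change of variables from the radial coordinate to the received-power coordinate $t$. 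For the open-access interference from points farther than the serving BS, the conditional Laplace functional over a homogeneous PPP restricted to the exterior of a disk produces the confluent hypergeometric factor $\,_1F_1(-2/\varepsilon_l;1-2/\varepsilon_l;j\omega)$; I would derive this by writing the Campbell integral $\int_{\text{disk}^c}(1-\mathrm{e}^{j\omega(\text{power})})\,$(intensity)$\,$ and recognizing the resulting incomplete-Gamma-type integral, then summing over $l=1,\dots,K$ because the tiers are independent. To turn the probability-of-exceedance into the displayed $\omega$-integral I would use the standard inversion identity $\mathbb{P}(X<c)=\int_\omega \mathbb{E}[\mathrm{e}^{j\omega X}]\,\frac{1-\mathrm{e}^{-j\omega c}}{j\omega2\pi}\,d\omega$ (here $X$ is the scaled interference-plus-noise and $c$ corresponds to $1/\beta_k$), which accounts for the $\frac{\mathrm{e}^{j\omega\eta r^{\varepsilon_k}}(1-\mathrm{e}^{-j\omega/\beta_k})}{j\omega2\pi}$ factor once the noise term $\eta$ is pulled out and rescaled by $r^{\varepsilon_k}$.

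Putting these pieces together — the $\sum_k$ over serving tiers, the $\int_{r}2\pi r\,dr$ over the serving-BS location weighted by the tier-$k$ planar density, the $\omega$-inversion integral, and the product of the two interference characteristic-function factors — gives exactly \eqref{eq:covProbMIRPAlternateExp} (modulo the harmless typo ``$d\omega dy$'' for ``$d\omega dr$''). Throughout I would justify the interchange of the $\omega$- and $r$-integrals (and the expectation) by boundedness/absolute integrability, exactly as invoked in the proof of Theorem~\ref{thm:covProbExpression}, noting that the moment conditions $\mathbb{E}[\Psi_k^{2/\varepsilon_k}]<\infty$ and $\varepsilon_k,\varepsilon_c>2$ guarantee the relevant integrals converge. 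I expect the main obstacle to be the conditional-interference computation: one must be careful that conditioning on the serving BS being the MIRP-maximizer means the remaining open-access points contribute interference only from the region where their received power is below that of the serving BS, and translating that constraint through the per-tier radial rescalings $r\mapsto r^{\varepsilon_k/\varepsilon_l}$ into the arguments of $G$ and $\,_1F_1$ is where the bookkeeping is delicate; the rest is routine application of Campbell's theorem, the PPP void probability, and Fourier inversion.
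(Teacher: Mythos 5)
Your proposal is correct and follows essentially the route the paper intends: the paper itself only cites \cite{Madhusudhanan2012} for this proof, and the same machinery — the 1-D stochastic equivalence of Lemma \ref{lem:StochasticEqLemma1}, conditioning on the MIRP serving BS with its void probability, Campbell's theorem for the conditional interference, and Fourier inversion of the scaled interference-plus-noise — is exactly what is spelled out in Appendix \ref{sub:proofCovprobMIRPBetaGt1} for the analogous Theorem \ref{thm:covprobMIRPBetaGt1}. One small phrasing slip to watch: the closed-access Campbell integral is \emph{not} restricted to the exterior of any region, since every closed-access BS interferes regardless of its received power relative to the serving BS (which is precisely why $G$ integrates over $t\in(0,\infty)$), whereas the open-access constraint truncates the integral to $t\in(0,1]$ and, only when combined with the void-probability factor $\mathrm{e}^{-\Lambda(r)}$ you introduced for the serving-BS density, yields the full $\ _{1}F_{1}\left(-\frac{2}{\varepsilon_{l}};1-\frac{2}{\varepsilon_{l}};j\omega\right)$ rather than $\ _{1}F_{1}-1$.
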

\begin{proof}
The proof is along the same lines as \cite[Theorem 1]{Madhusudhanan2012},
and is not shown here.
\end{proof}
The above expression can be greatly simplified under certain special
cases, and the following results present these cases.
\begin{cor}
\label{cor:covprobMIRPsameBeta}When $\left\{ \varepsilon_{k}\right\} _{k=1}^{K}=\varepsilon_{c}=\varepsilon$,
the hetnet coverage probability is 
\begin{eqnarray}
 &  & \mathbb{P}_{\mathrm{coverage}}^{\mathrm{MIRP}}=\nonumber \\
 &  & \sum_{k=1}^{K}\frac{\lambda_{k}P_{k}^{\frac{2}{\varepsilon}}\mathbb{E}\left[\Psi_{k}^{\frac{2}{\varepsilon}}\right]\int_{\omega=-\infty}^{\infty}\frac{\left(1-\mathrm{e}^{-\frac{j\omega}{\beta_{k}}}\right)}{j\omega2\pi}H\left(j\omega\right)d\omega}{\sum_{l=1}^{K}\lambda_{l}P_{l}^{\frac{2}{\varepsilon}}\mathbb{E}\left[\Psi_{l}^{\frac{2}{\varepsilon}}\right]},\label{eq:covProbExpressionMIRPConnectivitySpCase1}\\
 &  & H\left(j\omega\right)=\int_{r=0}^{\infty}2\pi r\times\nonumber \\
 &  & \mathrm{e}^{j\omega\bar{\eta}r^{\varepsilon}-\pi r^{2}\left(\ _{1}F_{1}\left(-\frac{2}{\varepsilon};1-\frac{2}{\varepsilon};j\omega\right)+\hat{\lambda}_{c}G\left(j\omega,\frac{2}{\varepsilon}\right)\right)}dr
\end{eqnarray}
where $\left.H\left(j\omega\right)\right|_{\bar{\eta}=0}=\frac{1}{\ _{1}F_{1}\left(-\frac{2}{\varepsilon};1-\frac{2}{\varepsilon};j\omega\right)+\hat{\lambda}_{c}G\left(j\omega,\frac{2}{\varepsilon}\right)}$,
$\bar{\eta}$ and $\hat{\lambda}_{c}$ are from Lemma \ref{lem:StochasticEqLemma2}
and $G\left(\cdot,\cdot\right)$ is defined in Theorem \ref{thm:covprobMIRPexp}.
When $\left\{ \beta_{k}\right\} _{k=1}^{K}=\beta$ or $\left\{ \beta_{k}\right\} _{k=1}^{K}\ge1$,
(\ref{eq:covProbExpressionMIRPConnectivitySpCase1}) is equal to $\mathbb{P}_{\mathrm{coverage}}^{\mathrm{max-SINR}}$.
When there is no closed-access tier $\left(\hat{\lambda}_{c}=0\right)$,
(\ref{eq:covProbExpressionMIRPConnectivitySpCase1}) is equal to the
single-tier network coverage probability (see \cite[Corollary 4]{Madhusudhanan2010a})
and is independent of the transmission powers and fading factors of
the BSs in the system.\end{cor}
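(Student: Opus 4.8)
The plan is to obtain (\ref{eq:covProbExpressionMIRPConnectivitySpCase1}) as a specialization of the general MIRP formula (\ref{eq:covProbMIRPAlternateExp}) of Theorem \ref{thm:covprobMIRPexp}, and then to settle the three subsidiary claims separately: the $\bar\eta=0$ form of $H$, the identification with $\mathbb{P}_{\mathrm{coverage}}^{\mathrm{max-SINR}}$, and the $\hat\lambda_c=0$ reduction.

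First I would put $\varepsilon_1=\cdots=\varepsilon_K=\varepsilon_c=\varepsilon$ in (\ref{eq:covProbMIRPAlternateExp}). All radial exponents then collapse ($r^{\varepsilon_k}\to r^{\varepsilon}$ in the noise factor and $r^{2\varepsilon_k/\varepsilon_c},\,r^{2\varepsilon_k/\varepsilon_l}\to r^2$ in the two interference exponents), so the inner $r$-integral loses its $k$-dependence entirely; the only surviving $k$-dependence is the prefactor $\lambda_k P_k^{2/\varepsilon}\mathbb{E}[\Psi_k^{2/\varepsilon}]$ and the factor $1-e^{-j\omega/\beta_k}$. Writing $\Lambda\triangleq\sum_{l=1}^K\lambda_l P_l^{2/\varepsilon}\mathbb{E}[\Psi_l^{2/\varepsilon}]$ and substituting $\rho=\sqrt{\Lambda}\,r$ sends $\pi r^2\Lambda\mapsto\pi\rho^2$, $2\pi r\,dr\mapsto\Lambda^{-1}2\pi\rho\,d\rho$ and $\eta r^{\varepsilon}\mapsto\bar\eta\rho^{\varepsilon}$ with $\bar\eta=\eta\Lambda^{-\varepsilon/2}$, while the closed-access coefficient becomes $\hat\lambda_c=\lambda_c P_c^{2/\varepsilon}\mathbb{E}[\Psi_c^{2/\varepsilon}]/\Lambda$ -- exactly the normalizations already produced by the equivalence (\ref{eq:SINRDistEqDensity1System}) of Lemma \ref{lem:StochasticEqLemma2}. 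The factor $\Lambda^{-1}$ merges with the prefactor to give the stated weight $\lambda_k P_k^{2/\varepsilon}\mathbb{E}[\Psi_k^{2/\varepsilon}]/\Lambda$, and the bracketed $\rho$-integral is exactly $H(j\omega)$, yielding (\ref{eq:covProbExpressionMIRPConnectivitySpCase1}). As a cross-check one can route through Lemma \ref{lem:StochasticEqLemma2} directly: conditioning on the serving tier $T$, the nearest point of the superposed open-access process lies in tier $k$ with probability $\lambda_k P_k^{2/\varepsilon}\mathbb{E}[\Psi_k^{2/\varepsilon}]/\Lambda$ by thinning of a Poisson superposition (this ratio is constant in the radial coordinate precisely because the exponents now agree), and given $T=k$ the SINR is distributed as $\mathrm{SINR}(2,1,\hat\lambda_c,\bar\eta,1)$, for which $\mathbb{P}(\mathrm{SINR}(2,1,\hat\lambda_c,\bar\eta,1)>\beta_k)=\int_{\mathbb{R}}\frac{1-e^{-j\omega/\beta_k}}{j\omega2\pi}H(j\omega)\,d\omega$ by the same inversion used for Theorem \ref{thm:covprobMIRPexp}. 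The $\bar\eta=0$ value of $H$ is then the Gaussian-type evaluation $\int_0^{\infty}2\pi\rho\,e^{-\pi\rho^2 C}\,d\rho=1/C$ with $C={}_1F_1(-\tfrac2\varepsilon;1-\tfrac2\varepsilon;j\omega)+\hat\lambda_c G(j\omega,\tfrac2\varepsilon)$, valid since $\Re C>0$ on the inversion contour.

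For the identification with $\mathbb{P}_{\mathrm{coverage}}^{\mathrm{max-SINR}}$ I would argue at the level of events. Since $\mathrm{SINR}_{ki}=S_{ki}/(I_o+I_c+\eta-S_{ki})$ with $S_{ki}=P_{ok}\Psi_{ki}R_{ki}^{-\varepsilon_k}$ is strictly increasing in the received power $S_{ki}$, the BS chosen by MIRP, $(T,I)=\mathrm{argmax}_{k,i}S_{ki}$, is also the SINR-maximizing BS; hence when all thresholds coincide, $\{\beta_k\}=\beta$, we get $\bigcup_{k,i}\{\mathrm{SINR}_{ki}>\beta\}=\{\mathrm{SINR}_{T,I}>\beta\}=\{\mathrm{SINR}_{T,I}>\beta_T\}$, so the max-SINR and MIRP coverage events -- and thus their probabilities -- agree. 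When instead $\beta_k\ge1$ for all $k$ (not necessarily equal), $\mathrm{SINR}_{ki}>\beta_k$ forces $S_{ki}>\frac{\beta_k}{1+\beta_k}(I_o+I_c+\eta)\ge\frac12(I_o+I_c+\eta)$; two distinct open-access BSs cannot both exceed half of $I_o+I_c+\eta$, since their received powers are summands of $I_o$ and would otherwise sum to more than $I_o+I_c+\eta\ge I_o$; so at most one BS meets its threshold, and that BS, having received power above half the total, is necessarily the MIRP-selected BS. Hence $\bigcup_{k,i}\{\mathrm{SINR}_{ki}>\beta_k\}\subseteq\{\mathrm{SINR}_{T,I}>\beta_T\}$, the reverse inclusion is trivial, and the probabilities again coincide with (\ref{eq:covProbExpressionMIRPConnectivitySpCase1}).

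Finally, for $\hat\lambda_c=0$ the $G$-term drops out, so $H(j\omega)=\int_0^{\infty}2\pi\rho\,e^{j\omega\bar\eta\rho^{\varepsilon}-\pi\rho^2\,{}_1F_1(-\frac2\varepsilon;1-\frac2\varepsilon;j\omega)}\,d\rho$ involves the network only through $\varepsilon$ and $\bar\eta$; invoking Lemma \ref{lem:StochasticEqLemma2} with $\lambda_c=0$ (equation (\ref{eq:SINRDistEqDensity1System2})) identifies the hetnet SINR law with that of a single open-access tier, so the coverage probability equals the single-tier expression of \cite[Corollary 4]{Madhusudhanan2010a}, and in the interference-limited case $\bar\eta=0$ gives $H(j\omega)=1/{}_1F_1(-\frac2\varepsilon;1-\frac2\varepsilon;j\omega)$, with the weights $\lambda_k P_k^{2/\varepsilon}\mathbb{E}[\Psi_k^{2/\varepsilon}]/\Lambda$ summing to one, so the result is manifestly free of the transmit powers and fading laws. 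I expect the main obstacle to be bookkeeping rather than a single sharp estimate: getting the change of variables to reproduce exactly the $\bar\eta$ and $\hat\lambda_c$ of Lemma \ref{lem:StochasticEqLemma2}, and justifying the interchange of the conditionally (not absolutely) convergent $\omega$- and $r$-integrals; a secondary subtlety is making the ``at most one dominant BS'' step airtight in the boundary cases ($\eta=0$, exact ties in received power).
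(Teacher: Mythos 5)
Your proposal is correct and follows essentially the same route as the paper, whose entire proof of this corollary is the single sentence ``exchange the order of integrations in (\ref{eq:covProbMIRPAlternateExp}) and simplify'': you carry out that specialization and the rescaling $\rho=\sqrt{\Lambda}\,r$ that produces $\bar\eta$, $\hat\lambda_c$ and $H(j\omega)$ explicitly, and your event-level arguments for the max-SINR identification (monotonicity of SINR in received power for equal thresholds; the at-most-one-dominant-BS argument for $\beta_k\ge 1$, which is \cite[Lemma 1]{Dhillon2012a}) match the justifications the paper invokes elsewhere. No gaps beyond the integrability/interchange caveats you already flag, which the paper does not address either.
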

\begin{proof}
The result is obtained by exchanging the order of integrations in
(\ref{eq:covProbMIRPAlternateExp}) and simplifying. 
\end{proof}
The following theorem shows another scenario when the hetnet coverage
probabilities are identical for the max-SINR and MIRP connectivity
models. 
\begin{thm}
\label{thm:covprobMIRPBetaGt1}When $\beta_{k}\ge1,\ \forall\ k=1,\cdots,K$,
the hetnet coverage probability is given by 
\begin{eqnarray}
 &  & \mathbb{P}_{\mathrm{coverage}}^{\mathrm{max-SINR}}=\mathbb{P}_{\mathrm{coverage}}^{\mathrm{MIRP}}=\sum_{k=1}^{K}\frac{\lambda_{ok}P_{ok}^{\frac{2}{\varepsilon_{k}}}\mathbb{E}\left[\Psi_{k}^{\frac{2}{\varepsilon_{k}}}\right]\beta_{k}^{-\varepsilon_{k}}}{\Gamma\left(1+\frac{2}{\varepsilon_{k}}\right)}\times\nonumber \\
 &  & \int_{r=0}^{\infty}2\pi r\times\mathrm{e}^{-\eta r^{\varepsilon_{k}}-\frac{\lambda_{c}\pi P_{c}^{\frac{2}{\varepsilon_{c}}}\mathbb{E}\left[\Psi_{c}^{\frac{2}{\varepsilon_{c}}}\right]r^{\frac{2\varepsilon_{k}}{\varepsilon_{c}}}}{\Gamma\left(1+\frac{2}{\varepsilon_{c}}\right)\mathrm{sinc}\left(\frac{2\pi}{\varepsilon_{c}}\right)}}\nonumber \\
 &  & \mathrm{e}^{-\sum_{l=1}^{K}\frac{\lambda_{ol}\pi P_{ol}^{\frac{2}{\varepsilon_{l}}}\mathbb{E}\left[\Psi_{l}^{\frac{2}{\varepsilon_{l}}}\right]r^{\frac{2\varepsilon_{k}}{\varepsilon_{l}}}}{\Gamma\left(1+\frac{2}{\varepsilon_{l}}\right)\mathrm{sinc}\left(\frac{2\pi}{\varepsilon_{l}}\right)}}dr,\label{eq:covProbExpressionBetaGt1}
\end{eqnarray}
and in the interference limited case $\left(\eta=0\right)$ when $\left\{ \varepsilon_{k}\right\} _{k=1}^{K}=\varepsilon_{c}=\varepsilon$
\begin{eqnarray}
 &  & \mathbb{P}_{\mathrm{coverage}}^{\mathrm{max-SINR}}=\mathbb{P}_{\mathrm{coverage}}^{\mathrm{MIRP}}\nonumber \\
 &  & =\sum_{k=1}^{K}\frac{\lambda_{ok}P_{ok}^{\frac{2}{\varepsilon}}\mathbb{E}\left[\Psi_{k}^{\frac{2}{\varepsilon}}\right]\mathrm{sinc}\left(\frac{2\pi}{\varepsilon}\right)\beta_{k}^{-\varepsilon}}{\lambda_{c}P_{c}^{\frac{2}{\varepsilon}}\mathbb{E}\left[\Psi_{c}^{\frac{2}{\varepsilon}}\right]+\sum_{l=1}^{K}\lambda_{ol}P_{ol}^{\frac{2}{\varepsilon}}\mathbb{E}\left[\Psi_{l}^{\frac{2}{\varepsilon}}\right]}.\label{eq:covprobExpBetaGt1NoNoise}
\end{eqnarray}
\end{thm}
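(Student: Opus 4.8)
The plan is to first show that, when every $\beta_{k}\ge1$, the max-SINR and MIRP connectivity models produce the \emph{same} coverage event, and then to evaluate $\mathbb{P}_{\mathrm{coverage}}^{\mathrm{MIRP}}$ by a Palm-calculus computation which, precisely because $\beta_{k}\ge1$, collapses the oscillatory integral of Theorem~\ref{thm:covprobMIRPexp} to an elementary one. For the first point: if $\mathrm{SINR}_{ki}>\beta_{k}\ge1$ for some tagged BS $(k,i)$, then by (\ref{eq:SINRDefinition}) its received power $P_{ok}\Psi_{ki}R_{ki}^{-\varepsilon_{k}}$ strictly exceeds $I_{o}+I_{c}+\eta-P_{ok}\Psi_{ki}R_{ki}^{-\varepsilon_{k}}\ge I_{o}-P_{ok}\Psi_{ki}R_{ki}^{-\varepsilon_{k}}$, hence it strictly exceeds the received power of every other open-access BS; so at most one BS can satisfy its own threshold, the union in (\ref{eq:maxSINRcoverageDef}) is disjoint, and whenever the MS is covered the covering BS is necessarily the one achieving the maximum instantaneous received power. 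Therefore $\mathbb{P}_{\mathrm{coverage}}^{\mathrm{max-SINR}}=\mathbb{P}_{\mathrm{coverage}}^{\mathrm{MIRP}}$ (cf.\ (\ref{eq:tierExpCoverageExpMIRP})), and moreover $\mathbb{P}_{\mathrm{coverage}}^{\mathrm{max-SINR}}=\sum_{k=1}^{K}\mathbb{E}\big[\sum_{i\ge1}\mathcal{I}(\mathrm{SINR}_{ki}>\beta_{k})\big]$ because the indicator sum takes values in $\{0,1\}$.

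Next I would apply the Campbell--Mecke / Slivnyak formula \cite{Baccelli2009} to each open-access Poisson process: since the reduced Palm measure of a PPP is the PPP itself, and since the denominator in (\ref{eq:SINRDefinition}) already excludes the tagged BS's own power, turning the $k$-th expectation into a point-process integral introduces no extra term, giving $\mathbb{P}_{\mathrm{coverage}}^{\mathrm{max-SINR}}=\sum_{k=1}^{K}\lambda_{ok}\int_{0}^{\infty}2\pi r\,\mathbb{P}\big(P_{ok}\Psi_{k}r^{-\varepsilon_{k}}>\beta_{k}(I_{o}+I_{c}+\eta)\big)\,dr$, with $\Psi_{k}$ the (independent) fading of the tagged BS and $I_{o},I_{c}$ the interferences of the original configuration. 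By Corollary~\ref{cor:arbitraryFadingCorollary} the arbitrary fading is absorbed into the equivalent densities $\lambda_{ol}\mathbb{E}[\Psi_{l}^{2/\varepsilon_{l}}]/\Gamma(1+2/\varepsilon_{l})$ (and likewise for the closed tier), after which $\Psi_{k}$ is a unit-mean exponential independent of $I_{o}+I_{c}$, so the inner probability equals $\mathbb{E}[\exp(-\beta_{k}r^{\varepsilon_{k}}(I_{o}+I_{c}+\eta)/P_{ok})]=\mathcal{L}_{I_{o}+I_{c}+\eta}\big(\beta_{k}r^{\varepsilon_{k}}/P_{ok}\big)$, where this Laplace transform is the $u\to\infty$ limit of the first transform in Lemma~\ref{lem:LTmaxSINRnearestBS} together with Lemma~\ref{lem:LTIcExpression}, i.e.\ a product of stretched exponentials in $s$. (Equivalently, this is the assertion that the $\omega$-integral inside Theorem~\ref{thm:covprobMIRPexp} is elementary when $\beta_{k}\ge1$.)

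It then remains to rescale and simplify. In the $k$-th summand substitute $r\mapsto r\,(P_{ok}/\beta_{k})^{1/\varepsilon_{k}}$, which pulls the factor $\lambda_{ok}(P_{ok}/\beta_{k})^{2/\varepsilon_{k}}$ outside and makes the argument of $\mathcal{L}_{I_{o}+I_{c}+\eta}$ equal to $r^{\varepsilon_{k}}$; the exponent then contains exactly the mixed powers $r^{2\varepsilon_{k}/\varepsilon_{l}}$ and $r^{2\varepsilon_{k}/\varepsilon_{c}}$ of (\ref{eq:covProbExpressionBetaGt1}), and the factors $\Gamma(1-2/\varepsilon)$ appearing in $\log\mathcal{L}$ are rewritten via Euler's reflection identity $\Gamma(1+z)\Gamma(1-z)=\pi z/\sin\pi z$, i.e.\ $\Gamma(1-2/\varepsilon)=1/[\Gamma(1+2/\varepsilon)\,\mathrm{sinc}(2\pi/\varepsilon)]$, recovering (\ref{eq:covProbExpressionBetaGt1}); convergence of the $r$-integral follows from $\varepsilon_{l},\varepsilon_{c}>2$ and the finite fractional moments assumed in Section~\ref{sec:System-Model}. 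Finally, for $\eta=0$ and $\{\varepsilon_{k}\}=\varepsilon_{c}=\varepsilon$ every power of $r$ in the exponent becomes $r^{2}$, so $\int_{0}^{\infty}2\pi r\,e^{-Ar^{2}}dr=\pi/A$ with $A=\pi\Gamma(1-2/\varepsilon)\big(\lambda_{c}P_{c}^{2/\varepsilon}\mathbb{E}[\Psi_{c}^{2/\varepsilon}]+\sum_{l}\lambda_{ol}P_{ol}^{2/\varepsilon}\mathbb{E}[\Psi_{l}^{2/\varepsilon}]\big)$, and using $1/[\Gamma(1+2/\varepsilon)\Gamma(1-2/\varepsilon)]=\mathrm{sinc}(2\pi/\varepsilon)$ yields (\ref{eq:covprobExpBetaGt1NoNoise}).

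The main obstacle is the reduction in the second paragraph to a single evaluation of $\mathcal{L}_{I_{o}+I_{c}+\eta}$: this is what makes the result clean, and it relies on two ingredients being used in the right order --- the disjointness fact coming from $\beta_{k}\ge1$ (so that $\mathbb{P}_{\mathrm{coverage}}^{\mathrm{max-SINR}}=\sum_{k}\mathbb{E}[\sum_{i}\mathcal{I}(\mathrm{SINR}_{ki}>\beta_{k})]$ exactly, and the Palm computation is an identity rather than a union bound), and Corollary~\ref{cor:arbitraryFadingCorollary} (so that $\mathbb{E}_{\Psi_{k}}[F_{I_{o}+I_{c}+\eta}(c\Psi_{k})]$ becomes a genuine Laplace transform rather than a Fourier integral over an unknown c.d.f.). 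One should also keep in mind that, because the path-loss exponents may differ across tiers, the rescaling $r\mapsto r(P_{ok}/\beta_{k})^{1/\varepsilon_{k}}$ must be done summand by summand --- this is exactly why the mixed exponents $2\varepsilon_{k}/\varepsilon_{l}$ survive in (\ref{eq:covProbExpressionBetaGt1}) --- while all interchanges of summation and integration are justified by nonnegativity.
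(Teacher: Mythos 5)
Your proof is correct and takes essentially the same route as the paper's: both arguments rest on the fact that $\beta_{k}\ge1$ forces at most one BS to exceed its threshold (so the union becomes a disjoint sum and max-SINR coincides with MIRP), then invoke Corollary \ref{cor:arbitraryFadingCorollary} to trade arbitrary fading for unit-mean exponential fading with rescaled densities, and finally evaluate the per-tier term as a Campbell--Mecke integral of the Laplace transform of $I_{o}+I_{c}+\eta$ — the paper merely routes this last computation through the 1-D equivalent process of Lemma \ref{lem:StochasticEqLemma1} rather than working directly in the plane as you do. One remark: your change of variables correctly produces the factor $\beta_{k}^{-2/\varepsilon_{k}}$ (consistent with the $\eta=0$, equal-exponent special case in the prior literature), so the exponents $\beta_{k}^{-\varepsilon_{k}}$ in (\ref{eq:covProbExpressionBetaGt1}) and $\beta_{k}^{-\varepsilon}$ in (\ref{eq:covprobExpBetaGt1NoNoise}) appear to be typographical errors in the statement rather than a gap in your argument.
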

\begin{proof}
Firstly, from \cite[Lemma 1]{Dhillon2012a}, when $\beta_{k}\ge1$,
there exists at most one open-access BS that can have an SINR above
the corresponding threshold. As a result, hetnet coverage probability
in (\ref{eq:maxSINRcoverageDef}) becomes $\mathbb{P}_{\mathrm{coverage}}^{\mathrm{max-SINR}}=\sum_{k=1}^{K}\mathbb{P}\left(\left\{ \mathrm{SINR}_{k}\left(\max\right)>\beta_{k}\right\} \right)=\mathbb{P}_{\mathrm{coverage}}^{\mathrm{MIRP}}$.
See Appendix \ref{sub:proofCovprobMIRPBetaGt1} to derive (\ref{eq:covProbExpressionBetaGt1}),
which simplifies to (\ref{eq:covprobExpBetaGt1NoNoise}) when $\eta=0$.
\end{proof}

In the above result, (\ref{eq:covProbExpressionBetaGt1}) can be easily
computed numerically and is an extension of \cite[Theorem 1]{Dhillon2012a}
to arbitrary fading and path-loss case. The study of the MIRP connectivity
has given many interesting insights and simplifications for the max-SINR
case. 

Further, other performance metrics pertinent to hetnets such as the
average fraction of load carried by each tier in the hetnet and the
area-averaged rate acheived by an MS that is in coverage in a hetnet
can also be derived using the results in this section. We refer the
reader to \cite[Theorems 2, 3 and 4]{Madhusudhanan2012c} for these
results. 

Now, we study the MBRP connectivity in further detail, and derive
interesting results and relationships with the hetnet performance
under nearest-BS connectivity.

\subsection{\label{sub:MARP}SINR characterization under MBRP connectivity}

From the definition of the hetnet coverage probability under MIRP
and MBRP, the stochastic ordering result can be extended beyond Proposition
\ref{pro:stOrderingmaxSINRNearest} as follows.
\begin{prop}
For the same system parameters, when $\left\{ \beta_{k}\right\} _{k=1}^{K}=\beta$
or $\left\{ \beta_{k}\right\} _{k=1}^{K}\ge1$, $\mathbb{P}_{\mathrm{coverage}}^{\mathrm{MBRP}}=\mathbb{P}_{\mathrm{coverage}}^{\mathrm{nearest}}<\mathbb{P}_{\mathrm{coverage}}^{\mathrm{max-SINR}}=\mathbb{P}_{\mathrm{coverage}}^{\mathrm{MIRP}}$,
for the biasing factors under MBRP as $B_{ok}=\frac{1}{P_{ok}\mathbb{E}\left[\Psi_{k}\right]}$,
for $k=1,\dots,\ K$ open-access tiers. When $\left\{ B_{ok}\right\} _{k=1}^{K}=1$,
MBRP connectivity is same as the popular maximum averaged received
power (MARP) connectivity model. 
\end{prop}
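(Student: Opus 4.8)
The plan is to establish the three claims in the Proposition in sequence: first the equality $\mathbb{P}_{\mathrm{coverage}}^{\mathrm{MBRP}}=\mathbb{P}_{\mathrm{coverage}}^{\mathrm{nearest}}$ under the stated biasing, then the strict inequality with $\mathbb{P}_{\mathrm{coverage}}^{\mathrm{max-SINR}}$, and finally the equality $\mathbb{P}_{\mathrm{coverage}}^{\mathrm{max-SINR}}=\mathbb{P}_{\mathrm{coverage}}^{\mathrm{MIRP}}$, which is exactly the content of Corollary \ref{cor:covprobMIRPsameBeta} (for $\{\beta_k\}=\beta$) together with Theorem \ref{thm:covprobMIRPBetaGt1} (for $\{\beta_k\}\ge 1$). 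So the only genuinely new thing to prove is the first equality; everything else is assembled from earlier results plus Proposition \ref{pro:stOrderingmaxSINRNearest}.

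For the first equality, I would start from the definition of the MBRP serving tier in (\ref{eq:tierExpMBRP}): $T=\mathrm{argmax}_{k}\,P_{ok}\mathbb{E}[\Psi_k]R_{k1}^{-\varepsilon_k}B_{ok}$. Substituting $B_{ok}=1/(P_{ok}\mathbb{E}[\Psi_k])$ collapses this to $T=\mathrm{argmax}_{k}\,R_{k1}^{-\varepsilon_k}$, i.e.\ the MS connects to whichever tier's nearest BS has the largest path-gain $R_{k1}^{-\varepsilon_k}$. The key observation is then that the event $\{\mathrm{SINR}_{T,1}>\beta_T\}$ coincides with $\bigcup_{k=1}^K\{\mathrm{SINR}_{k1}>\beta_k\}$ whenever the $\beta_k$ are such that at most one of the nearest-BS SINRs can exceed threshold. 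Under $\{\beta_k\}\ge 1$ this is immediate from \cite[Lemma 1]{Dhillon2012a} (the same argument invoked in the proof of Theorem \ref{thm:covprobMIRPBetaGt1}, applied to the nearest-BS subfamily rather than all BSs): if two distinct nearest BSs both had SINR $\ge 1$, each would contribute more than half of $I_o+I_c+\eta$, a contradiction. Under the unbiased case $\{\beta_k\}=\beta$ I would argue that the MBRP-selected BS (the one maximizing $R_{k1}^{-\varepsilon_k}$, equivalently minimizing the normalized distance) is precisely the nearest BS with the strongest received-power-per-unit-fade, and then invoke the stochastic-equivalence machinery of Lemma \ref{lem:StochasticEqLemma1} — reducing every tier to unit power/fade/exponent — so that ``largest $R_{k1}^{-\varepsilon_k}$'' becomes ``globally nearest BS'' in the superposed process, at which point the MBRP event and the nearest-BS union event are literally the same event on the sample space. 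Once the two events coincide pointwise, the probabilities are equal.

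The strict inequality $\mathbb{P}_{\mathrm{coverage}}^{\mathrm{MBRP}}<\mathbb{P}_{\mathrm{coverage}}^{\mathrm{max-SINR}}$ is inherited: we have just shown $\mathbb{P}_{\mathrm{coverage}}^{\mathrm{MBRP}}=\mathbb{P}_{\mathrm{coverage}}^{\mathrm{nearest}}$, and Proposition \ref{pro:stOrderingmaxSINRNearest} gives $\mathbb{P}_{\mathrm{coverage}}^{\mathrm{nearest}}<\mathbb{P}_{\mathrm{coverage}}^{\mathrm{max-SINR}}$, so composition finishes this part. The final equality $\mathbb{P}_{\mathrm{coverage}}^{\mathrm{max-SINR}}=\mathbb{P}_{\mathrm{coverage}}^{\mathrm{MIRP}}$ is a direct citation: in the $\{\beta_k\}=\beta$ case it is stated in Corollary \ref{cor:covprobMIRPsameBeta}, and in the $\{\beta_k\}\ge 1$ case it is the first line of Theorem \ref{thm:covprobMIRPBetaGt1}. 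The last sentence of the Proposition, identifying MBRP with MARP when $\{B_{ok}\}=1$, is just unwinding definitions: setting $B_{ok}=1$ in (\ref{eq:tierExpMBRP}) leaves $T=\mathrm{argmax}_k P_{ok}\mathbb{E}[\Psi_k]R_{k1}^{-\varepsilon_k}$, which is by definition the maximum-averaged-received-power rule, so no argument is needed beyond the observation.

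The main obstacle I anticipate is making the unbiased-case ($\{\beta_k\}=\beta$) event identity fully rigorous. In the $\{\beta_k\}\ge 1$ case the ``at most one BS in coverage'' lemma does all the work cleanly. But when $\beta<1$, several nearest BSs could simultaneously exceed threshold, so one cannot argue via uniqueness of the covering BS; instead one must verify that the MBRP tie-breaking rule is consistent with the union event — i.e.\ that the union $\bigcup_k\{\mathrm{SINR}_{k1}>\beta\}$ is nonempty \emph{if and only if} the specific MBRP-chosen BS has $\mathrm{SINR}>\beta$. This is true because, in the unbiased case, among the $K$ nearest BSs the one with the largest $R_{k1}^{-\varepsilon_k}$ (after the power/fade normalization of Lemma \ref{lem:StochasticEqLemma1}) also has the largest received power and hence the largest SINR (the interference-plus-noise denominator $I_o+I_c+\eta$ minus the own signal is essentially common, and larger signal gives larger SINR monotonically); therefore if any nearest BS clears threshold $\beta$, the MBRP-selected one does too. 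Spelling out this monotonicity — in particular handling the fact that the denominator $I_o+I_c+\eta-(\text{own signal})$ does depend slightly on which BS is ``own'' — is the delicate step, and I would lean on the stochastic-equivalence reduction of Lemma \ref{lem:StochasticEqLemma1} to keep the bookkeeping manageable.
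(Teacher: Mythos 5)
Your decomposition is sensible, and the parts you delegate are fine: the strict inequality via Proposition \ref{pro:stOrderingmaxSINRNearest}, the equality $\mathbb{P}_{\mathrm{coverage}}^{\mathrm{max-SINR}}=\mathbb{P}_{\mathrm{coverage}}^{\mathrm{MIRP}}$ via Theorem \ref{thm:covprobMIRPBetaGt1} and Corollary \ref{cor:covprobMIRPsameBeta}, and the MARP identification by setting $B_{ok}=1$ in (\ref{eq:tierExpMBRP}). (The paper itself states this proposition with no proof, so you correctly identified that the first equality is the only thing requiring an argument.) That first equality is, however, exactly where your proof has a genuine gap, and you half-diagnosed it yourself. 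With $B_{ok}=1/(P_{ok}\mathbb{E}[\Psi_k])$ the MBRP rule selects $T=\mathrm{argmax}_{k}R_{k1}^{-\varepsilon_{k}}$: it normalizes by the \emph{mean} fading. The nearest-BS union event $\bigcup_{k}\{\mathrm{SINR}_{k1}>\beta_{k}\}$ is instead governed by the \emph{instantaneous} received powers $P_{ok}\Psi_{k1}R_{k1}^{-\varepsilon_{k}}$, since $\mathrm{SINR}$ in (\ref{eq:SINRDefinition}) is monotone in the own received power. These two orderings of the $K$ per-tier nearest BSs do not coincide when fading is random, so the event identity you need cannot be established pointwise.

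Both of your sub-arguments break on this same point. In the $\{\beta_k\}\ge 1$ case, the uniqueness lemma only says at most one nearest BS clears its threshold; it does not say that this BS is the one MBRP selected. (The analogous step works for MIRP because $\mathrm{SINR}>1$ forces that BS to carry more than half the total received power, hence to be the argmax of \emph{instantaneous} power, which is precisely the MIRP selection; MBRP is not that argmax.) In the common-$\beta$ case, your appeal to Lemma \ref{lem:StochasticEqLemma1} does not rescue the step: that lemma's map $\tilde{R}=(P_{k}\Psi_{k})^{-1}R_{k}^{\varepsilon_{k}}$ absorbs the \emph{realized} fading, so ``globally nearest in $\tilde{R}$'' means ``largest instantaneous power,'' which is not the MBRP criterion. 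A concrete failure: two tiers, equal $\varepsilon$ and $P$, exponential fading; with positive probability $R_{11}<R_{21}$ (so $T=1$) while $\Psi_{21}$ is large enough that $\mathrm{SINR}_{21}>\beta$ and $\mathrm{SINR}_{11}\le\beta$. On that event the nearest-BS union holds but the MBRP event fails, so $\mathbb{P}_{\mathrm{coverage}}^{\mathrm{MBRP}}$ is strictly smaller. The equality does hold if the fading is deterministic, or if the MBRP metric is replaced by the instantaneous received power among the per-tier nearest BSs; to prove the proposition as literally stated you would need one of those additional hypotheses, and your argument as written does not supply (or flag) either.
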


It is clear from equations (\ref{eq:nearestBScoverageEqexpr}) and
(\ref{eq:derivativeLTnearestBS}) that it is tedious to compute the
hetnet coverage probability under nearest-BS connectivity, even with
numerical integration, for arbitrary fading case. With slight modifications
to the approach in Theorem \ref{thm:covProbExpression} and \cite[Theorem 1]{Nguyen2010},
hetnet coverage probability with MBRP can also be derived. These expressions
do not simplify significantly beyond that in (\ref{eq:nearestBScoverageEqexpr})
and hence is not presented here. Hence, we conduct a similar qualitative
study of the hetnet performance under MBRP, as in Section \ref{sub:MIRP}.
\begin{cor}
\label{thm:stEquivalenceMARP}Under MBRP connectivity, the following
stochastic equivalences hold:
\begin{eqnarray}
 &  & \hspace{-1cm}\mathrm{SINR}_{T,1}=_{\mathrm{st}}\nonumber \\
 &  & \hspace{-1cm}\left.\frac{\tilde{P}_{T}\Psi_{T,1}\tilde{R}_{T,1}^{-1}}{\underset{\left(m,l\right)\ne\left(T,1\right)}{\sum_{m=1}^{K}\sum_{l=1}^{\infty}}\tilde{P}_{m}\Psi_{ml}\tilde{R}_{ml}^{-1}+\sum_{n=1}^{\infty}\Psi_{cn}\hat{R}_{cn}^{-1}+\eta}\right|_{\left(\left\{ \tilde{\lambda}_{k}\left(r\right)\right\} _{k=1}^{K},\hat{\lambda}\left(r\right)\right)},\label{eq:stEquivalence MARP2}
\end{eqnarray}

where the equivalent hetnet has BS distributions according to non-homogeneous
Poisson process with density functions $\left\{ \tilde{\lambda}_{k}\left(r\right)=\lambda_{ok}\frac{2\pi}{\varepsilon_{k}}\left(P_{ok}\mathbb{E}\left[\Psi_{k}\right]B_{ok}\right)^{\frac{2}{\varepsilon_{k}}}r^{\frac{2}{\varepsilon_{k}}-1}\right\} _{k=1}^{K}$,
$\hat{\lambda}\left(r\right)=\lambda_{c}\frac{2\pi}{\varepsilon_{c}}P_{c}^{\frac{2}{\varepsilon_{c}}}r^{\frac{2}{\varepsilon_{c}}-1}$,
$r\ge0$, for the $K$ open-access tiers and the closed-access tier,
respectively, , transmission power of the $m^{\mathrm{th}}\ \left(m=1,\dots,\ K\right)$
open-access tier BSs as $\tilde{P}_{m}=\left(\mathbb{E}\left[\Psi_{l}\right]B_{ol}\right)^{-1}$
and unity for the closed-access tier BSs; and unity path-loss exponent
for all BSs, but the fading distributions are the same as the original
hetnet.\end{cor}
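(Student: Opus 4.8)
The plan is to reduce the MBRP SINR to a ratio of inverse distances in a transformed network by applying the mapping theorem for Poisson point processes tier by tier, in the same spirit as the proof of Lemma \ref{lem:StochasticEqLemma1}; the crucial difference is that here the fading marks must be carried along unchanged rather than integrated into the transformed intensities, because under MBRP the fading values do not influence which BS serves the MS.

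First I would invoke Corollary \ref{thm:HetnetEquivalence} to assume, without loss of generality, a single closed-access tier. For each open-access tier $k$, the ascendingly ordered distances $\left\{R_{ki}\right\}_{i=1}^{\infty}$ from the origin of a homogeneous Poisson point process of intensity $\lambda_{ok}$ in $\mathbb{R}^{2}$ form an inhomogeneous one-dimensional Poisson point process of intensity $2\pi\lambda_{ok}r$ \cite[Page 18]{Kingman1993}, carrying the i.i.d.\ fading marks $\left\{\Psi_{ki}\right\}$. I would then apply the strictly increasing bijection $g_{k}(r)=\left(P_{ok}\mathbb{E}\left[\Psi_{k}\right]B_{ok}\right)^{-1}r^{\varepsilon_{k}}$ of $(0,\infty)$ onto itself and set $\tilde{R}_{ki}=g_{k}(R_{ki})$. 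By the mapping theorem, $\left\{\tilde{R}_{ki}\right\}_{i}$ is then an inhomogeneous Poisson point process of intensity $\tilde{\lambda}_{k}(r)=\lambda_{ok}\frac{2\pi}{\varepsilon_{k}}\left(P_{ok}\mathbb{E}\left[\Psi_{k}\right]B_{ok}\right)^{\frac{2}{\varepsilon_{k}}}r^{\frac{2}{\varepsilon_{k}}-1}$, and since $g_{k}$ is increasing it preserves the order of the points, so $\tilde{R}_{k1}$ is the nearest point of the transformed tier-$k$ process. The two algebraic identities to record are $P_{ok}\mathbb{E}\left[\Psi_{k}\right]B_{ok}R_{ki}^{-\varepsilon_{k}}=\tilde{R}_{ki}^{-1}$ and $P_{ok}\Psi_{ki}R_{ki}^{-\varepsilon_{k}}=\tilde{P}_{k}\Psi_{ki}\tilde{R}_{ki}^{-1}$ with $\tilde{P}_{k}=\left(\mathbb{E}\left[\Psi_{k}\right]B_{ok}\right)^{-1}$. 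The first identity shows that the biased averaged received powers appearing in (\ref{eq:tierExpMBRP}) are precisely the inverse transformed distances, so the MBRP serving BS is the globally nearest point over the union of the $K$ transformed open-access processes: $T$ is the tier attaining $\min_{k}\tilde{R}_{k1}$, and the serving BS contributes $\tilde{P}_{T}\Psi_{T,1}\tilde{R}_{T,1}^{-1}$ to the numerator.

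Next I would treat the closed-access tier in the same way, but with the transformation $h(r)=P_{c}^{-1}r^{\varepsilon_{c}}$ (no biasing factor and no $\mathbb{E}\left[\Psi_{c}\right]$, since these BSs never serve the MS): this gives $\hat{R}_{cn}=h(R_{cn})$, a Poisson point process of intensity $\hat{\lambda}(r)=\lambda_{c}\frac{2\pi}{\varepsilon_{c}}P_{c}^{\frac{2}{\varepsilon_{c}}}r^{\frac{2}{\varepsilon_{c}}-1}$, with $P_{c}\Psi_{cn}R_{cn}^{-\varepsilon_{c}}=\Psi_{cn}\hat{R}_{cn}^{-1}$. The $K+1$ transformed processes are mutually independent because the originals are. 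Substituting all of these into the SINR definition (\ref{eq:SINRDefinition}) --- numerator $\tilde{P}_{T}\Psi_{T,1}\tilde{R}_{T,1}^{-1}$; open-access interference $\sum_{m=1}^{K}\sum_{l=1}^{\infty}\tilde{P}_{m}\Psi_{ml}\tilde{R}_{ml}^{-1}$ with the serving term $(m,l)=(T,1)$ excluded; closed-access interference $\sum_{n=1}^{\infty}\Psi_{cn}\hat{R}_{cn}^{-1}$; unchanged noise $\eta$ --- yields exactly (\ref{eq:stEquivalence MARP2}).

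The one step I expect to need genuine care is the cross-tier bookkeeping: the maps $g_{k}$ differ from tier to tier (different constants \emph{and} different exponents), so one must check that the ranking of points \emph{across} tiers by biased averaged received power coincides with the ranking of the $\tilde{R}_{ki}$. This is exactly what the first identity above guarantees, since within every tier the map was chosen so that the biased averaged received power equals $\tilde{R}_{ki}^{-1}$, a quantity with the same meaning in all tiers; hence $\mathrm{argmax}$ over tiers of the nearest biased averaged received power is $\mathrm{argmin}$ over all transformed points of $\tilde{R}$. Everything else is the marking, mapping and superposition machinery of \cite{Kingman1993} already used for Lemma \ref{lem:StochasticEqLemma1} and \cite[Theorem 2]{Madhusudhanan2010a}, the difference being only that the fading marks are retained rather than averaged out.
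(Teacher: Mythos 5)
Your proposal is correct and follows essentially the same route as the paper's proof: the deterministic per-tier map $R_{ki}\mapsto\left(P_{ok}\mathbb{E}\left[\Psi_{k}\right]B_{ok}\right)^{-1}R_{ki}^{\varepsilon_{k}}$ combined with the Mapping/Marking theorems of the Poisson process, retention of the fading marks, and the observation that the MBRP server becomes the globally nearest transformed open-access point. Your explicit check that the cross-tier ranking by biased averaged received power coincides with the ranking of the $\tilde{R}$'s, and your separate treatment of the closed-access tier without the $\mathbb{E}\left[\Psi_{c}\right]$ factor, are exactly the details the paper's terser argument relies on.
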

\begin{proof}
\begin{eqnarray*}
 &  & \mathbb{P}\left(\left\{ \mathrm{SINR}_{T,1}>\beta\right\} \right)\\
 &  & =\mathbb{P}\left(\left\{ P_{oT}\Psi_{T1}R_{T1}^{-\varepsilon_{T}}\left/\left(\underset{\left(m,n\right)\ne\left(T,1\right)}{\sum_{m=1}^{K}\sum_{n=1}^{\infty}}P_{om}\Psi_{mn}R_{mn}^{-\varepsilon_{m}}+\right.\right.\right.\right.\\
 &  & \left.\left.+\sum_{l=1}^{L}\sum_{n=1}^{\infty}P_{cl}\Psi_{cln}R_{cln}^{-\varepsilon_{cl}}+\eta\right)>\beta\right\} \bigcap\\
 &  & \left.\left\{ \begin{array}{c}
P_{oT}R_{T1}^{-\varepsilon_{T}}\mathbb{E}\left[\Psi_{T}\right]B_{oT}>P_{om}R_{m1}^{-\varepsilon_{m}}\mathbb{E}\left[\Psi_{m}\right]B_{om},\\
m=1,\cdots,K,\ \left(m,n\right)\ne\left(T,1\right)
\end{array}\right\} \right),
\end{eqnarray*}
where the serving BS's tier-index is determined by the second event
in the above expression (see (\ref{eq:tierExpMBRP})). Now, as in
the proof of Lemma \ref{lem:StochasticEqLemma1}, using \cite[Theorem 2]{Madhusudhanan2012},
let us consider the set $\left\{ \tilde{R}_{mi}=\left(P_{om}\mathbb{E}\left[\Psi_{m}\right]B_{om}\right)^{-1}R_{mi}^{\varepsilon_{m}}\right\} _{i=1}^{\infty}$
as a set of distances from the origin of $m^{\mathrm{th}}$ tier BSs
from the origin in an equivalent hetnet. Using the Marking theorem
of Poisson process \cite[Page 55]{Kingman1993}, the equivalent hetnet
is from a non-homogeneous Poisson point process with a BS density
function $\tilde{\lambda}_{m}\left(r\right)$, $r\ge0$, $\forall\ m=1,\cdots,\ K$
as shown in Corollary \ref{thm:stEquivalenceMARP}. Further, the serving
BS is the closest among all the open-access BSs of the hetnet. Finally,
the $m^{\mathrm{th}}$ tier transmit power of the equivalent system,
$\tilde{P}_{m}$, is obtained to ensure that the received power at
the MS is stochastically equivalent to that of the original cellular
system.
\end{proof}
The result is yet another application of the Marking theorem of Poisson
process, and can be proved using the same techniques as developed
in Lemma \ref{lem:StochasticEqLemma1}. The following result shows
important characteristics of the serving BS under the MBRP case. 
\begin{lem}
The probability mass function (p.m.f.) of the serving BS's tier-index
and the joint p.d.f. of the serving BS's tier-index and distance from
the MS under MBRP connectivity are

\begin{eqnarray}
 &  & \mathbb{P}\left(\left\{ T=k\right\} \right)=\nonumber \\
 &  & \int_{r=0}^{\infty}\tilde{\lambda}_{k}\left(r\right)\cdot\mathrm{e}^{-\sum_{l=1}^{K}\lambda_{ol}\pi\left(P_{ol}\mathbb{E}\left[\Psi_{l}\right]B_{ol}\right)^{\frac{2}{\varepsilon_{l}}}r^{\frac{2}{\varepsilon_{l}}}}dr,\label{eq:pmfServingBS}\\
 &  & f_{T,\tilde{R}_{T,1}}\left(k,r\right)=\nonumber \\
 &  & \tilde{\lambda}_{k}\left(r\right)\cdot\mathrm{e}^{-\sum_{l=1}^{K}\lambda_{ol}\pi\left(P_{ol}\mathbb{E}\left[\Psi_{l}\right]B_{ol}\right)^{\frac{2}{\varepsilon_{l}}}r^{\frac{2}{\varepsilon_{l}}}},\label{eq:pdfServeBSMARP}
\end{eqnarray}
 for $k=1,\cdots,\ K$, \textup{where }$\tilde{\lambda}_{k}\left(r\right)$and
$\tilde{R}_{T,1}$ are from Corollary \ref{thm:stEquivalenceMARP}.
When $\left\{ \varepsilon_{l}\right\} _{l=1}^{K}=\varepsilon$, 
\begin{eqnarray}
\mathbb{P}\left(\left\{ T=k\right\} \right) & = & \frac{\lambda_{ok}\left(P_{ok}\mathbb{E}\left[\Psi_{k}\right]B_{ok}\right)^{\frac{2}{\varepsilon_{k}}}}{\sum_{l=1}^{K}\lambda_{ol}\left(P_{ol}\mathbb{E}\left[\Psi_{l}\right]B_{ol}\right)^{\frac{2}{\varepsilon_{l}}}}.\label{eq:pmfServingBSSplCase}
\end{eqnarray}
\end{lem}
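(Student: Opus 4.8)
The plan is to carry out the entire computation inside the equivalent hetnet supplied by Corollary~\ref{thm:stEquivalenceMARP}. There the $k^{\mathrm{th}}$ open-access tier is a non-homogeneous one-dimensional Poisson point process on $[0,\infty)$ with intensity $\tilde{\lambda}_{k}(r)$, and the $K$ tiers are mutually independent. The key reduction is that the MBRP rule --- connect to the BS maximising $P_{ok}\mathbb{E}[\Psi_{k}]B_{ok}R_{k1}^{-\varepsilon_{k}}$ --- becomes, under the distance transformation $\tilde{R}_{k1}=(P_{ok}\mathbb{E}[\Psi_{k}]B_{ok})^{-1}R_{k1}^{\varepsilon_{k}}$ of that corollary, simply ``connect to the BS with the smallest transformed distance'', since that averaged received power equals $\tilde{R}_{k1}^{-1}$. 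Hence $T=\mathrm{argmin}_{k}\,\tilde{R}_{k1}$ and the serving distance is $\tilde{R}_{T,1}=\min_{k}\tilde{R}_{k1}$. Since each $\tilde{R}_{k1}$ is continuously distributed and the tiers are independent, the minimiser is a.s.\ unique, so $T$ is well defined, and neither $T$ nor $\tilde{R}_{T,1}$ involves the instantaneous fading marks.

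First I would record the marginal law of $\tilde{R}_{k1}$, the distance to the first point of a non-homogeneous Poisson process of intensity $\tilde{\lambda}_{k}$. Writing $\Lambda_{k}(r)=\int_{0}^{r}\tilde{\lambda}_{k}(t)\,dt=\lambda_{ok}\pi(P_{ok}\mathbb{E}[\Psi_{k}]B_{ok})^{2/\varepsilon_{k}}r^{2/\varepsilon_{k}}$ --- an elementary integral because $\tilde{\lambda}_{k}(t)\propto t^{2/\varepsilon_{k}-1}$ --- the Poisson void probability gives $\mathbb{P}(\tilde{R}_{k1}>r)=\mathrm{e}^{-\Lambda_{k}(r)}$ and density $\tilde{\lambda}_{k}(r)\mathrm{e}^{-\Lambda_{k}(r)}$. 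Then I would apply the standard competing-risks identity for the argmin/min of independent random variables: $f_{T,\tilde{R}_{T,1}}(k,r)=\bigl(\tilde{\lambda}_{k}(r)\mathrm{e}^{-\Lambda_{k}(r)}\bigr)\prod_{l\neq k}\mathbb{P}(\tilde{R}_{l1}>r)=\tilde{\lambda}_{k}(r)\prod_{l=1}^{K}\mathrm{e}^{-\Lambda_{l}(r)}$, which is exactly~(\ref{eq:pdfServeBSMARP}) once the exponents are summed. Integrating over $r\in(0,\infty)$ yields the p.m.f.~(\ref{eq:pmfServingBS}).

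For the special case $\{\varepsilon_{l}\}_{l=1}^{K}=\varepsilon$, set $S=\sum_{l=1}^{K}\lambda_{ol}(P_{ol}\mathbb{E}[\Psi_{l}]B_{ol})^{2/\varepsilon}$, so that $\sum_{l}\Lambda_{l}(r)=\pi r^{2/\varepsilon}S$ while $\tilde{\lambda}_{k}(r)=\lambda_{ok}\frac{2\pi}{\varepsilon}(P_{ok}\mathbb{E}[\Psi_{k}]B_{ok})^{2/\varepsilon}r^{2/\varepsilon-1}$. The substitution $u=r^{2/\varepsilon}$ turns~(\ref{eq:pmfServingBS}) into $\lambda_{ok}\pi(P_{ok}\mathbb{E}[\Psi_{k}]B_{ok})^{2/\varepsilon}\int_{0}^{\infty}\mathrm{e}^{-\pi uS}\,du=\lambda_{ok}(P_{ok}\mathbb{E}[\Psi_{k}]B_{ok})^{2/\varepsilon}/S$, which is~(\ref{eq:pmfServingBSSplCase}).

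No step presents a genuine obstacle; the only points deserving a word of care are the a.s.\ uniqueness of the argmin, so that $T$ is well defined and the competing-risks formula applies cleanly, and the observation that $T$ and $\tilde{R}_{T,1}$ are measurable functions of the BS locations alone --- both immediate from Corollary~\ref{thm:stEquivalenceMARP} and the form of the MBRP rule. Should one prefer to avoid the independent-minimum formula, the same expressions follow from a direct Palm-calculus / ``$dr$'' argument: the event $\{T=k,\ \tilde{R}_{T,1}\in[r,r+dr)\}$ requires one $k$-tier point in $[r,r+dr)$, of probability $\tilde{\lambda}_{k}(r)\,dr$, and no point of any tier in $[0,r)$, of probability $\prod_{l}\mathrm{e}^{-\Lambda_{l}(r)}$ by independence across tiers, giving the same answer.
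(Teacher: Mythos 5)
Your proposal is correct and follows essentially the same route as the paper: both work in the equivalent 1-D non-homogeneous Poisson processes of Corollary \ref{thm:stEquivalenceMARP}, identify $T$ as the argmin of the independent nearest transformed distances $\{\tilde{R}_{l1}\}_{l=1}^{K}$, and obtain the joint density as $\tilde{\lambda}_{k}(r)$ times the product of the void probabilities $\mathrm{e}^{-\Lambda_{l}(r)}$ (the paper phrases this via the joint c.c.d.f.\ and then differentiates, which is the same competing-risks computation). The marginal p.m.f.\ and the closed form for equal path-loss exponents then follow by the same elementary integration.
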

\begin{proof}
Along the same lines as \cite[Lemmas 3 and 4]{Madhusudhanan2012},
the p.m.f. of the serving BS's tier-index and the c.c.d.f. of the
distance of the serving BS belonging to the $k^{\mathrm{th}}$ open-access
tier are derived below. 
\begin{eqnarray}
 &  & \mathbb{P}\left(\left\{ T=k\right\} \right)\overset{\left(a\right)}{=}\mathbb{P}\left(\bigcap_{l=1,\ l\ne k}^{K}\left\{ \tilde{R}_{l1}>\tilde{R}_{k1}\right\} \right)\nonumber \\
 &  & =\mathbb{E}_{\tilde{R}_{k1}}\left[\prod_{l=1,\ l\ne k}^{K}\mathbb{P}\left(\left.\left\{ \tilde{R}_{l1}>\tilde{R}_{k1}\right\} \right|\tilde{R}_{k1}\right)\right]\nonumber \\
 &  & \overset{\left(b\right)}{=}\mathbb{E}_{\tilde{R}_{k1}}\left[\mathrm{e}^{-\sum_{l=1,\ l\ne k}^{K}\int_{s=0}^{t}\tilde{\lambda}_{l}\left(s\right)ds}\right],\label{eq:proofPmfTierIndex}\\
 &  & \mathbb{P}\left(\left\{ T=k\right\} \bigcap\left\{ \tilde{R}_{k1}>r\right\} \right)\nonumber \\
 &  & =\mathbb{P}\left(\bigcap_{l=1,\ l\ne k}^{K}\left\{ \tilde{R}_{l1}>\tilde{R}_{k1}\right\} \bigcap\left\{ \tilde{R}_{k1}>r\right\} \right)\nonumber \\
 &  & =\mathbb{E}_{\tilde{R}_{k1}}\left[\mathcal{I}\left(\tilde{R}_{k1}>r\right)\mathbb{P}\left(\left.\bigcap_{l=1,\ l\ne k}^{K}\left\{ \tilde{R}_{l1}>\tilde{R}_{k1}\right\} \right|\tilde{R}_{k1}\right)\right]\nonumber \\
 &  & =\int_{t=r}^{\infty}\tilde{\lambda}_{k}\left(t\right)\mathrm{e}^{-\sum_{l=1}^{K}\int_{s=0}^{t}\tilde{\lambda}_{l}\left(s\right)ds}dt,\label{eq:ccdfServeBSMARP}
\end{eqnarray}
where (\ref{eq:proofPmfTierIndex}) - (a) is obtained since serving
BS belongs to the $k^{\mathrm{th}}$ tier if nearest BS among all
the open-access tiers in the equivalent hetnet of Corollary \ref{thm:stEquivalenceMARP}
belongs to the $k^{\mathrm{th}}$ tier, (\ref{eq:proofPmfTierIndex})
- (b) is obtained by noting that $\left\{ \tilde{R}_{l1}\right\} _{l=1}^{K}$
is a set of independent random variables with the p.d.f. of $\tilde{R}_{l1}$
as $f_{\tilde{R}_{l1}}\left(r\right)=\tilde{\lambda}_{l}\left(r\right)\cdot\mathrm{e}^{-\int_{s=0}^{r}\tilde{\lambda}_{l}\left(s\right)ds}$,
$r\ge0$ from the properties of Poisson process, and finally (\ref{eq:ccdfServeBSMARP})
is obtained. Using the steps in (\ref{eq:ccdfServeBSMARP}), (\ref{eq:pdfServeBSMARP})
is derived.
\end{proof}

When $\left\{ \mathbb{E}\left[\Psi_{l}\right]\right\} _{l=1}^{K}=1$,
(\ref{eq:pmfServingBS}) and (\ref{eq:pmfServingBSSplCase}) reduce
to \cite[(3) and (4)]{Jo2011}, respectively. Deriving the coverage
probability expressions for the arbitrary fading distribution case
under MBRP suffers from similar analytical intractabilities as the
nearest-BS case studied in Section \ref{sec:hetnetCoverage}. Hence,
we consider the special case where fading coefficients are i.i.d.
unit mean exponential random variables. In \cite{Jo2011}, Jo et.
al. have demonstrated that simple expressions for the hetnet coverage
probability under MARP can be computed when the fading coefficients
are i.i.d. exponential random variables. These results were restricted
to the open-access case, and are extended for a general hetnet below. 
\begin{thm}
\label{thm:covprobMARPExpFading}The hetnet coverage probability under
MBRP connectivity with i.i.d. exponential fading distribution at all
BSs is 
\begin{eqnarray}
 &  & \mathbb{P}_{\mathrm{coverage}}^{\mathrm{MBRP}}=\sum_{k=1}^{K}\lambda_{k}P_{k}^{\frac{2}{\varepsilon_{k}}}\beta_{k}^{-\frac{2}{\varepsilon_{k}}}\int_{r=0}^{\infty}2\pi r\mathrm{e}^{-\eta r^{\varepsilon_{k}}}\times\nonumber \\
 &  & \mathrm{e}^{-\frac{\lambda_{c}\pi P_{c}^{\frac{2}{\varepsilon_{c}}}r^{\frac{2\varepsilon_{k}}{\varepsilon_{c}}}}{\mathrm{sinc}\left(\frac{2\pi}{\varepsilon_{c}}\right)}-\sum_{l=1}^{K}\lambda_{l}\pi P_{l}^{\frac{2}{\varepsilon_{l}}}F\left(\beta_{k},\varepsilon_{l}\right)r^{\frac{2\varepsilon_{k}}{\varepsilon_{l}}}}dr\label{eq:covprobMARPexp}
\end{eqnarray}
where $F\left(\beta_{k},\varepsilon_{l}\right)=\frac{1}{\mathrm{sinc}\left(\frac{2\pi}{\varepsilon_{l}}\right)}+\beta_{k}^{-\frac{2}{\varepsilon_{l}}}\left[1-\ _{2}F_{1}\left(1,\frac{2}{\varepsilon_{l}};1+\frac{2}{\varepsilon_{l}};-\beta_{k}^{-1}\right)\right]$.
When $\left\{ \varepsilon_{k}\right\} _{k=1}^{K}=\varepsilon$ and
$\eta=0$, 
\begin{eqnarray}
 &  & \mathbb{P}_{\mathrm{coverage}}^{\mathrm{nearest}}=\mathbb{P}_{\mathrm{coverage}}^{\mathrm{MARP}}=\nonumber \\
 &  & \sum_{k=1}^{K}\frac{\lambda_{k}P_{k}^{\frac{2}{\varepsilon}}\beta_{k}^{-\frac{2}{\varepsilon}}\mathrm{sinc}\left(\frac{2\pi}{\varepsilon}\right)}{\lambda_{c}P_{c}^{\frac{2}{\varepsilon}}+\sum_{l=1}^{K}\lambda_{l}P_{l}^{\frac{2}{\varepsilon}}F\left(\beta_{k},\varepsilon\right)\mathrm{sinc}\left(\frac{2\pi}{\varepsilon}\right)}.\label{eq:covprobMARPSpCase1}
\end{eqnarray}
\end{thm}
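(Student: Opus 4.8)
\emph{Proof sketch.} Since Theorem~\ref{thm:covprobMARPExpFading} is the MARP case ($B_{ok}\equiv1$) with exponential fading, the cleanest route is to build on the stochastic equivalence of Corollary~\ref{thm:stEquivalenceMARP}, the joint density of the serving tier-index and serving distance from the lemma immediately preceding the theorem, and the closed-access Laplace transform of Lemma~\ref{lem:LTIcExpression}; this generalises the exponential-fading computations of \cite{Jo2011,Dhillon2012a}. I would work in the unit-path-loss-exponent ``equivalent'' coordinates of Corollary~\ref{thm:stEquivalenceMARP}; with i.i.d.\ unit-mean exponential fading and $B_{ok}\equiv1$ the equivalent transmit powers are $\tilde P_m\equiv1$, so $\mathrm{SINR}_{T,1}=_{\mathrm{st}}\Psi_{T,1}\tilde R_{T,1}^{-1}\big/(\sum_{(m,l)\ne(T,1)}\Psi_{ml}\tilde R_{ml}^{-1}+\sum_n\Psi_{cn}\hat R_{cn}^{-1}+\eta)$, and the serving BS is the overall nearest open-access BS in these coordinates. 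Then decompose $\mathbb{P}_{\mathrm{coverage}}^{\mathrm{MBRP}}=\sum_{k=1}^{K}\int_{0}^{\infty}\mathbb{P}(\mathrm{SINR}_{T,1}>\beta_k\mid T=k,\tilde R_{T,1}=\tilde r)\,f_{T,\tilde R_{T,1}}(k,\tilde r)\,d\tilde r$, with $f_{T,\tilde R_{T,1}}(k,\tilde r)=\tilde\lambda_k(\tilde r)\,\mathrm{e}^{-\sum_l\lambda_{ol}\pi P_{ol}^{2/\varepsilon_l}\tilde r^{2/\varepsilon_l}}$ from that lemma (with $\mathbb{E}[\Psi_l]=B_{ol}=1$).

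Conditioning on $\{T=k,\tilde R_{T,1}=\tilde r\}$, the remaining open-access BSs of every tier lie at equivalent distance $>\tilde r$ — a single exclusion level uniform across tiers, which is what makes these coordinates convenient — and by the void-probability and complete-independence of the Poisson process they conditionally form independent (non-homogeneous) Poisson processes on $\{x>\tilde r\}$, while the closed-access process is unchanged. Because $\Psi_{T,1}$ is unit-mean exponential and independent of everything else, $\{\mathrm{SINR}_{T,1}>\beta_k\}=\{\Psi_{T,1}>\beta_k\tilde r\,(I_{\mathrm{others}}+I_c+\eta)\}$, so the conditional coverage probability equals $\mathrm{e}^{-\eta\beta_k\tilde r}$ times $\mathcal{L}_{I_c}(\beta_k\tilde r)$ (from Lemma~\ref{lem:LTIcExpression} with $\mathbb{E}[\Psi_c^{2/\varepsilon_c}]=\Gamma(1+\tfrac{2}{\varepsilon_c})$, which combines with $\Gamma(1-\tfrac{2}{\varepsilon_c})$ into $1/\mathrm{sinc}(2\pi/\varepsilon_c)$ via the reflection formula $\Gamma(1+a)\Gamma(1-a)=\pi a/\sin\pi a$) times $\prod_l\mathcal{L}$ of the tier-$l$ interference over $\{x>\tilde r\}$. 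Each tier-$l$ Laplace transform I would evaluate by the PGFL, using $\mathbb{E}_{\Psi_l}[\mathrm{e}^{-s\Psi_l/x}]=(1+s/x)^{-1}$; rescaling $x=\tilde r v$ with $s=\beta_k\tilde r$ turns the kernel into $\beta_k/(v+\beta_k)$ and leaves $\int_1^\infty v^{2/\varepsilon_l-1}\beta_k(v+\beta_k)^{-1}dv$, which — writing $\int_1^\infty=\int_0^\infty-\int_0^1$, using $\int_0^\infty w^{a-1}(w+\beta)^{-1}dw=\beta^{a-1}\Gamma(a)\Gamma(1-a)$ and the Gauss series $\int_0^1 w^{a-1}(w+\beta)^{-1}dw=(a\beta)^{-1}\,{}_2F_1(1,a;1+a;-\beta^{-1})$ — evaluates to $\beta_k^{2/\varepsilon_l}/\mathrm{sinc}(2\pi/\varepsilon_l)-{}_2F_1(1,\tfrac{2}{\varepsilon_l};1+\tfrac{2}{\varepsilon_l};-\beta_k^{-1})$ (an analogous but simpler computation with lower limit $0$ gives the closed-access factor). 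One may equivalently run this in the original coordinates, where the exclusion radius for tier $l$ is the biased-comparison threshold $d_l(\tilde r)=(P_{ol}/P_{ok})^{1/\varepsilon_l}\rho^{\varepsilon_k/\varepsilon_l}$; the bookkeeping is slightly heavier but the answer is the same.

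To finish, multiply the tier-$l$ association void factor $\mathrm{e}^{-\lambda_{ol}\pi P_{ol}^{2/\varepsilon_l}\tilde r^{2/\varepsilon_l}}$ coming from $f_{T,\tilde R_{T,1}}$ by the tier-$l$ interference factor; the exponents merge into $\lambda_{ol}\pi P_{ol}^{2/\varepsilon_l}\tilde r^{2/\varepsilon_l}(1+\beta_k^{2/\varepsilon_l}/\mathrm{sinc}(2\pi/\varepsilon_l)-{}_2F_1(\cdots))$. Then the substitution $\tilde r=r^{\varepsilon_k}/\beta_k$ sends $\tilde\lambda_k(\tilde r)\,d\tilde r$ to $\lambda_{ok}P_{ok}^{2/\varepsilon_k}\beta_k^{-2/\varepsilon_k}\,2\pi r\,dr$, $\mathrm{e}^{-\eta\beta_k\tilde r}$ to $\mathrm{e}^{-\eta r^{\varepsilon_k}}$, the merged tier-$l$ exponent to $\lambda_{ol}\pi P_{ol}^{2/\varepsilon_l}F(\beta_k,\varepsilon_l)r^{2\varepsilon_k/\varepsilon_l}$ (since $\beta_k^{-2/\varepsilon_l}$ times the bracket is exactly $F(\beta_k,\varepsilon_l)$), and the closed-access factor to $\mathrm{e}^{-\lambda_c\pi P_c^{2/\varepsilon_c}r^{2\varepsilon_k/\varepsilon_c}/\mathrm{sinc}(2\pi/\varepsilon_c)}$; summing over $k$ yields (\ref{eq:covprobMARPexp}). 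For (\ref{eq:covprobMARPSpCase1}), putting $\varepsilon_l\equiv\varepsilon$ and $\eta=0$ collapses every $r$-power to $r^2$, so the inner integral is the Gaussian $\int_0^\infty 2\pi r\,\mathrm{e}^{-\pi r^2 A_k}dr=A_k^{-1}$ with $A_k=\lambda_c P_c^{2/\varepsilon}/\mathrm{sinc}(2\pi/\varepsilon)+\sum_l\lambda_l P_l^{2/\varepsilon}F(\beta_k,\varepsilon)$; clearing the common $\mathrm{sinc}(2\pi/\varepsilon)$ and using the identification $\mathbb{P}_{\mathrm{coverage}}^{\mathrm{nearest}}=\mathbb{P}_{\mathrm{coverage}}^{\mathrm{MARP}}$ valid here (cf.\ Section~\ref{sub:MARP}) gives the closed form.

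The step I expect to be the main obstacle is the conditioning argument: rigorously justifying that, given the serving tier and serving (equivalent) distance $\tilde r$, the interferers of each tier genuinely remain \emph{independent} Poisson processes on $\{x>\tilde r\}$ — so that the conditional coverage probability factorises into the product of Laplace transforms above — and then the easy-to-miss algebra that fuses the association void factor with the interference Laplace factor to produce precisely the $F(\beta_k,\varepsilon_l)$ combination (losing the ``$+1$'' in the bracket is the likely error).
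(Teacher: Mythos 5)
Your proposal is correct and follows essentially the same route as the paper's proof: condition on the serving tier and equivalent serving distance via Corollary \ref{thm:stEquivalenceMARP}, use the unit-mean exponential fading of the serving link to turn the conditional coverage probability into a product of per-tier interference Laplace transforms evaluated by Campbell's theorem, and then change variables to reach (\ref{eq:covprobMARPexp}). Your explicit bookkeeping of how the association void factor merges with the tier-$l$ interference exponent to produce $F\left(\beta_{k},\varepsilon_{l}\right)$, and the final substitution $\tilde{r}=r^{\varepsilon_{k}}/\beta_{k}$, check out against the stated result.
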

\begin{proof}
See Appendix \ref{sub:proofCovprobMARPExpFading}.
\end{proof}
The above is a generalization of \cite[Theorem 1]{Jo2011} to closed-access
case. Further, comparing (\ref{eq:covprobExpBetaGt1NoNoise}) and
(\ref{eq:covprobMARPSpCase1}), clearly, $\mathbb{P}_{\mathrm{coverage}}^{\mathrm{MIRP}}\ge\mathbb{P}_{\mathrm{coverage}}^{\mathrm{MARP}}$,
when $\left\{ \beta_{k}\right\} _{k=1}^{\infty}\ge1$ since $F\left(\beta_{k},\varepsilon_{l}\right)\mathrm{sinc}\left(\frac{2\pi}{\varepsilon_{l}}\right)\ge1$,
$\forall\ \beta_{k}\ge0$, $\varepsilon_{l}>2$.

\section{Numerical Examples and Discussion}

In this section, we provide some numerical examples that complement
the theoretical results presented until now. We restrict ourselves
to the study of a two tier hetnet consisting of the macrocell and
the femtocell networks, respectively, under the max-SINR connectivity
model while reminding the reader that the theory presented in this
paper allows a similar analysis for arbitrary number of tiers and
also carries over to the nearest-BS connectivity model. Also, please
refer to Appendix \ref{sub:Simulation-Method} for the algorithm to
perform the Monte-Carlo simulations. For all the studies in this paper,
$\lambda_{2}=5\lambda_{1},$ $P_{1}=25P_{2},$ $\varepsilon=3,$ and
$\beta_{2}=1\ \mathrm{dB}$, where the subscripts `1' and `2' correspond
to macrocell and femtocell networks, respectively. Further, under
the closed-access BS association scheme, the MS has access to the
macrocell network only.
\begin{figure}
\begin{centering}
\includegraphics[scale=0.7]{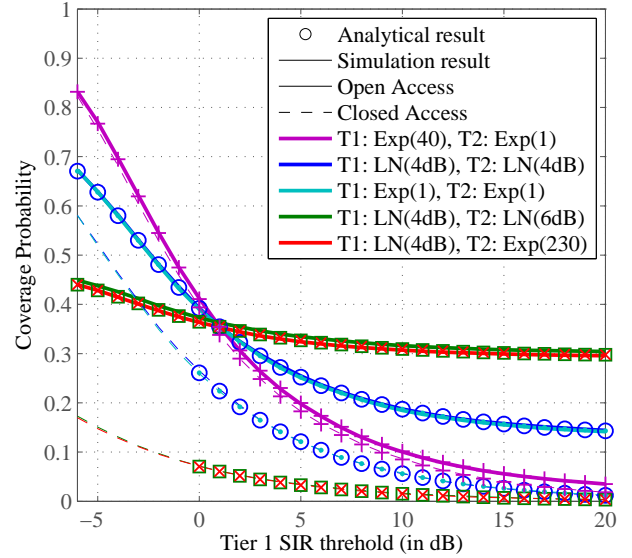}
\par\end{centering}

\caption{\textcolor{black}{\label{fig:covProbFigure}Two-tier hetnet: Comparing
coverage probabilities for various shadow fading distributions}}
\end{figure}
\begin{figure}
\begin{centering}
\includegraphics[clip,scale=0.7]{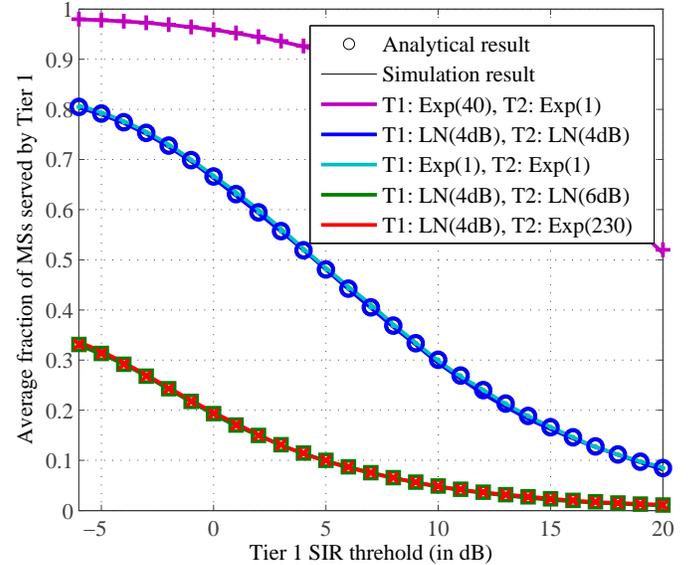}
\par\end{centering}

\caption{\label{fig:avgFracServedByT1}Two-tier hetnet: Average fraction of
MSs served by macrocell BSs vs macrocell SIR threshold }
\end{figure}
\begin{figure}
\begin{centering}
\includegraphics[scale=0.7]{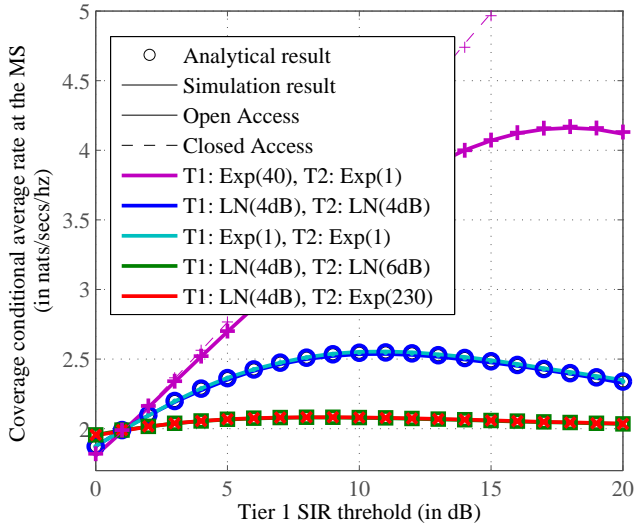}
\par\end{centering}

\caption{\label{fig:covCondAvgRate}Two-tier hetnet: Variation of coverage
conditional average rate with Tier 1 SIR threshold and different shadow
fading distributions}
\end{figure}

In Figures \ref{fig:covProbFigure}, \ref{fig:avgFracServedByT1}
and \ref{fig:covCondAvgRate}, we study the coverage probability,
coverage conditional average rate and the average fraction of users
served by the macrocell network, respectively, for various configurations
of shadow fading distributions at the macrocell and the femtocell
BSs. Note that the expressions for the coverage conditional average
rate and the average fraction of users served by the macrocell network
can be found in \cite[Theorems 2, 3 and 4]{Madhusudhanan2012c}. In
all the figures, T1 (T2) stand for tier 1, i.e. the macrocell network
(tier 2, i.e. the femtocell network). Further, Exp($\cdot$) and LN($\cdot$)
are abbreviations for exponential distribution with a given mean and
log-normal distribution with a zero mean and standard deviation (when
the random variable is expressed in dB), respectively, and they represent
distribution of the shadow fading factors of the corresponding tiers.

While the expressions in Theorem \ref{thm:covprobMIRPBetaGt1} clearly
show that a MS has a better coverage probability under open-access
than closed-access, the plots in Figure \ref{fig:covProbFigure} provides
a quantitative justification for the same. The coverage probability
curve corresponding to the exponential fading distribution at both
the tiers 1 and 2 with means 40 and 1, respectively, also corresponds
to the case where $P_{1}=1000P_{2},$ with the shadow fading factors
at both the tiers being unit mean exponential distributions. The open
and closed access have approximately the same coverage probabilities
because the MS is almost always served by a macrocell BS, as can been
seen in the corresponding curve in Figure \ref{fig:avgFracServedByT1}.
As a result, blocking access to the femtocell BSs altogether, has
only a marginal influence on the coverage probability at the MS. 

The two curves following the aforementioned curve in Figures \ref{fig:covProbFigure}-\ref{fig:covCondAvgRate}
complement the fact that all the three performance metrics are identical
irrespective of the distribution of the shadow fading factors, when
the shadow fading factors have the same distribution across all the
tiers. The last two curves in Figures \ref{fig:covProbFigure}-\ref{fig:covCondAvgRate}
show that all the performance metrics are identical as long as the
shadow fading coefficients of the corresponding tiers have the same
$\left(2/\varepsilon\right)^{\mathrm{th}}$ moments. Note that $\mathbb{E}\left[\Psi^{\frac{2}{\varepsilon}}\right]$
is the same when $\Psi$ has a log-normal distribution with zero mean
and 6 dB standard deviation or when $\Psi$ is an exponential random
variable with mean 230.

A log-normal random variable with zero mean and a given standard deviation
is a good model for shadow fading factors. Note that the femtocell
network is introduced to improve the indoor performance. The shadow
fading factors in the indoor environments are known to have a comparable
or greater standard deviation than otherwise. Such a situation is
represented by last four curves in Figures \ref{fig:covProbFigure}-\ref{fig:covCondAvgRate}.
The gap between the open and closed access coverage probability curves
indicate the contribution of the femtocell network in providing coverage
to the MS. It is immediately clear that the dense low-power femtocell
network has a more critical role in providing coverage in realistic
indoor models, when we look at the last four curves in Figures \ref{fig:covProbFigure}
and \ref{fig:avgFracServedByT1}.

Under open-access, the coverage probability and the coverage conditional
average rate (see Figures \ref{fig:covProbFigure} and \ref{fig:avgFracServedByT1})
for all the 5 curves mentioned above intersect when the SIR threshold
for the macrocell network is equal to 1 dB. This brings us to an important
point that when the SIR threshold is the same for all the tiers, these
metrics become independent of the transmission power and shadow fading
factors of the different tiers, and collapses to the corresponding
metrics in a single-tier network with the same path-loss exponent
and SIR threshold. Along the same lines, the coverage conditional
average rate for a two-tier hetnet under closed-access also collapses
to that of a single-tier network, and is independent of the transmission
power and shadow fading factors of the different tiers.

\section{Conclusions}

In this paper, for the most general model of the hetnets, the downlink
coverage probability and other related performance metrics such as
the average downlink rate and average fraction of users served by
each tier of the hetnet are characterized. Two important BS connectivity
models are studied, namely, the max-SINR and the nearest-BS connectivity,
respectively. Semi-analytical expressions for the hetnet coverage
probability is obtained for both the cases. Further, several properties
pertaining to the hetnet downlink performance are analyzed, which
provide great insights about these complex networks. As an example,
we identify the MIRP and MBRP connectivity models to be equivalent
to the former models under certain special conditions. These models
are much simpler to analyze and the results for these models expose
interesting properties of the hetnet. The results in this paper greatly
generalize the existing hetnet performance characterization results
and are essential for better understanding of the future developments
in wireless communications that are heavily based on hetnets.

\appendix

\subsection{\label{sub:proofLTmaxSINRnearestBS}Proof for Lemma \ref{lem:LTmaxSINRnearestBS}}

The proof for (\ref{eq:maxSINRLTexp}) is shown in (\ref{eq:maxSINRLTexpSteps})
\begin{figure*}
\begin{eqnarray}
 &  & \mathcal{L}_{I_{o}+I_{c}+\eta,\ \underset{k=1,\cdots,K}{\max}\gamma_{k}M_{k}\le u}\left(s\right)\nonumber \\
 &  & =\mathbb{E}\left[\exp\left(-s\left(I_{o}+I_{c}+\eta\right)\right)\times\mathcal{I}\left(\underset{k=1,\cdots,K}{\max}\gamma_{k}M_{k}\le u\right)\right]\nonumber \\
 &  & \overset{\left(a\right)}{=}\mathcal{L}_{I_{c}}\left(s\right)\mathrm{e}^{-s\eta}\mathbb{E}\left[\prod_{k=1}^{K}\prod_{l=1}^{\infty}\mathrm{e}^{-sP_{k}\Psi_{kl}R_{kl}^{-\varepsilon_{k}}}\mathcal{I}\left(\gamma_{k}P_{k}\Psi_{kl}R_{kl}^{-\varepsilon_{k}}\le u\right)\right]\nonumber \\
 &  & \overset{\left(b\right)}{=}\mathcal{L}_{I_{c}}\left(s\right)\mathrm{e}^{-s\eta}\prod_{k=1}^{K}\mathbb{E}\left[\prod_{l=1}^{\infty}\mathrm{e}^{-sP_{k}\Psi_{kl}R_{kl}^{-\varepsilon_{k}}}\mathcal{I}\left(P_{k}\Psi_{kl}R_{kl}^{-\varepsilon_{k}}\le\frac{u}{\gamma_{k}}\right)\right]\nonumber \\
 &  & \overset{\left(c\right)}{=}\mathcal{L}_{I_{c}}\left(s\right)\mathrm{e}^{-s\eta}\prod_{k=1}^{K}\exp\left(-\lambda_{k}\int_{r=0}^{\infty}\left(1-\mathbb{E}\left[\mathrm{e}^{-sP_{k}\Psi_{k}r^{-\varepsilon_{k}}}\mathcal{I}\left(P_{k}\Psi_{k}r^{-\varepsilon_{k}}\le\frac{u}{\gamma_{k}}\right)\right]\right)2\pi rdr\right)\nonumber \\
 &  & \overset{\left(d\right)}{=}\mathcal{L}_{I_{c}}\left(s\right)\mathrm{e}^{-s\eta}\prod_{k=1}^{K}\exp\left(-\lambda_{k}\mathbb{E}_{\Psi_{k}}\left[\int_{t=0}^{\infty}\left(1-\mathrm{e}^{-t}\mathcal{I}\left(t\le\frac{su}{\gamma_{k}}\right)\right)\frac{2\pi}{\varepsilon_{k}}t^{-\frac{2}{\varepsilon}-1}\left(sP_{k}\Psi_{k}\right)^{\frac{2}{\varepsilon_{k}}}dt\right]\right)\nonumber \\
 &  & \overset{\left(e\right)}{=}\mathcal{L}_{I_{c}}\left(s\right)\mathrm{e}^{-s\eta}\prod_{k=1}^{K}\exp\left(-\lambda_{k}\pi\left(sP_{k}\right)^{\frac{2}{\varepsilon_{k}}}\mathbb{E}\left[\Psi_{k}^{\frac{2}{\varepsilon_{k}}}\right]\left[\Gamma\left(1-\frac{2}{\varepsilon_{k}}\right)+\frac{2}{\varepsilon_{k}}\int_{t=0}^{\infty}\mathrm{e}^{-t}t^{-\frac{2}{\varepsilon}-1}\mathcal{I}\left(t>\frac{su}{\gamma_{k}}\right)dt\right]\right).\label{eq:maxSINRLTexpSteps}\\
\nonumber \\
 &  & \mathcal{L}_{I_{o}+I_{c}+\eta,\ \underset{k=1,\cdots,K}{\max}\gamma_{k}N_{k}\le u}\left(s\right)\nonumber \\
 &  & \overset{\left(a\right)}{=}\mathcal{L}_{I_{c}}\left(s\right)\mathrm{e}^{-s\eta}\mathbb{E}\left[\prod_{k=1}^{K}\prod_{l=1}^{\infty}\mathrm{e}^{-sP_{k}\Psi_{kl}R_{kl}^{-\varepsilon_{k}}}\times\mathcal{I}\left(\underset{k=1,\cdots,K}{\max}\gamma_{k}P_{k}\Psi_{k1}R_{k1}^{-\varepsilon_{k}}\le u\right)\right]\nonumber \\
 &  & =\mathcal{L}_{I_{c}}\left(s\right)\mathrm{e}^{-s\eta}\mathbb{E}\left[\prod_{k=1}^{K}\prod_{l=1}^{\infty}\mathrm{e}^{-sP_{k}\Psi_{kl}R_{kl}^{-\varepsilon_{k}}+\mathrm{ln}\left(\mathcal{I}\left(\gamma_{k}P_{k}\Psi_{k1}R_{k1}^{-\varepsilon_{k}}\le u\right)\right)}\right]\nonumber \\
 &  & \overset{\left(b\right)}{=}\mathcal{L}_{I_{c}}\left(s\right)\mathrm{e}^{-s\eta}\prod_{k=1}^{K}\mathbb{E}_{\Psi_{k1},R_{k1}}\left[\mathrm{e}^{-sP_{k}\Psi_{k1}R_{k1}^{-\varepsilon_{k}}}\mathcal{I}\left(\gamma_{k}P_{k}\Psi_{k1}R_{k1}^{-\varepsilon_{k}}\le u\right)\mathbb{E}\left[\left.\prod_{l=2}^{\infty}\mathrm{e}^{-sP_{k}\Psi_{kl}R_{kl}^{-\varepsilon_{k}}\mathcal{I}\left(R_{kl}>R_{k1}\right)}\right|R_{k1}\right]\right]\nonumber \\
 &  & \overset{\left(c\right)}{=}\mathcal{L}_{I_{c}}\left(s\right)\mathrm{e}^{-s\eta}\prod_{k=1}^{K}\mathbb{E}_{\Psi_{k1},R_{k1}}\left[\mathrm{e}^{-sP_{k}\Psi_{k1}R_{k1}^{-\varepsilon_{k}}}\mathcal{I}\left(\gamma_{k}P_{k}\Psi_{k1}R_{k1}^{-\varepsilon_{k}}\le u\right)\mathrm{e}^{-\lambda_{k}\int_{r=R_{k1}}^{\infty}\left(1-\mathbb{E}\left[\mathrm{e}^{-sP_{k}\Psi_{k}r^{-\varepsilon_{k}}}\right]\right)2\pi rdr}\right]\nonumber \\
 &  & \overset{\left(d\right)}{=}\mathcal{L}_{I_{c}}\left(s\right)\mathrm{e}^{-s\eta}\prod_{k=1}^{K}\mathbb{E}_{\Psi_{k1},R_{k1}}\left[\mathrm{e}^{-sP_{k}\Psi_{k1}R_{k1}^{-\varepsilon_{k}}}\mathcal{I}\left(\gamma_{k}P_{k}\Psi_{k1}R_{k1}^{-\varepsilon_{k}}\le u\right)\times\right.\nonumber \\
 &  & \left.\mathrm{e}^{-\lambda_{k}\pi\left(sP_{k}\right)^{\frac{2}{\varepsilon_{k}}}\mathbb{E}_{\Psi_{k}}\left[\Psi_{k}^{\frac{2}{\varepsilon_{k}}}\int_{t=0}^{sP_{k}\Psi_{k}R_{k1}^{-\varepsilon_{k}}}\left(1-\mathrm{e}^{-t}\right)\frac{2}{\varepsilon_{k}}t^{-\frac{2}{\varepsilon_{k}}-1}dt\right]}\right],\label{eq:proofOutlineNearestBSLTExp}
\end{eqnarray}

\rule[0.5ex]{2\columnwidth}{1pt}
\end{figure*}
 where $(a)$ is obtained by noting that $I_{c}$ is independent of
the random variables $I_{o}$ and $\underset{k=1,\cdots,K}{\max}\gamma_{k}M_{k}\le u$,
$\mathcal{L}_{I_{c}}\left(s\right)$ is a direct consequence of the
Campbell's theorem \cite{Kingman1993}, $\mathrm{e}^{-s\eta}$ is
a constant and $\left\{ \underset{k=1,\cdots,K}{\max}\gamma_{k}M_{k}\le u\right\} \iff\left\{ \gamma_{k}P_{k}\Psi_{kl}R_{kl}^{-\varepsilon_{k}}\le u\right\} $,
$\forall\ k=1,\cdots,\ K$ and $l=1,\ 2,\cdots$; $(b)$ is obtained
since the random variables corresponding to a given tier are independent
of the other tiers; $(c)$ is obtained by applying the Campbell's
theorem \cite{Kingman1993} to each tier of the hetnet; $(d)$ is
obtained by changing the variable of integration from $r$ to $t=sP_{k}\Psi_{k}r^{-\varepsilon_{k}}$;
$(e)$ is obtained by rewriting the integral in $(d)$ using special
functions; and finally (\ref{eq:maxSINRLTexp}) is obtained by rewriting
the integral in $(e)$ in terms of the incomplete Gamma function.

The proof for (\ref{eq:nearestBSLTexp}) follows along the same lines
as above and we provide only a brief outline in (\ref{eq:proofOutlineNearestBSLTExp})
where the maximization in $(a)$ is only among the nearest BSs of
the K tiers of the hetnet, $\mathcal{L}_{I_{c}}\left(s\right)$ is
the same as in (\ref{eq:maxSINRLTexpSteps}); $\left(b\right)$ is
obtained by exchanging the order of expectation and product since
the K tiers of the hetnet are independent of each other, and further
conditioning w.r.t. the fading coefficient and the distance of the
nearest BS of each tier; $\left(c\right)$ is obtained by applying
the Campbell's theorem to the set of $k^{\mathrm{th}}$ tier BSs beyond
$R_{k1}$, conditioned on $R_{k1}$; $\left(d\right)$ is obtained
by further simplifying $\left(c\right)$; and finally (\ref{eq:nearestBSLTexp})
is obtained by evaluating the expectation w.r.t. $R_{k1}$ in $\left(d\right)$
where the p.d.f. of $R_{k1}$ is $f_{R_{k1}}\left(r\right)=\lambda_{k}2\pi r\mathrm{e}^{-\lambda_{k}\pi r^{2}},\ r\ge0$,
and further simplifying.

\subsection{\label{sub:proofStEquivalenceLemma1}Proof for Lemma \ref{lem:StochasticEqLemma1}}

Given a BS belonging to the $k^{\mathrm{th}}$ open-access tier is
at a distance $R_{k}$ from the origin, then, due to \cite[Theorem 2]{Madhusudhanan2012},
$\left.\tilde{R}\right|k=\left(P_{k}\Psi_{k}\right)^{-1}R_{k}^{\varepsilon_{k}}$
represents the distance of the BS from the origin where the BS arrangement
is according to a non-homogeneous 1-D Poisson point process with BS
density function $\lambda^{\left(k\right)}\left(r\right)$, as long
as $\mathbb{E}\left[\Psi_{k}^{\frac{2}{\varepsilon_{k}}}\right]<\infty$,
for each $k=1,\ 2,\cdots,\ K$. Similarly, for the closed-access tier,
$\hat{R}=\left(P_{c}\Psi_{c}\right)^{-1}R_{c}^{\varepsilon_{c}}$
the distance where the BS arrangement is according to a non-homogeneous
1-D Poisson point process with BS density function $\hat{\lambda}\left(r\right)$,
as long as $\mathbb{E}\left[\Psi_{c}^{\frac{2}{\varepsilon_{c}}}\right]<\infty$.
This is a consequence of the Mapping theorem \cite[Page 18]{Kingman1993}
and the Marking Theorem \cite[Page 55]{Kingman1993} of the Poisson
processes. Further, since the BS arrangements in the different tiers
were originally independent of each other, the set of all the BSs
in the equivalent 1-D non-homogeneous Poisson process is merely the
union of all $\left.\tilde{R}'s\right|k,\ \forall\ k=1,\ 2,\cdots,\ K.$
By the Superposition Theorem \cite[Page 16]{Kingman1993} of Poisson
process, $\tilde{R}$ (notice that it is not conditioned on $k$)
corresponds to the distance from origin of BS arrangement according
to non-homogeneous Poisson point process with density function $\tilde{\lambda}\left(r\right)=\sum_{k=1}^{K}\lambda^{\left(k\right)}\left(r\right),\ r\ge0.$ 

In summary, we have converted the BS arrangement on a 2-D plane of
hetnet to a BS arrangement of the equivalent 2-tier network along
1-D (positive x-axis), and hence, the SINR distributions of both these
networks are also equivalent. Further, by our construction, the MIRP
BS in the hetnet corresponds to the BS that is nearest to the origin
(MS) in the equivalent 2-tier network. As a result, SINR may be written
in terms of the $\tilde{R}$'s and $\hat{R}$'s indexed in the ascending
order, and we get $\left(\ref{eq:SINRDistributionEq1}\right).$

\subsection{\label{sub:proofStEquivalenceLemma2}Proof for Lemma \ref{lem:StochasticEqLemma2}}

The hetnet SINR under MIRP can be computed as follows. For each tier
$m=1,\cdots,\ K,\ c$ (c refers to the closed-access tier), form the
set $\left\{ \left(P_{m}\Psi_{m,l}\right)^{-\frac{1}{\varepsilon}}R_{m,l}\right\} _{l=1}^{\infty}$
and represent as $\left\{ \bar{R}_{m,k}\right\} _{k=1}^{\infty}$
where $\tilde{R}$'s are ascendingly ordered. Now, $\left\{ \bar{R}_{m,k}^{-\varepsilon}\right\} _{k=1}^{\infty}$
represents the received powers of all the $m^{\mathrm{th}}$ tier
BSs in the descending order. Finally, the desired BS's power and tier
index $\left(T\right)$ can be easily found by identifying the maximum
in the set $\left\{ \bar{R}_{m,1}^{-\varepsilon}\right\} _{m=1}^{K}$
and the SINR can be computed. Using \cite[Corollary 3]{Madhusudhanan2010a}
which is an application of the Marking theorem \cite[Page 55]{Kingman1993},
it can be seen that $\left\{ \bar{R}_{m,k}\right\} _{k=1}^{\infty}$
represents the distances from origin of BSs arranged according to
homogeneous Poisson point process with BS density $\lambda_{m}P_{m}\mathbb{E}\left[\Psi_{m}^{\frac{2}{\varepsilon}}\right],$
where $\Psi_{m}$ has the same distribution as the $m^{\mathrm{th}}$
tier shadow fading factors. As a result, the set $\left\{ \bar{R}_{m,l}^{-\varepsilon}\right\} _{m=1,\ l=1}^{m=K,\ l=\infty}$
represents the set of received powers at the origin of the hetnet
composed of $K$ open-access tiers and a closed-acess tier with BS
densities $\left\{ \lambda_{k}P_{k}\mathbb{E}\left[\Psi_{k}^{\frac{2}{\varepsilon}}\right]\right\} _{k=1}^{K},\ \lambda_{c}P_{c}\mathbb{E}\left[\Psi_{c}^{\frac{2}{\varepsilon}}\right]$,
respectively, with unity transmit powers and shadow fading factors
at each BS. This is equivalent to the original heterogeneous network
and has the same SINR distribution, hence proving $\left(\ref{eq:SINRDistributionEq2}\right).$ 

Further, using the Superposition theorem \cite[Page 16]{Kingman1993},
the $K$ open-access tiers of the equivalent hetnet can be combined
to form a single tier network with a BS density equal to $\sum_{l=1}^{K}\lambda_{l}P_{l}^{\frac{2}{\varepsilon}}\mathbb{E}\Psi_{l}^{\frac{2}{\varepsilon}},$
thus proving the SINR equivalence in $\left(\ref{eq:SINRDistEqDensity1System2}\right).$
The distribution of SINR of this two-tier network is the same as that
of an MS in another two-tier network where the open-access tier has
unity BS density, the closed-access tier has a BS density $\frac{\lambda_{c}P_{c}\mathbb{E}\left[\Psi_{c}^{\frac{2}{\varepsilon}}\right]}{\sum_{l=1}^{K}\lambda_{l}P_{l}^{\frac{2}{\varepsilon}}\mathbb{E}\Psi_{l}^{\frac{2}{\varepsilon}}}$,
unity transmit power and shadow fading factors at all BSs and a background
noise $\frac{\eta}{\left(\sum_{l=1}^{K}\lambda_{l}P_{l}^{\frac{2}{\varepsilon}}\mathbb{E}\Psi_{l}^{\frac{2}{\varepsilon}}\right)^{-\frac{\varepsilon}{2}}}$,
due to \cite[Lemma 3]{Madhusudhanan2010a} and hence we get the relation
$\left(\ref{eq:SINRDistEqDensity1System}\right).$

\subsection{\label{sub:proofCovprobMIRPBetaGt1}Proof for Theorem \ref{thm:covprobMIRPBetaGt1}}

From Corollary \ref{cor:arbitraryFadingCorollary} and Lemma \ref{lem:StochasticEqLemma1},
we get the following stochastic equivalence: 
\begin{eqnarray*}
 &  & \mathrm{SINR}_{T,I}=_{\mathrm{st}}\\
 &  & \left.\frac{h_{T,I}\tilde{R}_{T,I}^{-1}}{\underset{\left(k,l\right)\ne\left(T,I\right)}{\sum_{k=1}^{K}\sum_{l=1}^{\infty}}h_{kl}\tilde{R}_{kl}^{-1}+\sum_{l=1}^{\infty}g_{l}\hat{R}_{l}^{-1}+\eta}\right|_{\left(\left\{ \tilde{\lambda}_{k}\left(r\right)\right\} _{k=1}^{\infty},\hat{\lambda}\left(r\right)\right)},
\end{eqnarray*}
where $h_{kl}$'s and $g_{l}$'s are i.i.d. unit mean exponential
random variables, $J=\underset{k=1,2,\cdots}{\mathrm{argmax}}\ h_{T,k}\tilde{R}_{T,k}^{-1}$,
$\left\{ \tilde{R}_{kl}\right\} _{l=1}^{\infty}$ and $\left\{ \hat{R}_{l}\right\} _{l=1}^{\infty}$
are from non-homogeneous 1-D Poisson processes with density functions
$\tilde{\lambda}_{k}\left(r\right)=\lambda_{k}\frac{2\pi}{\varepsilon_{k}}P_{k}^{\frac{2}{\varepsilon_{k}}}r^{\frac{2}{\varepsilon_{k}}-1}$,
$k=1,\cdots,\ K$ and $\hat{\lambda}\left(r\right)=\lambda_{c}\frac{2\pi}{\varepsilon_{c}}P_{c}^{\frac{2}{\varepsilon_{c}}}r^{\frac{2}{\varepsilon_{c}}-1}$,
respectively. The following steps derive the hetnet coverage probability
and closely follows the proof techniques for \cite[Theorem 4]{Madhusudhanan2012}
and \cite[Theorem 1]{Dhillon2012a} 
\begin{eqnarray*}
 &  & \mathbb{P}_{\mathrm{coverage}}^{\mathrm{max-SINR}}=\mathbb{P}_{\mathrm{coverage}}^{\mathrm{MIRP}}\\
 &  & =\sum_{i=1}^{K}\mathbb{P}\left(\left\{ \frac{h_{ij}\tilde{R}_{ij}^{-1}}{\underset{\left(k,l\right)\ne\left(i,j\right)}{\sum_{k=1}^{K}\sum_{l=1}^{\infty}}h_{kl}\tilde{R}_{kl}^{-1}+\sum_{l=1}^{\infty}g_{l}\hat{R}_{l}^{-1}+\eta}>\beta_{i}\right\} \right)\\
 &  & \overset{\left(a\right)}{=}\sum_{i=1}^{K}\mathbb{E}_{\tilde{R}_{ij}}\left[\mathrm{e}^{-\beta_{i}\tilde{R}_{ij}\eta}\mathbb{E}\left[\left.\mathrm{e}^{-\beta_{i}\tilde{R}_{ij}\underset{\left(k,l\right)\ne\left(i,j\right)}{\sum_{k=1}^{K}\sum_{l=1}^{\infty}}h_{kl}\tilde{R}_{kl}^{-1}}\right|\tilde{R}_{ij}\right]\times\right.\\
 &  & \left.\mathbb{E}\left[\left.\mathrm{e}^{-\beta_{i}\tilde{R}_{ij}\sum_{l=1}^{\infty}g_{l}\hat{R}_{l}^{-1}}\right|\tilde{R}_{ij}\right]\right]\\
 &  & \overset{\left(b\right)}{=}\sum_{i=1}^{K}\int_{r=0}^{\infty}\tilde{\lambda}_{i}\left(r\right)\mathrm{e}^{-\eta\beta_{i}r-\frac{\lambda_{c}\pi\left(P_{c}\beta_{i}r\right)^{\frac{2}{\varepsilon_{c}}}\mathbb{E}\left[\Psi_{c}^{\frac{2}{\varepsilon_{c}}}\right]}{\Gamma\left(1+\frac{2}{\varepsilon_{c}}\right)\mathrm{sinc}\left(\frac{2\pi}{\varepsilon_{c}}\right)}}\times\\
 &  & \mathrm{e}^{-\sum_{l=1}^{K}\frac{\lambda_{l}\pi\left(P_{l}\beta_{i}r\right)^{\frac{2}{\varepsilon_{l}}}\mathbb{E}\left[\Psi_{l}^{\frac{2}{\varepsilon_{l}}}\right]}{\Gamma\left(1+\frac{2}{\varepsilon_{l}}\right)\mathrm{sinc}\left(\frac{2\pi}{\varepsilon_{l}}\right)}}dr,
\end{eqnarray*}
where $\tilde{R}_{ij}$ is the distance from the origin of an arbitrary
point in the non-homogeneous Poisson process with density function
$\tilde{\lambda}_{i}\left(r\right)$, $\left(a\right)$ is obtained
by computing the probability of w.r.t. $h_{ij}$ conditioned on all
the other involved random variables and noting that the two Poisson
processes are independent of each other, $\left(b\right)$ is obtained
by evaluating the inner expectations by applying Campbell's theorem
\cite{Kingman1993} (same steps as in the proof of \cite[Theorem 4]{Madhusudhanan2012})
and expressing the expectation w.r.t. $\tilde{R}_{ij}$ by the integral
where $\tilde{\lambda}\left(r\right)dr$ is the probability that there
exists a point in the interval $\left(r,r+dr\right)$, and finally
(\ref{eq:covProbExpressionBetaGt1}) is obtained by simplifying the
integral in $\left(b\right)$.

\subsection{\label{sub:proofCovprobMARPExpFading}Proof for Theorem \ref{thm:covprobMARPExpFading}}

The the hetnet coverage probability is 
\begin{eqnarray*}
 &  & \mathbb{P}_{\mathrm{coverage}}^{\mathrm{MBRP}}\\
 &  & \overset{\left(a\right)}{=}\sum_{k=1}^{K}\mathbb{E}_{T,\tilde{R}_{T1}}\left[\mathbb{P}\left(\left\{ \tilde{P}_{k1}\Psi_{k1}\tilde{R}_{k1}^{-1}\left/\left(\sum_{n=1}^{\infty}\frac{\Psi_{cn}}{\hat{R}_{cn}}+\eta+\right.\right.\right.\right.\right.\\
 &  & \left.\left.\left.\left.\left.\underset{\left(m,l\right)\ne\left(k,1\right)}{\sum_{m=1}^{K}\sum_{l=1}^{\infty}}\frac{\tilde{P}_{ml}\Psi_{ml}\mathcal{I}\left(\tilde{R}_{ml}>\tilde{R}_{k1}\right)}{\tilde{R}_{ml}}\right)>\beta_{T}\right\} \right|k,\tilde{R}_{k1}\right)\right]\\
 &  & \overset{\left(b\right)}{=}\sum_{k=1}^{K}\mathbb{E}_{T,\tilde{R}_{T1}}\left[\mathrm{e}^{-\frac{\eta\beta_{k}\tilde{R}_{k1}}{\tilde{P}_{k1}}}\times\right.\\
 &  & \underset{=E_{1}}{\underbrace{\mathbb{E}\left[\left.\prod_{n=1}^{\infty}\mathrm{e}^{\frac{-\beta_{k}\tilde{R}_{k1}\Psi_{cn}}{\tilde{P}_{k1}\hat{R}_{cn}}}\right|T=k,\ \tilde{R}_{T1}=\tilde{R}_{k1}\right]}}\times\\
 &  & \left.\prod_{m=1}^{K}\underset{=E_{2}}{\underbrace{\mathbb{E}\left[\left.\prod_{l=1}^{\infty}\mathrm{e}^{\frac{-\beta_{k}\tilde{R}_{k1}\Psi_{ml}}{\tilde{P}_{k1}\tilde{R}_{ml}}\mathcal{I}\left(\tilde{R}_{ml}>\tilde{R}_{k1}\right)}\right|T=k,\ \tilde{R}_{T1}=\tilde{R}_{k1}\right]}}\right]
\end{eqnarray*}
where $\left(a\right)$ is from the stochastic equivalence in Corollary
\ref{thm:stEquivalenceMARP}, $\left(b\right)$ is obtained due to
the independence of each tier in the hetnet given $\left(T,R_{T1}\right)$.
Now, we derive expressions for $E_{1}$ and $E_{2}$ in (b). 
\begin{eqnarray*}
 &  & E_{1}=\exp\left(-\int_{r=0}^{\infty}\left(1-\mathbb{E}_{\Psi_{c}}\left[\mathrm{e}^{\frac{-\beta_{k}\tilde{R}_{k1}\Psi_{c}}{\tilde{P}_{k1}r}}\right]\right)\hat{\lambda}\left(r\right)dr\right)\\
 &  & =\mathrm{e}^{-\frac{\lambda_{c}\pi\left(P_{c}\beta_{k}\tilde{R}_{k1}\right)^{\frac{2}{\varepsilon_{c}}}}{\tilde{P}_{k1}^{\frac{2}{\varepsilon_{c}}}\mathrm{sinc}\left(\frac{2\pi}{\varepsilon_{c}}\right)}},\\
 &  & E_{2}=\exp\left(-\int_{r=\tilde{R}_{k1}}^{\infty}\left(1-\mathbb{E}_{\Psi_{m}}\left[\mathrm{e}^{\frac{-\beta_{k}\tilde{R}_{k1}\Psi_{m}}{\tilde{P}_{k1}r}}\right]\right)\tilde{\lambda}_{m}\left(r\right)dr\right)\\
 &  & =\mathrm{e}^{-\lambda_{om}\pi\frac{2\beta_{k}\left(P_{om}B_{om}\tilde{R}_{k1}\right)^{\frac{2}{\varepsilon_{m}}}}{\tilde{P}_{k1}\left(\varepsilon_{m}-2\right)}\times_{2}F_{1}\left(1,1-\frac{2}{\varepsilon_{m}};2-\frac{2}{\varepsilon_{m}};-\frac{\beta_{k}}{\tilde{P}_{k1}}\right)}
\end{eqnarray*}
Finally, (\ref{eq:covprobMARPexp}) is obtained by computing each
expectation in $\left(b\right)$ by applying the Campbell-Mecke theorem. 

For $\left\{ \varepsilon_{k}\right\} _{k=1}^{K}=\varepsilon$ and
$\eta=0$, the integral in (\ref{eq:covprobMARPexp}) simplifies to
(\ref{eq:covprobMARPSpCase1}).

\subsection{\label{sub:Simulation-Method}Simulation Method}

The $k^{\mathrm{th}}$ tier of the hetnet with $K$ tiers is identified
by the following set of system parameters: $\left(\lambda_{k},\ P_{k},\ \Psi_{k},\ \varepsilon_{k},\ \beta_{k}\right),$
where the symbols have all been defined in Section \ref{sec:System-Model},
and $k=1,\ 2,\cdots,\ K,$ where $K$ is the total number of tiers.
Now we illustrate the steps for simulating the hetnet in order to
obtain the SINR distribution and the coverage probability assuming
the MS is at the origin. The algorithm for the Monte-Carlo simulation
is as follows:

1) Generate $N_{k}$ random variables according to a uniform distribution
in the circular region of radius $R_{B}$ for the locations of all
the $k^{\mathrm{th}}$ tier BSs, where $N_{k}\sim\mathrm{Poisson}\left(\lambda_{k}\pi R_{B}^{2}\right).$

3) Compute the SINR at the desired BS according to Section \ref{sub:Cell-Association-policy}
and record the tier index $I$ of the desired BS.

Repeat the same procedure T (typically, > 50000) times. Finally, the
tail probability of SINR at $\eta$ is given by $\frac{\left\{ \mbox{\# of trials where SINR >}\ \eta\right\} }{T},$
and the coverage probability is given by $\sum_{k=1}^{K}\frac{\left\{ \mbox{\# of trials where}\ I=k\ \mathrm{and}\ SINR>\beta_{k}\right\} }{T}$.

\bibliographystyle{IEEEtran}
\bibliography{prasanna}

\end{document}